\setlist{nosep,topsep=0pt,leftmargin=*}
\setlist{nosep,topsep=0pt,leftmargin=*}
    \definecolor{myred}{HTML}{ea4335}
    \definecolor{mygreen}{HTML}{41a756}
    \definecolor{myblue}{HTML}{4285f4}
\renewcommand{\paragraph}[1]{\smallskip\noindent\textbf{#1.}}
\DeclareMathOperator*{\argmax}{\arg\max}
\DeclareMathOperator*{\argmin}{\arg\min}
\newcommand{\Line}[4]{%
    #1&%
    \;\ifthenelse{\isempty{#2}}{\phantom{=}}{#2}\;%
    #3%
    \ifthenelse{\isempty{#4}}{}{&&\qquad\left(#4\right)}%
}
\newcommand{\OMcC}{\texttt{McORA}\xspace}
\newcommand{\GFQ}{\texttt{GFQ}\xspace}
\newcommand{\CR}{\texttt{CR}\xspace}
\newcommand{\OPT}{\texttt{OPT}\xspace}
\newcommand{\ALG}{\texttt{ALG}\xspace}
\newcommand{\PRB}{\textsc{Threshold}\xspace}
\newcommand{\U}{\texttt{U}\xspace}
\newcommand{\Q}{\texttt{Q}\xspace}
\newcommand{\PF}{\texttt{PF}\xspace}
\newcommand{\NSW}{\texttt{NSW}\xspace}
\newcommand{\HMF}{\texttt{HMF}\xspace}
\newcommand{\MMF}{\texttt{MM}\xspace}
\newcommand{\GBF}{\texttt{GBF}\xspace}
\newcommand{\TTL}{\texttt{TTL}\xspace}
\newcommand{\LRU}{\texttt{LRU}\xspace}
\newcommand{\FIFO}{\texttt{FIFO}\xspace}
\newcommand{\SAMT}{\texttt{SAM}-\textsc{Threshold}\xspace}
\newtheorem{theorem}{Theorem}
\newtheorem{corollary}{Corollary}
\newtheorem{remark}{Remark}
\newtheorem{assumption}{Assumption}
\newtheorem{lettereddef}{Definition}
\newtheorem{letteredcor}{Corollary}
\newtheorem{letteredprop}{Proposition}
\newtheorem{letteredlemma}{Lemma}
\theoremstyle{definition}
\newtheorem{definition}{Definition}
\newcommand{\citet}[1]{\textcite{#1}}
\title{Online Allocation with Multi-Class Arrivals:\\ Group Fairness vs Individual Welfare}
\author{
    Faraz Zargari\thanks{University of Alberta. Email: \texttt{fzargari@ualberta.ca}}\\
    \and
    Hossein Nekouyan Jazi\thanks{University of Alberta. Email: \texttt{nekouyan@ualberta.ca}}\\
    \and
    Bo Sun\thanks{University of Waterloo. Email:
    \texttt{bo.sun@uwaterloo.ca}}\\
    \and 
    Xiaoqi Tan\thanks{University of Alberta. 
    Email: \texttt{xiaoqi.tan@ualberta.ca}}
}
\date{\vspace{-25pt}}
\begin{document}

\maketitle

\begin{abstract}
    We introduce and study a multi-class online resource allocation problem with group fairness guarantees. The problem involves allocating a fixed amount of resources to a sequence of agents, each belonging to a specific group. The primary objective is to ensure fairness across different groups in an online setting. We focus on three fairness notions: one based on quantity and two based on utility. To achieve fair allocations, we develop two threshold-based online algorithms, proving their optimality under two fairness notions and near-optimality for the more challenging one. Additionally, we demonstrate a fundamental trade-off between group fairness and individual welfare using a novel representative function-based approach. To address this trade-off, we propose a set-aside multi-threshold algorithm that reserves a portion of the resource to ensure fairness across groups while utilizing the remaining resource to optimize efficiency under utility-based fairness notions. This algorithm is proven to achieve the Pareto-optimal trade-off. We also demonstrate that our problem can model a wide range of real-world applications, including network caching and cloud computing, and empirically evaluate our proposed algorithms in the network caching problem using real datasets.
\end{abstract}

\section{Introduction}\label{section-introduction}
This paper introduces a novel framework for ensuring fairness in online resource allocation with multi-class arrivals, referred to as \textit{Multi-class Online Resource Allocation} (\OMcC). We examine a scenario in which a player allocates a fixed amount of resources to sequentially arriving requests or agents from various classes or groups, with the objective of maximizing social welfare  while maintaining fairness across different classes. A key consideration in this framework is the trade-off between group-level fairness and individual welfare: ensuring fairness at the group level can restrict allocations to agents with higher valuations, potentially reducing individual welfare. The \OMcC framework models a wide range of resource allocation problems with numerous applications, including the network and web caching \cite{10.1016/j.comnet.2016.04.006, 10.1145/3453953.3453973}, cloud computing \cite{joe2013pricing, bonald2015multi, wang2014dominant}, radio access networks~\cite{aslan2024fair, mehmeti2022max}, network bandwidth allocation \cite{panayiotou2023balancing, guo2015fair,cao2022online}, and sustainable energy systems \cite{li2024balancing, jenkins2016energy}.

Research on resource allocation and fairness has been a central focus in computer science, operations research, and economics, resulting in extensive studies on fair allocation involving divisible or indivisible resources (e.g., \cite{steinhaus1948problem, dubins1961cut, 10.1145/2483852.2483870, 10.1016/j.artint.2023.103965}) and single or multiple resource types (e.g., \cite{kelly1998rate, joe2013multiresource}). A common assumption in these studies is that fairness is defined at the individual level. In addition, all these works focus on allocation problems in an offline setting, where all agents are known in advance. In practice, however, resource allocation with sequential arrivals of agents is common, where the decision maker lacks prior knowledge of the number of agents or their valuations.

To address these challenges, fair resource allocation in online settings has recently attracted significant attention. The study of online fair allocation can be broadly categorized into two streams: offline agents with sequentially arriving resources (e.g., \cite{hosseini2023class, banerjee2022proportionally, huang2023online, banerjee2022online,esmaeili2023rawlsian, ma2020group,yang2024online}) and offline resources with sequentially arriving agents (e.g., \cite{sinclair2022sequential, 10.1145/3578338.3593558,lechowicz2023time}). In the first scenario, a fixed number of agents are known in advance, while resources become available sequentially. Upon the arrival of each resource, agents reveal their valuations for the resource, and the decision is to allocate the resource to agents with the objective of ensuring certain notions of fairness (e.g., Nash social welfare~\cite{banerjee2022online,huang2023online}) based on their received utilities. This model has been further extended to consider class fairness when offline agents belong to different classes~\cite{hosseini2023class}. A celebrated example of this scenario is the well-known Adwords problem~\cite{mehta2007adwords, buchbinder2007online}. For more literature on online fair allocation in this scenario, refer to \cite{aleksandrov2020online} for a comprehensive survey.

Conversely, in the second scenario, a fixed amount of resources is available upfront, and agents (or requests) arrive sequentially. Each agent’s valuation of the resource is revealed upon arrival, and the resource must be allocated without knowledge of future agents, including the total number. One example is the allocation of cloud computing resources to online users~\cite{zhang2017optimal}. In this case, ensuring fairness across agents becomes even more challenging, as irrevocable allocations to current agents may result in unfairness to future arrivals with the same valuations, given the fixed resource upfront. Therefore, existing results often rely on prior statistical knowledge about agent arrivals (e.g., distributional information on the number of agents~\cite{sinclair2022sequential}) or aim for weaker notions of fairness (e.g., time fairness~\cite{lechowicz2023time}). In this paper, our \OMcC problem studies the second scenario, without making statistical assumptions. Instead, we consider that online agents belong to a fixed number of groups, with the number of agents in each group being unknown, and aim to ensure fairness across different groups.
We aim to explore the following key questions:

\begin{center} \textit{How can fairness be maintained across different groups in an online setting? \\ What is the trade-off between group fairness and individual welfare in this context?} \end{center}

We address these questions by investigating the \OMcC problem with group fairness guarantees under three distinct fairness notions. The first is a quantity-based fairness metric, referred to as Group Fairness by Quantity (\GFQ), which ensures that a certain amount of the resource is reserved for each group. The other two are utility-based fairness metrics: $\beta$-Proportional Fairness ($\beta$-PF) and a more flexible notion called $(\gamma, \beta)$-Fairness. The latter allows for adjustable fairness objectives through a tunable parameter $\gamma \geq 0$, including Nash Social Welfare (\NSW) and Max-Min Fairness (\MMF) as special instances.

\subsection{Our Contributions and Techniques} 
Our primary contributions can be summarized as follows.

\textbf{Online algorithms with tight group fairness guarantees.} 
We establish three group fairness notions in the context of online resource allocation with multi-class arrivals: \GFQ, $\beta$-\PF, and $(\gamma,\beta)$-fairness. We propose novel threshold-based algorithms, namely the \Q-\PRB and \U-\PRB online algorithms, which employ multi-segmental and class-dependent threshold functions to ensure fairness in an online setting. The key novelty behind these algorithms is the conceptualization of group fairness guarantee as a group-level competitive ratio, which allows us to leverage the online threshold-based algorithms framework. We demonstrate that these algorithms achieve optimal group fairness guarantees under  \GFQ and $\beta$-\PF, and provide tight guarantees for ($\gamma,\beta$)-fairness. To prove the tightness of our guarantee for ($\gamma,\beta$)-fairness, we derive a lower bound for the group fairness guarantee and show that our algorithm achieves this lower bound when $ \gamma = 1 $ and is order-optimal for all $ \gamma \neq 1 $.

\textbf{Pareto-optimal trade-off between competitiveness and group fairness.} 
We develop a novel family of algorithms, termed Set-Aside Multi-Threshold-based algorithms (\SAMT), which leverages both local and global threshold functions to explore the trade-off between fairness and competitiveness (i.e., the efficiency of online algorithms) in online resource allocation. The key novelty of our design lies in the use of two distinct types of threshold functions: one local function for each group to ensure fairness and one global function for all groups to optimize efficiency. By employing tunable parameters, the algorithm achieves a Pareto-optimal trade-off for $\beta$-\PF, allowing for flexible adjustments based on different fairness and efficiency priorities. Furthermore, building on our order-optimal algorithm design for ($\gamma,\beta$)-fairness, we demonstrate that our proposed \SAMT algorithm smoothly balances competitiveness and ($\gamma,\beta$)-fairness for all $ \gamma \geq 1 $, making it highly adaptable to various application scenarios.

Our algorithms build on the recent success of threshold-based algorithms \cite{sun2020competitive, lechowicz2023online} and posted price mechanisms \cite{tan2020mechanism} in online resource allocation. However, our approach differs significantly in the following aspects: \Q-\PRB employs a multi-sectional design where the number of sections dynamically adjusts based on the fairness guarantee required by the problem. Existing algorithms that rely on smooth threshold functions fail to achieve optimality in this context. In addition, \U-\PRB introduces a reservation-based approach that establishes specific threshold functions for each class of arrivals, prioritizing fairness over performance. This feature is unique to our algorithm, as existing algorithms fail to achieve the required group fairness guarantees in the multi-class setting. To derive the lower bound, we construct tailored hard instances for each algorithm and fairness notion. These constructions allow us to develop a novel \textit{representative function}-based approach (we call them utilization functions) to derive the lower bounds, demonstrating that our algorithms are optimal for \GFQ and $\beta$-\PF, and asymptotically optimal for the more general but challenging fairness notion of ($\gamma,\beta$)-fairness. This result provides the tightest known lower bound for these fairness guarantees. We expect that these approaches will be useful in related problems such as online network utility maximization \cite{cao2022online}, online matching \cite{hosseini2023class}, and online Nash welfare maximization \cite{huang2023online, banerjee2022online}.

Furthermore, we empirically evaluate our models on the real-world Wikipedia Clickstream dataset \cite{wikimedia_analytics} using the utility-based time-to-live (\TTL) network caching protocol \cite{dehghan2019utility} to demonstrate the trade-off between group fairness and individual welfare. The results highlight the superiority of utility-based fairness notions over quantity-based fairness, while empirically showing that achieving greater fairness comes at the cost of reduced efficiency.

\section{Related Work}\label{section-related-work}
The \OMcC with group fairness guarantees is closely related to two lines of existing work: online allocation without fairness and online fair allocation. In classic online allocation, the primary focus is on allocating goods to arriving buyers to maximize profits or minimize costs, often without considering fairness across online arrivals. Conversely, research in online fair allocation primarily focuses on achieving fairness among agents, focusing on metrics such as proportional fairness, envy-freeness, and max-min fairness (e.g., \cite{bogomolnaia2022fair, joe2013multiresource, mo2000fair}). In particular, online fair allocation can be further divided into two subclasses: fairness among offline agents and fairness among online agents. In the following, we provide a brief overview of the most relevant problems related to our work.

\textbf{Online allocation without fairness guarantees.} In the online resource allocation literature, \OMcC is closely related to problems such as the online knapsack problem~\cite{chakrabarty2008online}, the online conversion problem~\cite{el2001optimal}, the secretary problem~\cite{gardner1970mathematical}, and prophet inequality~\cite{samuel1984comparison}. In these problems, a player seeks to sell an asset to a sequence of arriving buyers with varying valuations and must make irrevocable decisions for each buyer without knowledge of future buyers. These problems are typically studied within the framework of competitive analysis~\cite{borodin2005online}, where the objective is to minimize the competitive ratio—i.e., the worst-case ratio between the returns of an offline optimal algorithm and those of the online algorithm. A key result in recent research shows that a subset of these problems can be solved (near) optimally using a unified threshold-based algorithm, with a threshold function tailored to specific problems~\cite{Sun_OKP_departure_2022,dutting2021secretaries,lechowicz2023online}. However, these works neglect fairness concerns across online buyers, which are crucial, especially when allocating essential resources like computing, energy or food. 

\textbf{Fairness among offline agents.} Fair resource allocation in online settings has gained significant attention recently, with most research focusing on allocating resources that arrive online to offline agents while maintaining fairness among them. For example, \cite{hosseini2023class} explores class fairness in online bipartite matching through \textit{envy-freeness up to one item}, while \cite{banerjee2022proportionally} addresses proportional fairness in fractional online matching. Other works, such as \cite{huang2023online, banerjee2022online}, maximize Nash social welfare in these settings. Additionally, \cite{esmaeili2023rawlsian} and \cite{ma2020group} apply max-min fairness in online bipartite matching, primarily for applications like online advertising. However, in many real-world scenarios, resources are fixed, and it is the agents who arrive online for these resources, as seen in problems such as the online knapsack problem and prophet inequality.

\textbf{Fairness among online agents.} 
Some recent research on fairness among online agents mainly focus on time fairness, which ensures that agents are treated without discrimination, regardless of their arrival time. For example, in prophet inequality, time fairness is achieved through a static threshold that treats all online agents equally~\cite{chawla2024static}. In the context of online knapsack problems, \cite{lechowicz2023time} demonstrates that no nontrivial online algorithm can guarantee time-independent fairness and proposes a modified threshold-based algorithm that ensures conditional time fairness. However, time fairness essentially treats all online agents as a single group, and, to the best of our knowledge, no existing work has explored fairness from the perspective of group fairness among online agents, i.e., each buyer belongs to a group and the fairness is guaranteed across multiple groups. The proposed \OMcC aims to fill this gap by providing a threshold-based algorithm that guarantees group fairness for online agents.

\section{Problem Formulation and Preliminaries}\label{section-problem-formulation}
In this section, we first show the formulation of multi-class online resource allocation problem and then introduce the fairness metrics we consider in this paper.

\subsection{Problem Formulation} 
A decision maker with an initial resource budget of $B$ aims to allocate the resource to a sequence of agents arriving one by one in an online manner. Upon the arrival of agent $ t \in \{1,2,\dots,T\}$, they send a request with valuation $v_t$; an immediate and irrevocable decision {$ x_t \in [0,r_t]$} must be made regarding the amount of resource to be allocated to this agent, where $r_t$ is the allocation rate limit of agent $t$ and revealed online as well. In \OMcC, each agent \( t \) belongs to a class \( j_t \in [K] \), and the utility of class \( j \) with allocation $\mathbf{x} := [x_1,\dots, x_T]$ is defined as \[ U_j(\mathbf{x}) = \sum\nolimits_{t \in [T]} v_t \cdot x_t \cdot \boldsymbol{1}_{\{j_t = j\}}, \] where \( \boldsymbol{1}_{\{j_t = j\}} \) is an indicator function that equals 1 if \( j_t = j \), and 0 otherwise. An efficient allocation is to maximize the total utility of all groups, while respecting the resource constraint $ \sum_{t\in[T]} x_t \leq B $. Given an arrival instance $ I = \{(v_1, j_1, r_1), \cdots, (v_T, j_T, r_T)\} $, let us denote by $ \OPT(I) $ the optimal total utility achieved in the offline setting when the information of the arrival sequence $ I $ is known beforehand. Mathematically, $ \OPT(I) $ can be obtained by solving the following problem. 
\begin{align}\label{p1}
    \OPT(I) =  \underset{ x_t \in [0,r_t] }{\max}\  \sum\nolimits_{j\in[K]} U_j(\mathbf{x})  \qquad {\rm s.t.}\  \sum\nolimits_{t\in[T]} x_t \leq B. 
\end{align}
On the other hand, in the online setting, the instance \( I \) is revealed sequentially and future arrival information, including the total number of agents  $T$, is unknown. Upon the arrival of agent $ t $, the decision-maker needs to make an immediate and irrevocable decisions \( x_t \). Therefore, to maximize efficiency, one aims to maximize the total utility in an online manner. However, this objective does not account for fairness across different groups, i.e., all resources may be allocated to agents from the same group with high valuations. In \OMcC, our goal is to enforce a fair allocation that can balance the utilities across different groups, and further investigate the trade-off between efficiency and fairness. By explicitly incorporating fairness into allocation decisions, \OMcC ensures an equitable distribution of resources across different classes, thereby mitigating disparities often overlooked by conventional optimization models.

Note that without incorporating fairness considerations, \OMcC reduces to the problem formulation of the online (fractional) knapsack problem in \cite{zhou2008budget}, which focuses solely on maximizing overall utility. However, in many real-world applications, fairness across different classes is critical and cannot be overlooked. Below, we highlight some of these applications.

\subsection{Illustrative Examples}\label{sec-problem-formulation-sub-an-illustrative-example}

\paragraph{Network caching management}
Consider a cache of fixed size $B$ and a sequence of $T$ files. The cache is managed using the time-to-live (TTL) policy~\cite{dehghan2019utility}, which is a general policy and can be used to model a broad class of caching policies including LRU and FIFO. In the TTL policy, each file $t$ is assigned a timer $\tau_t$ upon its arrival, i.e., the first time the file $t$ is requested. The file $t$ is stored in the cache upon a cache miss and remains in the cache for $\tau_t$ time units before being evicted. Previous work~\cite{panigrahy2017hit} shows that the hitting probability $x_t \in [0,1]$ of file $t$ can be derived as a function of $\tau_t$ under Poisson arrival requests. Thus, in the TTL policy, the hitting probability $x_t$ can be equivalently considered as the decision variable upon the arrival of file $t$. The utility of a file $t$ can be modeled by a linear utility function $v_t x_t$, where $v_t$ can represent the valuation or popularity of the file $t$. In addition, the expected number of files cannot exceed the buffer size $B$, which requires $\sum_{t\in[T]} x_t \le B$. Thus, the problem formulation~\eqref{p1} can exactly models the network caching problem to maximize the utilities of all files. 

In network caching problem, the files have different valuations and may belong to different classes. For example, a file can have versions in different languages, and the files in the same language can be considered belonging to the same class. The English version of a popular file may have a large valuation due to frequent access, while the Japanese version of the same file may have a much lower valuation. To ensure fairness in accessing the files across users with different language preferences, caching decisions must consider not only file valuations but also their class.

\paragraph{Cloud computing resource allocation}
Consider a data center with \( B \) units of computing resource, and a sequence of \( T \) job requests. 
Each job requests at most $r_t$ resources and has a valuation $v_t$ for being allocated each unit resource. Upon the arrival of request $t$, the provider determines the allocated resource $x_t \in [0, r_t]$ and the request obtains utility $v_t x_t$. The allocation is subject to a resource capacity constraint, \( \sum_{t \in [T]} x_t \leq B \). Thus, the problem formulation \eqref{p1} can be used to model the job utility maximization problem in cloud computing resource allocation.

In cloud computing, the system may include \( K \) distinct users, each submitting requests over time. These users can have varying capacities and preferences, leading to differing valuations for their requests. For instance, large corporations with access to substantial resources may submit requests with significantly higher valuations compared to individual users, potentially monopolizing the computational resources. To address this, it is essential to ensure fairness in resource allocation across different users by considering the characteristics of the users submitting requests when making allocation decisions.

\subsection{Assumptions and Performance Metrics}\label{sec-problem-formulation-sub-performance-metrics}
As mentioned before, each arriving agent $t$ belongs to a class $j_{t} \in [K]$. We assume that there exists a class-dependent finite support for possible valuation within each class, formally stated in Assumption \ref{assumption_bounded_values} below.

\begin{assumption}\label{assumption_bounded_values}
    We assume that the valuations of agents in each class $ j \in [K] $ are normalized and bounded within $ [1, \theta_j] $, i.e., $ v_t \in [1, \theta_j] $ holds for all $ j \in [K] $ and $ t \in [T]$ if $j_t = j$.
\end{assumption}

We refer to $ \theta_j $ as the \textit{fluctuation ratio} of class $ j $. Intuitively, $ \theta_j \geq 1 $ holds for all $ j\in [K] $ and a smaller fluctuation ratio indicates that arrivals within that class tend to be more homogeneous. We also assume w.l.o.g. that $\theta_1\leq \theta_2 \leq \cdots \leq \theta_K $. In many real-world applications, agents from different groups often exhibit the same minimum valuations for a given resource, reflecting its intrinsic value or production cost. For instance, in web caching, where the valuation is based on the number of impressions, the minimum number of impressions is typically uniform at 1 across categories. Similarly, in energy allocation, the lowest valuation is often tied to the baseline generation cost of energy, which remains consistent irrespective of the consumer group. Notably, the assumption of identical lower valuations is necessary only for \GFQ fairness (i.e., Section \ref{section-GFQ}). For utility-based fairness notions, this assumption can be relaxed, and our results remain valid even when different lower valuations are considered for each class.

We consider the following performance metrics to evaluate the efficiency and fairness of online algorithms for \OMcC.

\subsubsection{\textbf{Efficiency Metrics}} \label{sec-problem-formulation-sub-performance-metrics-subsub-efficiency-metrics}
We follow the \textit{competitive analysis} framework to evaluate the performance of an online algorithm using its \textit{competitive ratio}. Let $ \ALG(I) $ denote the {total utility} achieved by an online algorithm $ \ALG $. Our objective is to develop online algorithms that minimize the worst-case competitive ratio, ensuring performance close to the offline optimal solution. Formally, this is defined as $\CR^* \coloneqq \min_{\ALG} \max_{I \in \Omega} \frac{\OPT(I)}{\ALG(I)}$, where $ \Omega $ represents the set of arrival sequences that satisfy Assumption \ref{assumption_bounded_values}, and $ \OPT(I) $ is the total utility of the optimal offline solution for sequence $ I $. In this paper, efficiency is measured using the competitive ratio at the individual welfare level, where welfare refers to the utility achieved for each individual agent. Recall that we do not assume any knowledge of $T$ in the online setting. As a result, the competitive analysis framework ensures that the online solution remains competitive with the offline solution for any $T$.

\subsubsection{\textbf{Fairness Metrics}} \label{sec-problem-formulation-sub-performance-metrics-subsub-fairness-metrics}
We consider two notions of fairness, namely, quantity-based and utility-based metrics. For the first one, a fixed required amount of resource is guaranteed to be allocated to the arrivals of each class, referred to as \textit{group fairness by quantity} (\GFQ) defined as follows. 
\begin{definition}[\textsc{Group Fairness by Quantity}] \label{def_fairness_numbers}
    The total amount of resource allocated to agents from group $ j\in [K] $ is at least $m_j$ for all $ j \in [K] $, namely, an allocation $ \mathbf{x} $ satisfies \GFQ if $ \sum\nolimits_{t\in [T]} x_t \cdot  \boldsymbol{1}_{ \{j_t = j\}} \geq m_j $ holds for all $ j \in [K] $.
\end{definition}

In \GFQ, the fairness requirement $\boldsymbol{m} := \{m_j\}_{j\in[K]} $ is specified beforehand, and the objective is to design online algorithms that satisfy the given $ \boldsymbol{m} $. Since the decision-maker seeks to maximize total utility, resources are often allocated to agents with high valuations, even when they belong to the same class. \GFQ mitigates this by preventing unfair treatment across different classes.

Although intuitive, \GFQ presents two issues. First, it requires the fairness requirement $ \boldsymbol{m} $ to be predetermined, which may not always be available. Additionally, different agents may perceive the same quantity differently based on their valuations. To address these issues, we introduce two utility-based metrics. 

\begin{definition}[$\beta$-Proportional Fairness] \label{def_proportional_fairness}
    For $\beta \geq 1$, allocation $ \mathbf{x} $ is called $\beta$-proportionally fair if for any other allocation $ \mathbf{w} $, we have $\frac{1}{K} \sum_{j=1}^K \frac{U_j(\mathbf{w})}{U_j(\mathbf{x})} \leq \beta$, where \( U_j(\mathbf{w}) \) and \( U_j(\mathbf{x}) \) are the utilities of each class $j$ from the allocations $\mathbf{w}$ and $\mathbf{x}$, respectively. We say that an online algorithm is $\beta$-proportionally fair if it always produces an $\beta$-proportionally fair allocation.
\end{definition}

\begin{definition}[\textsc{$(\gamma,\beta$)-fairness}]
    \label{def-gamma-beta-fairness}
    For $ \gamma \geq 0$ and $\beta \geq 1$, allocation $\mathbf{x}$ is called $(\gamma,\beta$)-fair if for any other allocation $\mathbf{w}$, we have $\sum_{j\in[K]} f(U_j(\mathbf{w})) \leq \sum_{j\in[K]} f(\beta U_j(\mathbf{x}))$ where $ U_j(\mathbf{x})$ denotes the utility of group $j$ under allocation $\mathbf{x}$ and $f(\cdot)$ is defined as follows:
        \begin{align}
        \label{eq:f-defintion}
            f(x) = \begin{cases}
                \frac{x^{1-\gamma}}{1-\gamma}& \gamma \in [0,1)\cup (1,\infty), \\
                \log(x)& \gamma = 1.
            \end{cases}
        \end{align}
    We say that an online algorithm is $(\gamma,\beta)$-fair if it always produces a $(\gamma,\beta)$-fair allocation.
\end{definition}

The $\beta$-proportional fairness given by Definition \ref{def_proportional_fairness} is a generalization of the conventional proportional fairness (\PF) concept with an approximation ratio $ \beta \geq 1 $. When an allocation follows the 1-\PF (i.e., $ \beta = 1 $), it is commonly referred to as a \textit{proportional fair} allocation. As a utility-based fairness metric, 1-\PF has been successfully used in network resource allocation, particularly in wireless and communication networks, to balance efficiency and fairness (e.g., \cite{kelly1998rate, khan2016fairness, kelly1997charging}).  Due to future uncertainties in online settings, achieving exact 1-\PF is in general impossible. Thus, we focus on a $\beta$-approximation, hence the term $\beta$-\PF.

\begin{figure}
    \centering
    \begin{minipage}{0.46\textwidth}
        \includegraphics[trim=0.5cm 0.5cm 0.5cm 0.5cm,clip,width=0.9\linewidth]{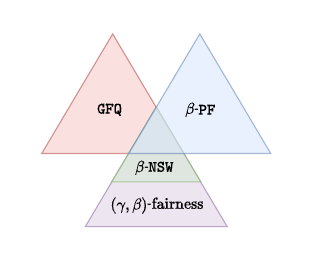}
        \caption{Relationship between \GFQ, $\beta$-\PF, and ($\gamma,\beta$)-fairness. These fairness metrics converge when $\beta = 1$ for $\beta$-\PF and ($\gamma,\beta$)-fairness, and when $m_j = \frac{B}{K}$ for \GFQ. If $ \gamma = 1 $, $ (\gamma, \beta) $-fairness reduces to  $ \beta $-\NSW.
        }
        \label{fig-fairness-metrics}
    \end{minipage}
    \hspace{0.5cm}
    \begin{minipage}{0.46\textwidth}
        \centering
        \includegraphics[trim=1cm 1.1cm 9cm 12cm,clip,width=0.9\linewidth]{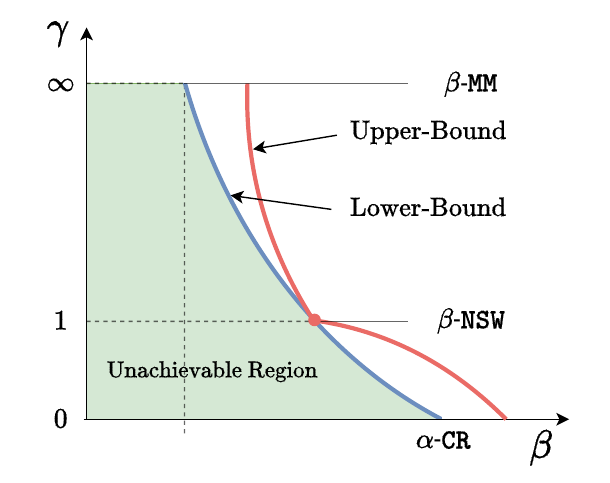}
        \caption{Graphical representation of the fairness guarantees  achieved by our algorithm for ($\gamma,\beta$)-fairness in Theorem \ref{theorem-gamma-beta-fairness} (i.e., upper bound) and the lower-bound results  in Theorem \ref{theorem-gamma-beta-lowerbound}. }
        \label{figure-graphical-representation}
    \end{minipage}
    \vspace{-5pt}
\end{figure}

The ($\gamma,\beta$)-fairness given by Definition \ref{def-gamma-beta-fairness} captures a wide spectrum of fairness measures. It is a tunable metric known as \textit{$\gamma$-fairness} \cite{bertsimas2012efficiency} (In the literature, it is commonly refereed as $\alpha$-fairness), commonly applied in network bandwidth allocation, scheduling, and optimization problems \cite{bertsimas2012efficiency}. \(\gamma\)-fairness can be considered a special case of broader fairness notions. For example, it corresponds to a specific instance of the generalized fairness framework introduced in \cite{lan2011axiomatic}. Additionally, when \(\gamma = 1\), \(\gamma\)-fairness is closely related to the Cobb-Douglas utility function under the scenario where there is a unit weight vector. This is because both frameworks prioritize proportional allocations, with \(\gamma = 1\) yielding logarithmic utility functions, which align with the Cobb-Douglas utility with equal priority of agents.

Similar to the definition of $ \beta $-\PF, achieving exact $\gamma$-fairness is in general impossible in the online setting. Thus, we focus on a $\beta$-approximation, hence the term \textit{$(\gamma, \beta)$-fairness}.  Note that different $\gamma$ values correspond to various well-known fairness criteria. For instance, when $\gamma = 1$, the ($1,\beta$)-fairness is aligned with a $\beta$-approximation of the \textit{Nash Social Welfare} (\NSW). Thus, we will also refer to ($1,\beta$)-fairness as $ \beta $-\NSW hereinafter. When $\gamma = 2$, it reflects \textit{Harmonic Mean Fairness} (\HMF). As $\gamma \to \infty$, it becomes \textit{Max-Min Fairness} (\MMF), which is referred to as $ \beta $-\MMF hereinafter. 

Before leaving this section, we give the following remarks about our efficiency/fairness metrics.

\begin{remark}[Connection between Fairness Metrics]
    Note that when $m_j = B/K$ for all $j \in [K]$, \GFQ simplifies to $1$-\PF, meaning that each class will receive at least a $1/K$ fraction of the total resource. It has also been shown in \cite{banerjee2022proportionally} that for any $ \beta \geq 1 $, an online algorithm that is $\beta$-\PF must also satisfy the $\beta$-\NSW fairness guarantee (i.e., $(1, \beta)$-fairness), as a result of the AM-GM inequality. Furthermore, when $\beta = 1$, a $1$-\PF allocation is equivalent to $1$-\NSW fairness (i.e., $(1,1)$-fairness). Thus, $1$-\NSW  is identical to $1$-\PF, and also to \GFQ when $m_j = B/K$ for all $j \in [K]$. Figure \ref{fig-fairness-metrics} illustrates these relationships.
\end{remark}

\begin{remark}[Fairness Guarantee $ \beta $]
    Due to the online nature of \OMcC, it is generally impossible to achieve $\beta $-\PF and $(\gamma, \beta)$-fairness with $ \beta = 1 $, unless it is in a trivial setting where the fluctuation ratios of all classes are equal (i.e., $ \theta_1 = \theta_2 = \cdots = \theta_K $). Taking $ (\gamma, \beta)$-fairness as an example. For each given $ \gamma \geq 0 $, there is a fundamental lower bound on $ \beta $ that no online algorithm can outperform. Figure \ref{figure-graphical-representation} gives an overview of our main results about the unachievable region due to the fundamental lower bound in Theorem \ref{theorem-gamma-beta-lowerbound} and the fairness guarantee achieved by our algorithm given by Theorem \ref{theorem-gamma-beta-fairness}. 
\end{remark}

\begin{remark}[Implications of Efficiency and Fairness Metrics]\label{remark-metric-implications}
    Efficiency and fairness metrics pursue different objectives. Online algorithms with small competitive ratios can approximately maximize the total utility of all agents, efficiently utilizing the limited resource. In contrast, a fair online algorithm aims to ensure the utilities received by different groups are balanced. For example, in the network caching example mentioned in Section \ref{sec-problem-formulation-sub-an-illustrative-example}, an efficient algorithm focuses on maximizing the overall throughput of the system. However, with \GFQ, the primary goal is to allocate a predefined portion of the cache buffer to each class of files. For \(\beta\)-\PF, the goal is to guarantee that the allocation to each class is proportional to its demand. Finally, in \((\gamma, \beta)\)-fairness, the objective is to balance fairness across each class of files and the efficiency of the system by adjusting the parameter \(\gamma\), thereby reflecting varying levels of fairness importance relative to efficiency.
\end{remark}

\section{\OMcC with \GFQ Guarantees}\label{section-GFQ}
Based on the definition of \GFQ, a reserved allocation $\boldsymbol{m}$ is provided in advance. Therefore we can reformulate the offline version of \OMcC with \GFQ requirement as follows:
\begin{align}\label{eq-gfq}
    \OPT(I) =  \underset{ x_t \in [0,r_t] }{\max}\  \sum\nolimits_{t\in[T]} v_t x_t  \quad {\rm s.t.}\ \underbrace{\sum\nolimits_{t\in[T]} x_t \leq B}_{\text{Budget constraint}},\quad \underbrace{\sum\nolimits_{t\in [T]} x_t \cdot  \boldsymbol{1}_{ \{j_t = j\}} \geq m_j}_{\text{\GFQ requirement}}, \forall j\in [K].
\end{align}

It is evident that if few agents arrive from each class $j\in[K]$, it is impossible to satisfy the \GFQ constraint in the problem formulation \eqref{eq-gfq}. Therefore, to avoid trivial settings, we introduce an additional assumption regarding arrivals from each class.
\begin{assumption}[\GFQ Arrivals]\label{assumption-gfq-arrival}
    For each class \( j \in [K] \), there are at least \( n_j \) arrivals, where \( n_j = \argmin_n \sum\nolimits_{t \in [n]} r_t \cdot \boldsymbol{1}_{\{j_t = j\}} \geq m_j\) is defined as the smallest integer ensuring that Problem \eqref{eq-gfq} remains feasible.
\end{assumption}
Assumption \ref{assumption-gfq-arrival} ensures the feasibility of Problem \eqref{eq-gfq} in both online and offline settings. For example, when $ r_t = 1 $ holds for all $ t \in [T]$, we have $ n_j = \lceil m_j \rceil$, indicating that there are at least $ \lceil m_j \rceil $ arriving agents from class $ j \in [K]$. Without the \GFQ constraint, \OMcC resembles standard online resource allocation problems such as online (fractional) knapsack \cite{sun2020competitive, tan2020mechanism}, where the key challenge lies in balancing two extremes: either allocating resources too quickly or waiting until the end to maximize profit. This issue is tackled by designing a dynamic threshold function that guides allocation decisions \cite{sun2020competitive}. To incorporate the \GFQ constraint, we propose a threshold-based algorithm dubbed \Q-\PRB, detailed in Algorithm \ref{alg-function-daynamic-threshold-K-class}. Upon receiving the first $n_j $ agents from each class $j \in [K]$, Algorithm \ref{alg-function-daynamic-threshold-K-class} ensures the corresponding fairness guarantee for that class, as defined in Definition \ref{def_fairness_numbers}, by automatically accepting {the first $n_j$ agents} regardless of their valuations, until the fairness guarantee for the class is met. As a result, an $M$-portion of the total unit resource is reserved to meet the fairness requirements, where $M=\sum_{j\in[K]} m_{j}$. The remaining $(B - M)$-portion of the total resource is then allocated to the arriving agents based on the threshold function $\phi(u_{t}):[0,B-M] \rightarrow [1,\theta_{K}]$, where $u_{t}$ represents the utilization level of the algorithm from the $(1 - M)$-portion of the resource up to the arrival of agent $t$. In the following subsection, we formally present our design of the optimal threshold function $\phi$.

\begin{algorithm}[t]
    \SetKwInput{KwInput}{Input}
    \SetKwInput{KwInitialization}{Initialization}
    \SetKwComment{Com}{\textcolor{gray}{$\triangleright$} }{}
    \KwInput{ $ (m_j, \theta_{j}), \forall j \in [K] $.}
    \KwInitialization{Initial global utilization, $ u_0 = 0$; Initial utilization of class $j$, $u^{j}_{0} = 0, \forall j \in [K] $.}
    
    \While{agent $ t$ arrives}{
    Obtain the valuation and class information of agent $ t $: $ v_t $ and $j_t$\;
    
    \If(\Com*[f]{\textcolor{gray}{Satisfying \GFQ constraint.}}){$u^{j_t}_{t-1} < m_{j_{t}}$}{
      $y_{t} = \min\{r_t, m_{j_{t}} - u^{j_{t}}_{t-1}$\}.
      
      Update $u^{j_{t}}_{t} = u^{j_{t}}_{t-1} + y_{t}$.
      }
    
    \If(\Com*[f]{\textcolor{gray}{Allocating the remaining resource.}}){$v_t \geq \phi(u_{t-1})$}{
        $ x_t = \min\left\{\argmax_{a\in[0, r_t - y_t]}\left\{a v_t-\int_{u_{t-1}}^{u_{t-1}+a} \phi(\eta)d\eta \right\}, B-M -u_{t-1} \right\}.$}
    Update the cumulative allocation: $ u_t = u_{t-1} + x_t $.
    
    Update the allocation amount of agent $t$: 
            $x_{t} = x_{t} + y_{t}$.
            }
    \caption{Global Threshold-based Fair Allocation by Quantity (\Q-\PRB)}
    \label{alg-function-daynamic-threshold-K-class}
\end{algorithm}

\subsection{Optimal Design of the Threshold Function.} \label{sec-GFQ-subsec-optimal-design}
Let $M = \sum_{j\in[K]} m_i$, $C_j =  B - \sum_{i=1}^{j-1} m_i$ and $D_j = \sum_{i=1}^{j-1} m_i\theta_i$. The following Theorem \ref{theorem-LowerBound-multi-Class} shows the design of the optimal threshold function.

\begin{theorem}[\OMcC with \GFQ Guarantee]
\label{theorem-LowerBound-multi-Class}
    For a given \GFQ requirement $\boldsymbol{m}:=\{m_j\}_{j \in [K]}$, the competitive ratio of Algorithm \ref{alg-function-daynamic-threshold-K-class} can be determined in the following cases.
    \begin{itemize}
        \item When $M \leq  \frac{B}{\alpha_0^*}$, Algorithm \ref{alg-function-daynamic-threshold-K-class} is $\alpha_0^*$-competitive if the threshold function $\phi$ is given by
        \begin{align*}
            \phi(u) = \begin{cases}
                1 & u\in [0, \Gamma^0],\\
                \exp\left(\frac{\alpha_0^*\cdot (u+M) - B - \sum_{i=1}^{j-1}m_i \ln\theta_i}{C_j}\right) & u \in [\Gamma^{j-1}, \Gamma^j], \quad\forall j \in [K],
            \end{cases}
        \end{align*}
        where $\alpha_0^* \coloneqq 1 + \ln\theta_K - \sum_{j=1}^{K-1} \frac{m_j}{B} \ln(\frac{\theta_K}{\theta_j})$ and $\Gamma^j = \frac{B}{\alpha_0^*} - M + \frac{C_{j}}{\alpha_0^*}\ln\theta_j+\frac{1}{\alpha_0^*}\sum_{i=1}^{j-1}m_i \ln\theta_i$. 
        \item When $M \in (\frac{\theta_{j^*-1} \cdot C_{j^*} + D_{j^*}}{\alpha_{j^*}},\frac{\theta_{j^*} \cdot C_{j^*} + D_{j^*}}{\alpha_{j^*}}]$ for some $j^*\in[K]$, Algorithm \ref{alg-function-daynamic-threshold-K-class} is $\alpha_{j^{*}}$-competitive if the threshold function $\phi(u)$ follows the design below:
        \begin{align*}
            \phi(u) = \begin{cases}
                v^* \exp\left(\frac{\alpha_{j^{*}}\cdot u - \sum_{i=j^*}^{j-1}m_i \ln\left(\frac{\theta_i}{v^*}\right)}{C_j}\right) & u \in [\Gamma^{j-1}_{j^*}, \Gamma^{j}_{j^*}] ,\quad \forall j \in \{j^*,\cdots, K\},
            \end{cases}
        \end{align*}
        where $\alpha_{j^*}$ is defined as
        \begin{align*}
           \alpha_{j^*} = \frac{D_{j^*}}{M}+ \frac{C_{j^*}}{B-M}W\left(\frac{\theta_{K}(B-M)}{M} \exp\left(-\frac{X}{C_{j^*}}\right)\exp\left(-\frac{D_{j^*}(B-M)}{C_{j^*} \cdot M}\right)\right),
        \end{align*}
        with $X=\sum_{i=j^*}^{K-1} m_i \ln\left(\frac{\theta_K}{\theta_i}\right)$, $v^* = (\alpha_{j^{*}}\cdot M -D_{j^*})/C_{j^*}$, and $\Gamma^{j}_{j^*} = \frac{C_{j}}{\alpha_{j^{*}}}\ln\frac{\theta_{j}}{v^*}+\frac{1}{\alpha_{j^{*}}}\sum_{i=j^*}^{j-1}m_i \ln\frac{\theta_i}{v^*}$.
    \end{itemize}
\end{theorem}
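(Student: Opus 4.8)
The plan is to establish Theorem~\ref{theorem-LowerBound-multi-Class} by following the now-standard ``online threshold-based algorithm (OTA)'' recipe, adapted to the multi-class reservation structure of \Q-\PRB. The key conceptual observation is that after reserving the $M$-portion to meet the \GFQ constraint, the remaining $(B-M)$-portion is allocated exactly as in a single-resource online fractional knapsack instance, \emph{except} that the effective ``price'' the algorithm must be able to match varies with how much of each class's reservation is still pending. Concretely, I would first argue that it suffices to analyze the allocation of the $(B-M)$-budget via the threshold function $\phi$, since the reserved $y_t$ allocations contribute nonnegatively to $\ALG(I)$ and, by Assumption~\ref{assumption-gfq-arrival}, are always feasible; hence $\ALG(I) \ge \sum_t y_t + \big(\text{utility from the }\phi\text{-driven allocation}\big)$, while $\OPT(I)$ is bounded by the offline optimum of~\eqref{eq-gfq}.

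Second, I would set up the two ingredients of the OTA analysis: (i) an \emph{online primal bound} showing that at every step the marginal utility collected by the algorithm dominates the ``area under $\phi$'' that has been consumed, i.e. $\ALG(I) \ge \int_0^{u_T}\phi(\eta)\,d\eta + (\text{reservation terms})$, which follows directly from the $\argmax$ rule defining $x_t$; and (ii) a \emph{dual-fitting / offline upper bound} showing $\OPT(I) \le \alpha \cdot \ALG(I)$ by constructing a feasible dual assignment to~\eqref{eq-gfq} from the final utilization level $u_T$ and the threshold $\phi(u_T)$. The heart of the matter is to choose $\phi$ so that this ratio is exactly the claimed $\alpha_0^*$ (resp.\ $\alpha_{j^*}$) uniformly over all instances; this is where the piecewise/multi-segmental structure enters. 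The $j$-th segment $[\Gamma^{j-1},\Gamma^j]$ corresponds to the regime in which classes $1,\dots,j-1$ have exhausted their reservations (so the live budget is $C_j = B - \sum_{i<j} m_i$) and the ``worst-case adversary valuation'' is being driven up past $\theta_{j-1}$ toward $\theta_j$; on that segment $\phi$ must satisfy an ODE of the form $C_j\,\phi'(u) = \alpha\,\phi(u)$ with the boundary/continuity conditions $\phi(\Gamma^{j-1}) = \theta_{j-1}$, $\phi(\Gamma^j) = \theta_j$, and a terminal condition $\phi(\Gamma^K) = \theta_K$ together with the normalization that the total area matches $\OPT$ at the all-$\theta_K$ instance. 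Solving this ODE segment-by-segment and patching gives the exponential formula stated, and imposing the terminal/area constraint pins down $\alpha$: in the ``small-$M$'' case one gets the closed form $\alpha_0^* = 1 + \ln\theta_K - \sum_{j<K}\tfrac{m_j}{B}\ln(\theta_K/\theta_j)$, while in the ``large-$M$'' case the constraint becomes transcendental and the Lambert-$W$ function appears, yielding $\alpha_{j^*}$; the parameter $v^* = (\alpha_{j^*}M - D_{j^*})/C_{j^*}$ is precisely the starting value $\phi$ must take at $u=0$ because the reservation $M$ is now large enough that the algorithm can no longer afford to start its threshold at $1$.

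Third, I would verify the case split: the thresholds at which $M$ transitions from the small-$M$ regime to the $j^*$-th large-$M$ sub-regime are exactly the values of $M$ at which the required starting value $v^*$ crosses $\theta_{j^*-1}$ and $\theta_{j^*}$, which is what the interval $M\in\big(\tfrac{\theta_{j^*-1}C_{j^*}+D_{j^*}}{\alpha_{j^*}},\,\tfrac{\theta_{j^*}C_{j^*}+D_{j^*}}{\alpha_{j^*}}\big]$ encodes; one checks these intervals tile $(0,B]$ and that $\phi$ remains within $[1,\theta_K]$, monotone, and continuous in each case. Finally I would close the loop on the worst-case instances: it suffices to exhibit, for each regime, a one-parameter family of ``staircase'' instances (a stream of agents from the relevant classes with valuations ramping from $1$ up to some $v\in[\theta_{j-1},\theta_j]$, padded so the \GFQ arrivals are present) on which $\OPT/\ALG$ attains $\alpha$, confirming the threshold is tight for the algorithm — though tightness of the algorithm's threshold is all that is claimed here, with the information-theoretic lower bound deferred to a later theorem.

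The main obstacle I anticipate is the bookkeeping in the dual/offline bound across segment boundaries when the \GFQ reservations of several classes are only partially consumed at termination: the offline optimum may route resource to low-valuation \GFQ agents in a way that interacts with the budget $C_j$ on the active segment, so one must carefully split $\OPT(I)$ into the part forced by the \GFQ constraint (bounded below by $\sum_j m_j$ times the relevant valuations) and the part competing for the residual budget, and show the worst case is the ``aligned'' one in which the adversary's high-valuation mass sits exactly at the segment endpoints $\theta_j$. Getting the constants $D_j = \sum_{i<j} m_i\theta_i$ and $C_j = B-\sum_{i<j}m_i$ to appear with the right signs in the Lambert-$W$ expression is the delicate computation, but it is a deterministic consequence of the ODE-plus-boundary-conditions setup once the regime is fixed.
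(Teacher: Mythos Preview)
Your plan is sound and would work, but it takes a noticeably different route from the paper's proof. The paper does a pure \emph{verification}: it takes the stated $\phi$ as given, defines $\Upsilon(v)=\argmax_{a\ge 0}\{av-\int_0^a\phi\}$, upper-bounds $\OPT(I)\le vC_j+D_j$ for $v\in(\theta_{j-1},\theta_j]$ by a direct combinatorial argument (offline can put at most $B-M$ on value $v$, plus $m_i$ on each class, capped at $\theta_i$ for $i<j$), lower-bounds $\ALG(I)\ge M+\Upsilon(1)+\int_{\Upsilon(1)}^{\Upsilon(v)}\phi$, and then just computes the integrals segment-by-segment using the explicit exponential form of $\phi$ to obtain $\ALG(I)\ge (vC_j+D_j)/\alpha$. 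No ODE, no dual, no primal-dual machinery.

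Your approach instead \emph{derives} $\phi$ from the differential requirement $C_j\,\phi'(u)=\alpha\,\phi(u)$ plus continuity and terminal conditions, which is more explanatory (it shows where the formulas come from rather than merely checking them) and is the natural thing to do if one wants the lower bound in the same breath. Two points where you are doing more work than the paper: (i) the ``dual-fitting'' step is unnecessary here --- the covering \GFQ constraint makes the LP dual awkward, and the paper's one-line combinatorial bound $\OPT(I)\le vC_j+D_j$ is all that is needed on the offline side; (ii) the ``staircase worst-case instance'' construction you sketch at the end belongs to Theorem~\ref{theorem-gfq-lowerbound}, not to Theorem~\ref{theorem-LowerBound-multi-Class}, which only asserts achievability. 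Your anticipated obstacle about partially-consumed reservations at termination is real in principle, but the paper sidesteps it entirely because the direct $\OPT$ bound already absorbs that bookkeeping into $D_j$.
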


The threshold function \(\phi\) described in Theorem \ref{theorem-LowerBound-multi-Class} consists of at most \(K + 1\) segments. This structure ensures that Algorithm \ref{alg-function-daynamic-threshold-K-class} achieves a competitive ratio influenced by the sum of the minimum allocation requirements \(\boldsymbol{m}\) dictated by the \GFQ constraint. The proof of this theorem is provided in Appendix \ref{appendix-gfq-mulit-upperbound} and an illustration of the threshold function is given in Figure~\ref{fig:GFQ_alphas}.

For the simple case of \(K = 1\) with \(m_1 = 0\), Theorem \ref{theorem-LowerBound-multi-Class} recovers the well-known optimal competitive ratio \(1 + \ln(\theta_1)\) for online conversion and knapsack problems without fairness concerns \cite{zhou2008budget, tan2020mechanism}. When \(m_1 = B\), the competitive ratio approaches \(\theta_1\), representing the worst-case ratio between the offline and online algorithms. This result stems from the fact that, in the online setting, the first \(n_j\) arrivals from each class must be accepted regardless of their valuations. As a result, an adversary can exploit this by presenting low-valuation items early on, penalizing the algorithm.  

For \(K \geq 2\), however, the situation diverges significantly from the knapsack problem. The optimal threshold function is no longer smooth; instead, it becomes a piecewise function due to the additional complexity introduced by the \GFQ constraint. This constraint also limits the flexibility of the offline algorithm, further differentiating the problem from simpler cases. In the following, for general \(K \geq 2\) with \GFQ requirements, we demonstrate in Theorem \ref{theorem-gfq-lowerbound} that the competitive ratio derived in Theorem \ref{theorem-LowerBound-multi-Class} is indeed optimal.

\begin{theorem}[\GFQ Lower Bound]\label{theorem-gfq-lowerbound}
    No algorithm can achieve a better competitive ratio than algorithm \ref{alg-function-daynamic-threshold-K-class} with the design presented in Theorem \ref{theorem-LowerBound-multi-Class} while maintaining the \GFQ requirement.
\end{theorem}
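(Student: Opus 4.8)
The plan is to prove matching lower bounds via an adversarial (Yao-style) argument, separately for each of the two regimes of $M$ identified in Theorem~\ref{theorem-LowerBound-multi-Class}, using the ``representative/utilization function'' technique advertised in the introduction. Fix any deterministic online algorithm $\ALG$ that respects the \GFQ requirement; the randomized case follows by a standard minimax argument since our hard instances will be parametrized by a continuum that we can randomize over. First I would construct a family of nested hard instances $\{I_\lambda\}$ indexed by a ``stopping valuation'' $\lambda \in [1,\theta_K]$: the adversary feeds agents in rounds of increasing valuation, starting with the mandatory $n_j$ low-value arrivals for each class $j$ (forcing $\ALG$ to waste its $M$-portion exactly as Algorithm~\ref{alg-function-daynamic-threshold-K-class} does), then continuing with a long stream of unit-rate agents of valuation $1$, then $1+\d$, and so on up to some level $\lambda$, after which the instance stops (or, in the second regime, the valuations are interleaved so that the class with the largest fluctuation ratio $\theta_K$ supplies the late high-value agents). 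The key object is the \emph{utilization function} $g(\lambda)$: the cumulative amount of the non-reserved $(B-M)$-budget that $\ALG$ has committed by the time valuations reach $\lambda$. Because allocations are irrevocable, $g$ is nondecreasing, and on instance $I_\lambda$ we have $\ALG(I_\lambda) \le M\cdot\theta_{(\cdot)} + \int_1^\lambda \eta\, dg(\eta) + \lambda\,(B-M-g(\lambda))$ while $\OPT(I_\lambda)$ is essentially $M$-weighted by the appropriate $\theta_i$'s plus $\lambda(B-M)$.

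Next I would impose the competitive constraint $\OPT(I_\lambda) \le \alpha \cdot \ALG(I_\lambda)$ for \emph{every} $\lambda$ simultaneously; this yields a differential inequality for $g$ of exactly the form whose tight solution is the inverse of the threshold function $\phi$ in Theorem~\ref{theorem-LowerBound-multi-Class}. Concretely, differentiating the worst-case identity $\OPT(I_\lambda) = \alpha\,\ALG(I_\lambda)$ gives $g'(\lambda) = \frac{\alpha g(\lambda) + (\text{affine term in }M,\,\{m_i\theta_i\})}{\lambda}$ piecewise, with the pieces switching at exactly the points where the offline optimum's marginal value changes because a class's $m_j$-reservation gets ``used up'' — this is the source of the $K+1$ segments. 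Solving this ODE with the boundary condition $g(\theta_K) \le B-M$ (the algorithm cannot over-allocate) forces $\alpha \ge \alpha_0^*$ in the first regime and $\alpha \ge \alpha_{j^*}$ in the second; the Lambert-$W$ expression for $\alpha_{j^*}$ arises precisely from solving the transcendental boundary equation in the second regime, where the starting valuation $v^*$ of the threshold is itself an unknown determined by a fixed-point condition.

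I would then verify the two cases correspond to the stated intervals for $M$: when $M$ is small the reservation phase ends before any ``segment boundary'' is crossed inside $[1,\theta_K]$, giving the clean closed form $\alpha_0^*$; when $M$ is larger the forced reservations already exhaust the budget up through class $j^*-1$'s worth of high-value room, so the effective threshold starts at $v^* > 1$ and only classes $j^*,\dots,K$ contribute segments. The continuity of $\alpha_{j^*}$ in $M$ at the interval endpoints (and agreement with $\alpha_0^*$ at the first breakpoint) should be checked to confirm the cases tile all feasible $M \in [0,B]$.

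The main obstacle I anticipate is getting the adversary's \emph{ordering} of arrivals right in the second regime so that the forced \GFQ acceptances and the adversarial high-value stream interact to produce exactly the $v^*$ and the $X = \sum_{i=j^*}^{K-1} m_i\ln(\theta_K/\theta_i)$ correction term — i.e., showing that no algorithm can do better by \emph{strategically} choosing which of its first $n_j$ forced acceptances to ``double up'' on (via the rate limits $r_t$) to partially satisfy future marginal value. Handling this requires arguing that any such strategy is dominated, which I expect to do by a swapping/exchange argument on the allocation profile, reducing an arbitrary $\ALG$'s behavior on $I_\lambda$ to the canonical ``threshold-shaped'' profile without increasing $\ALG(I_\lambda)$ relative to $\OPT$. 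The rest — solving the ODE, evaluating the Lambert-$W$ boundary condition, and matching constants with Theorem~\ref{theorem-LowerBound-multi-Class} — is mechanical.
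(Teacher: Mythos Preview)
Your high-level strategy matches the paper's: build nested hard instances indexed by a stopping valuation $\lambda$, encode any online algorithm by a utilization function, and extract a differential constraint that forces $\alpha$ up to the values in Theorem~\ref{theorem-LowerBound-multi-Class}. But two of your concrete ingredients are wrong in ways that would lose tightness.

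First, the hard instance must present agents from \emph{every} class $j$ with $\theta_j \ge v$ at each valuation level $v$, not just a mandatory low-value batch followed by a single-class increasing stream. This is what lets $\OPT$ satisfy each class's \GFQ constraint at the highest available value $\min(\lambda,\theta_j)$ and yields $\OPT(I_\lambda)=\lambda C_j+D_j$ on $(\theta_{j-1},\theta_j]$; with your construction, $\OPT$ is forced to meet \GFQ for classes $1,\dots,K-1$ at value $1$, and the resulting lower bound is strictly weaker than $\alpha_0^*$ or $\alpha_{j^*}$. Second, on the truncated instance the algorithm's value is \emph{exactly} $\psi(1)+\int_1^\lambda u\,d\psi(u)$, where $\psi$ is the total utilization; it is not the upper bound you wrote. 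The forced allocations are at value $1$ and contribute $M$, not $M\cdot\theta_{(\cdot)}$, and the unused budget $B-M-g(\lambda)$ stays unused --- the algorithm cannot dump it at value $\lambda$, since it does not know the instance is ending. Replacing the exact value by your upper bound only weakens the derived constraint on $\alpha$. With the correct pieces, integration by parts plus Gr\"onwall gives $\psi(v)\ge \tfrac{C_j}{\alpha}+\tfrac{D_j}{\alpha v}+\tfrac{1}{v}\int_1^v\psi(u)\,du$, and the tight ODE is $\lambda\psi'(\lambda)=C_j/\alpha$: your form $g'(\lambda)=(\alpha g(\lambda)+\text{affine})/\lambda$ has $g$ on the wrong side and would give exponential rather than logarithmic growth. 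The two regimes are distinguished simply by which of $\psi(1)\ge M$ and $\psi(1)\ge B/\alpha$ binds; no swapping/exchange argument is needed, since the utilization function already captures an arbitrary online algorithm.
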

The proof of this theorem can be found in Appendix \ref{appendix-gfq-lowerbound}. To demonstrate the results of Theorem \ref{theorem-LowerBound-multi-Class} in a more explicit way, we also give a case study of $K=2$ in Appendix \ref{appendix-gfq-case-study}.

\section{\OMcC with $ \beta $-\PF Guarantees}\label{section-proportional-fairness}
In this section, we focus on $\beta$-\PF algorithms. Recall that an allocation $\mathbf{x}$ is $\beta$-\PF if it is feasible (i.e., $ \mathbf{x}\in \mathcal{X} := \{\mathbf{x}|x_t\in [0,r_t], \forall t\in[T]$ and $ \sum_{t\in[T]} x_t \leq B $\}) and for any feasible allocation vector $\mathbf{w}$:
\begin{align}
    \frac{1}{K} \sum\nolimits_{j\in [K]}\frac{U_j(\mathbf{w})}{U_j(\mathbf{x})} \leq \beta, \qquad \forall \mathbf{w} \in \mathcal{X},
\end{align}
where $ U_{j}(\mathbf{w})=\sum\nolimits_{t\in [T]} v_t w_t \cdot  \boldsymbol{1}_{ \{j_t = j\}}, \forall j\in [K] $. Here we consider that the fraction $x/y$ for non-negative $x$ and $y$ is equal to 0 when $x=y=0$, while $x/y=+\infty$ when $y=0$ but $x>0$. 

Intuitively, we cannot adopt a similar approach to Algorithm \ref{alg-function-daynamic-threshold-K-class} to provide a bounded $\beta$-\PF guarantee, as an adversary could send all requests from a single class, depleting the resource. Once the resource is fully allocated, requests from other classes may begin to arrive. To mitigate this, a portion of the resource must be reserved for each class. However, unlike in \GFQ, where these reservations are externally enforced, the reserved portions here must be optimally designed for each class.

\subsection{Fair Allocation with Optimal $\beta$-\PF Guarantees}
We present \U-\PRB, a group-level, threshold-based online algorithm that integrates a utility-based fairness metric for the \OMcC problem. The detailed procedure is outlined in Algorithm \ref{alg-proportional-fairness}. In this algorithm, when each agent arrives, their class is first identified. Then, based on the specific threshold function for that class, the allocation decision $ x_t $ is made by maximizing the pseudo-utility. Specifically, we develop $K$ threshold functions, denoted as $\phi_j(u^j_t): [0, b_j] \to [1, \theta_j]$ for each group $j \in [K]$, where $u^j_t$ represents the utilization level of the algorithm for group $j$, drawn from the $b_j$ portion of the resource up to the arrival of agent $t$. Note that due to resource constraints, the total allocation must satisfy $\sum_{j \in [K]} b_j \leq B$.  Theorem \ref{theorem-proportional-fairness-beta} below gives an explicit design of the threshold function $ \phi_j $ for each class $ j\in [K]$.

\begin{theorem}[$\beta$-\PF Guarantee]
\label{theorem-proportional-fairness-beta}
    For each class $j\in [K]$, if the threshold function $ \phi_j $ is given by
    \begin{align*}
        \phi_j(u) = \begin{cases}
            1& u\in [0, \frac{B}{\sum_{i\in[K]} \alpha_i}],\\
            \exp{({\frac{K \beta u}{B} - 1})} & u\in [\frac{B}{\sum_{i\in[K]} \alpha_i}, \frac{B\cdot \alpha_j}{\sum_{i\in[K]} \alpha_i}],
        \end{cases}
    \end{align*}
    where $\alpha_j = 1 + \ln{\theta_j}$, then Algorithm \ref{alg-proportional-fairness} is $\beta$-\PF with $\beta = \frac{1}{K} \sum_{j\in[K]} \alpha_j$.
\end{theorem}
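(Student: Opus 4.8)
The plan is to verify the $\beta$-\PF guarantee by reducing it, class by class, to the competitive-ratio analysis of the single-class online conversion/knapsack problem with the standard threshold-based algorithm. The key structural observation is that Algorithm \ref{alg-proportional-fairness} runs $K$ independent copies of a threshold-based allocation, one per class, operating on disjoint resource pools of sizes $b_1,\dots,b_K$ with $\sum_{j\in[K]} b_j \le B$; here the design sets $b_j = \frac{B\cdot\alpha_j}{\sum_{i\in[K]}\alpha_i}$, so that $\sum_j b_j = B$ exactly. For a fixed class $j$ and arrival instance $I$, let $U_j(\mathbf{x})$ be the utility our algorithm extracts for class $j$ and let $U_j^{\mathrm{OFF}}$ denote the maximum utility obtainable from class-$j$ arrivals using budget $b_j$. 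First I would show that the per-class threshold function $\phi_j$ — which is exactly the classical optimal threshold $1$ on $[0,\frac{B}{\sum_i\alpha_i}]$ followed by the exponential branch $\exp(\frac{K\beta u}{B}-1)$, rescaled to the budget $b_j$ — guarantees $U_j^{\mathrm{OFF}} \le \alpha_j \cdot U_j(\mathbf{x})$. This is where the choice of $\alpha_j = 1 + \ln\theta_j$ and the endpoints of the two segments come from: with $\beta = \frac1K\sum_i\alpha_i$ we have $\frac{K\beta}{B} = \frac{\sum_i\alpha_i}{B}$, so the exponential branch runs from utilization $\frac{B}{\sum_i\alpha_i}$ (where $\phi_j = 1$) up to $b_j = \frac{B\alpha_j}{\sum_i\alpha_i}$ (where $\phi_j = \exp(\alpha_j - 1) = \theta_j$), matching the valuation range $[1,\theta_j]$ of class $j$ under Assumption \ref{assumption_bounded_values}. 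The competitive-ratio bound $\alpha_j$ then follows from the usual threshold argument: either the per-class budget $b_j$ is exhausted, in which case the pseudo-utility accounting $\int_0^{b_j}\phi_j(\eta)d\eta$ lower-bounds $U_j(\mathbf{x})$ and an offline solution cannot beat $\theta_j b_j$; or it is not exhausted, in which case every class-$j$ item the offline optimum could have taken had valuation below the final threshold $\phi_j(u^j_T)$, so $U_j^{\mathrm{OFF}} \le \phi_j(u^j_T)\cdot b_j \le U_j(\mathbf{x}) + \int_{u^j_T}^{b_j}\phi_j$, and one checks the ratio is at most $\alpha_j$. I would lift this from the existing single-class results (e.g.\ \cite{sun2020competitive, tan2020mechanism, zhou2008budget}) rather than reprove it from scratch.

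The second step converts these $K$ per-class bounds into the $\beta$-\PF inequality. Fix any feasible competitor allocation $\mathbf{w}\in\mathcal{X}$. Since $\mathbf{w}$ respects $\sum_t w_t \le B$, its class-$j$ restriction uses some budget $B_j^{\mathbf{w}} \le B$ with $\sum_j B_j^{\mathbf{w}} \le B$; in particular $U_j(\mathbf{w}) \le U_j^{\mathrm{OFF}(B)}$, the offline optimum for class $j$ with the full budget $B$. Here I need a slightly more careful version of step one: the per-class competitive bound must be stated against the offline optimum that is allowed to use the \emph{entire} budget $B$ for class $j$, not just $b_j$. Because the class-$j$ valuations lie in $[1,\theta_j]$ and the additional budget beyond $b_j$ can only buy items of value $\le \theta_j$, one gets $U_j^{\mathrm{OFF}(B)} \le U_j^{\mathrm{OFF}(b_j)} + (B - b_j)\theta_j$; substituting into the threshold accounting and redoing the arithmetic should still yield $U_j(\mathbf{w}) \le \alpha_j \, U_j(\mathbf{x})$ for every $j$ — this is in fact the natural single-class statement, since the online algorithm knowing only that class-$j$ budget is $b_j$ is $\alpha_j$-competitive against \emph{any} offline use of class-$j$ items, and the reserved-budget design is precisely calibrated so that $\alpha_j$ absorbs the $(B-b_j)\theta_j$ slack. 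Granting $U_j(\mathbf{w})/U_j(\mathbf{x}) \le \alpha_j$ for each $j$, averaging over $j\in[K]$ gives
\begin{align*}
    \frac1K\sum_{j\in[K]}\frac{U_j(\mathbf{w})}{U_j(\mathbf{x})} \;\le\; \frac1K\sum_{j\in[K]}\alpha_j \;=\; \beta,
\end{align*}
which is exactly Definition \ref{def_proportional_fairness}. Feasibility of $\mathbf{x}$ is immediate since the pools are disjoint and sum to $B$, and the degenerate cases ($U_j(\mathbf{x}) = 0$) are handled by the convention that $x/0 = +\infty$ only when $x>0$ — but Assumption \ref{assumption_bounded_values} together with the fact that $\phi_j \equiv 1$ on an initial nonempty segment forces $U_j(\mathbf{x}) > 0$ whenever any class-$j$ agent arrives at all, so the ratio is always well-defined.

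The main obstacle I anticipate is step one's bookkeeping: carefully establishing that the single-class threshold algorithm run on budget $b_j$ is $\alpha_j$-competitive against an offline optimum endowed with the full budget $B$, where the "excess" budget $B - b_j$ can only purchase class-$j$ items of value at most $\theta_j$. This requires recomputing the worst-case ratio of $\frac{\text{offline}}{\text{online}}$ with the threshold integral $\int_0^{b_j}\exp(\frac{Ku\beta}{B}-1)\,du$ appearing in the denominator, verifying that the two segment endpoints $\frac{B}{\sum_i\alpha_i}$ and $b_j$ make the "fully exhausted" and "not exhausted" cases balance at exactly $\alpha_j = 1+\ln\theta_j$, and checking monotonicity of $\phi_j$ so that Algorithm \ref{alg-proportional-fairness}'s per-step $\argmax$ is well-behaved (the integrand $a v_t - \int_{u}^{u+a}\phi_j$ is concave in $a$, giving a clean closed-form allocation). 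Everything else — disjointness of pools, the averaging argument, and feasibility — is routine once this per-class inequality is in hand.
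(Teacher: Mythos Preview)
Your per-class inequality $U_j(\mathbf{w}) \le \alpha_j\, U_j(\mathbf{x})$ is false, and the whole argument hinges on it. Take $K=2$, $\theta_1=\theta_2=e$, so $\alpha_1=\alpha_2=2$, $\beta=2$, $b_1=b_2=B/2$, and the flat segment of each $\phi_j$ ends at $B/(K\beta)=B/4$. Send a large batch of class-$1$ agents all with valuation $1$, and no class-$2$ agents. The algorithm allocates exactly $B/4$ to class $1$ (it stops where $\phi_1$ leaves the value $1$), so $U_1(\mathbf{x})=B/4$. A competitor $\mathbf{w}$ can spend the full budget $B$ on class $1$, giving $U_1(\mathbf{w})=B$. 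The ratio is $4=K\beta$, not $\alpha_1=2$. Your claim that ``the reserved-budget design is precisely calibrated so that $\alpha_j$ absorbs the $(B-b_j)\theta_j$ slack'' is simply not what these thresholds achieve: the per-class online algorithm with budget $b_j$ is $\alpha_j$-competitive against an offline that \emph{also} uses budget $b_j$, not against one that uses $B$.

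The paper's proof avoids bounding each ratio separately. It shows directly that $U_j(\mathbf{x})\ge \frac{B}{K\beta}\,v_j$ (where $v_j$ is the maximum class-$j$ valuation seen) via the threshold integral, and bounds $U_j(\mathbf{w})\le v_j w_j$ where $w_j$ is the budget $\mathbf{w}$ spends on class $j$. Then
\[
\frac{1}{K}\sum_{j}\frac{U_j(\mathbf{w})}{U_j(\mathbf{x})}\;\le\;\frac{1}{K}\sum_j \frac{v_j w_j}{\frac{B}{K\beta}v_j}\;=\;\frac{\beta}{B}\sum_j w_j\;\le\;\beta,
\]
using the \emph{shared} budget constraint $\sum_j w_j\le B$ across classes. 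The cancellation of $v_j$ and the coupling through $\sum_j w_j$ are the actual mechanism; no per-class competitive bound of the form you propose is needed (or true).
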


The proof of the above theorem is given in Appendix \ref{appendix-proportional-fairness-beta}. Here we emphasize that the reserved portion of the resource $b_j$ for each class $j$ is equal to  $\frac{B\cdot \alpha_j}{\sum_{i\in[K]} \alpha_i}$, which implies the proportionality of this algorithm considering the uncertainties of the arrivals from each group. Additionally, it is important to note that each class has its own individual threshold function, unlike Algorithm \ref{alg-function-daynamic-threshold-K-class}, which uses a single global threshold function. This difference arises because, in \GFQ, the fairness guarantee is enforced by the input of the problem, whereas in $\beta$-\PF, fairness is embedded in the design of the threshold functions. Moreover, we show that this approach achieves the best possible $\beta$-\PF in the \OMcC setting.  The following theorem formalizes this result. 

\begin{theorem}[$\beta$-\PF Lower-Bound]\label{theorem-proportional-fairness-lower-bound}
    There exists no $ \beta $-\PF online algorithm for \OMcC with $ \beta < \frac{1}{K} \sum_{j\in[K]} \alpha_j$, where $\alpha_j = 1 + \ln{\theta_j}$.
\end{theorem}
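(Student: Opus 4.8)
The plan is a proof by contradiction built around class-wise \emph{utilization functions}. Suppose some online algorithm $\ALG$ is $\beta$-\PF with $\beta < \frac1K\sum_{j\in[K]}\alpha_j$. I will exhibit a family of instances on which the $\beta$-\PF requirement forces $\ALG$ to devote, for each class $j$, at least a $\tfrac{\alpha_j}{K\beta}$-fraction of the budget to that class; summing over $j$ then exceeds $B$, which is the contradiction.

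The hard family is as follows. The $K$ classes arrive in disjoint consecutive blocks, block $j$ containing only class-$j$ agents whose valuations increase along a fine geometric grid from $1$ up to $\theta_j$, with enough agents (large enough rate limits) at each grid level that an offline allocation may place up to $B$ units of resource there. Let $I^\star$ be the complete instance in which every block runs all the way to $\theta_j$, and for each $j$ and $p\in[1,\theta_j]$ let $I_{j,p}$ be the instance in which blocks $1,\dots,j-1$ are complete, block $j$ is truncated just after valuation level $p$, and the stream then ends. Since allocations are irrevocable and each decision depends only on the observed prefix, and blocks $1,\dots,j-1$ are identical (complete) across $I^\star$ and all $I_{j,p}$, the cumulative amount $\ALG$ has allocated to class $j$ by the time block $j$ reaches level $p$ --- call it $g_j(p)$ --- is the \emph{same} on $I^\star$ and on every $I_{j,p'}$ with $p'\ge p$. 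Hence each $g_j:[1,\theta_j]\to[0,B]$ is a well-defined nondecreasing function with $g_j(1^-)=0$; on $I^\star$ we have $U_j(\mathbf x)=\int_{[1,\theta_j]}\eta\,dg_j(\eta)=:h_j(\theta_j)$, and the total resource allocated to class $j$ equals $g_j(\theta_j)$.

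For the per-class bound, fix $j$ and $p\in[1,\theta_j]$ and apply the $\beta$-\PF inequality on $I_{j,p}$ against the comparison allocation $\mathbf w$ that places $B$ units on class-$j$ agents at valuation level $p$ and nothing elsewhere. Every class $i\ne j$ contributes a zero term (for $i>j$ since $U_i(\mathbf w)=U_i(\mathbf x)=0$; for $i<j$ since $U_i(\mathbf w)=0$ while $U_i(\mathbf x)>0$ --- and $U_i(\mathbf x)>0$ is itself forced, else concentrating $\mathbf w$ on class $i$ makes the ratio infinite). The inequality thus reduces to $\tfrac1K\cdot\tfrac{Bp}{h_j(p)}\le\beta$, i.e.\ $h_j(p)\ge \tfrac{Bp}{K\beta}$ for all $p\in[1,\theta_j]$. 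Since $dg_j(\eta)=\tfrac1\eta\,dh_j(\eta)$ and $g_j(1^-)=0$ we have $g_j(\theta_j)=\int_{[1,\theta_j]}\tfrac1\eta\,dh_j(\eta)$; integrating by parts (valid because $1/\eta$ is continuous, with lower endpoint $1^-$ to absorb any atom of $h_j$ at $1$) gives $g_j(\theta_j)=\tfrac{h_j(\theta_j)}{\theta_j}+\int_1^{\theta_j}\tfrac{h_j(\eta)}{\eta^2}\,d\eta\ge \tfrac{B}{K\beta}\big(1+\ln\theta_j\big)=\tfrac{B\,\alpha_j}{K\beta}$. This is precisely the classical $1+\ln\theta$ one-way-trading / online-knapsack lower bound, localized to class $j$. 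Finally, the resource constraint on $I^\star$ yields $B\ge\sum_{j\in[K]}g_j(\theta_j)\ge\tfrac{B}{K\beta}\sum_{j\in[K]}\alpha_j$, hence $\beta\ge\tfrac1K\sum_{j\in[K]}\alpha_j$, contradicting the hypothesis. (As a sanity check, equality throughout forces $g_j(\theta_j)=B\alpha_j/\sum_i\alpha_i$, exactly the reservation $b_j$ used by the optimal algorithm of Theorem~\ref{theorem-proportional-fairness-beta}.)

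The delicate point --- and the main obstacle --- is the well-definedness and cross-instance consistency of the utilization functions $g_j$: one must hold blocks $1,\dots,j-1$ \emph{complete} in every instance used to constrain $g_j$, so that $\ALG$'s behavior inside block $j$ is independent of the (future, hence irrelevant) tail and the $K$ single-class arguments can be run simultaneously on one common allocation profile before their reservations are aggregated. The remaining technicalities --- passing from the finite geometric grid to the integral formulation (a routine grid-refinement limit) and the boundary bookkeeping of the atom of $h_j$ at $\eta=1$ (where the $p=1$ constraint already gives $g_j(1)\ge B/(K\beta)$) --- are routine.
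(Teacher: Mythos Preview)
Your proof is correct and follows essentially the same approach as the paper: the same block-structured hard instance $I^{\PF}$, the same comparison allocation concentrating all of $B$ on the current class to isolate the single term $\tfrac{1}{K}\cdot\tfrac{Bp}{h_j(p)}\le\beta$, and the same aggregation via the budget constraint $\sum_j g_j(\theta_j)\le B$. The only cosmetic difference is that the paper names the per-class step ``Gronwall's inequality'' whereas you carry out the equivalent integration-by-parts computation explicitly; your treatment of the atom at $\eta=1$ and of the cross-instance consistency of the $g_j$'s is in fact more carefully spelled out than in the paper.
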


Theorem \ref{theorem-proportional-fairness-lower-bound} shows that Algorithm \ref{alg-proportional-fairness} is indeed optimal by achieving $ (\frac{1}{K} \sum_{j\in[K]} \alpha_j)$-\PF. The proof of the above theorem is provided in Appendix \ref{appendix-proportional-fairness-lower-bound-fairness}.

\begin{algorithm}[t]
    \SetKwInput{KwInput}{Input}
    \SetKwInput{KwInitialization}{Initialization}
    \SetKwComment{Com}{\textcolor{gray}{$\triangleright$} }{}
    \KwInput{$ \theta_{j}, \forall j \in [K] $.}
    \KwInitialization{Initial utilization of class $j$, $ u^{j}_{0} = 0, \forall j \in [K] $.}
    
    \While{agent $ t$ arrives}{
    Obtain the value and class information of agent $ t $: $ v_t $ and $j_t$ \;
    
    \If(\Com*[f]{\textcolor{gray}{Local threshold-based allocation.}}){$v_t \geq \phi_{j_t}(u^{j_t}_{t-1})$}{
      $
        x_t = \argmax_{a\in[0,r_t]} \left\{a v_t-\int_{u^{j_t}_{t-1}}^{u^{j_t}_{t-1}+a} \phi_{j_t}(\eta)d\eta \right\}.
        $
      }
    
    Update the cumulative allocation: $ u^{j_t}_t = u^{j_t}_{t-1} + x_t $. 
            }
    \caption{Local Threshold-based Fair Allocation by Utility (\U-\PRB)}
    \label{alg-proportional-fairness}
\end{algorithm}

\subsection{Pareto-Optimal Efficiency-Fairness Trade-off: $ \alpha $-Competitiveness vs $\beta$-\PF}
It is sensible that fairness and competitiveness are inversely related. The reason for this is that maintaining fairness often requires allocating resources to classes with lower valuations, which leads to a worse competitive ratio compared to allocations that do not prioritize fairness.  To balance efficiency and fairness, we introduce a new class of Set-Aside Multi-Threshold-based (\SAMT) algorithms in Algorithm \ref{alg-proportional-fairness-tradeoff-local-globol}, which incorporates two types of threshold functions. 

At a high level, Algorithm \ref{alg-proportional-fairness-tradeoff-local-globol} works as follows. Upon arrival of each agent, based on its class and valuation information, two allocation decisions are made. The first is based on the agent’s group-specific threshold function, which ensures group fairness. The second is based on a global threshold function, aimed at optimizing individual welfare. This algorithm is named \textit{Set-Aside} in the sense that it essentially sets aside some of the resource for agents to compete for, while reserving a certain portion for each class to maintain fairness across different classes. It is also called \textit{Multi-Threshold} because the allocation is made based on two different types of threshold functions. Specifically, we design $K+1$ threshold functions: one local threshold function for each class, denoted by $\phi_j(u^j_t): [0, b_j] \to [1, \theta_j]$ for each group $j \in [K]$, and one global threshold function, denoted by $\phi^G(u_t): [0, B-\sum_{j\in [K]}b_j] \to [1, \theta_K]$, shared across all classes. Here, we let $b_j$ denote the reserved resource for class $j$ and $B-\sum_{j\in [K]}b_j$ is the set-aside resource. In Theorem \ref{theorem-proportional-fairness-trade-off-design} below, we show that with a well-designed set of threshold functions,  Algorithm \ref{alg-proportional-fairness-tradeoff-local-globol} can smoothly balance $ \alpha $-competitiveness and $\beta$-\PF  for any $ \beta \geq \frac{1}{K} \sum_{j\in[K]} \alpha_j$.

\begin{theorem}[$ \alpha $-Competitiveness vs $\beta$-\PF]\label{theorem-proportional-fairness-trade-off-design}
    For any given $\beta \ge \frac{1}{K} \sum_{j\in[K]} \alpha_j$, Algorithm \ref{alg-proportional-fairness-tradeoff-local-globol} is $\beta$-\PF and $ \alpha_K(1 - \frac{\sum\nolimits_{j\in[K-1]} \alpha_j}{K\beta})^{-1}$-competitive for \OMcC if the threshold functions are designed as follows:
    \begin{align*}
        &\phi_j(u) = \begin{cases}
            1& u\in [0, \frac{b_j}{\alpha_j}],\\
            \exp{({\frac{K \beta u}{B} - 1})} & u\in [\frac{b_j}{\alpha_j}, b_j],
        \end{cases}\quad \forall j\in[K],\\
        &\phi^G(u) = \begin{cases}
            1& u\in \Big[0, \frac{B-\sum_{j\in [K]}b_j}{\alpha_K}\Big],\\
            \exp\bigg({ \frac{\alpha_K}{1 - \frac{\sum\nolimits_{j\in[K]} \alpha_j}{K\cdot \beta}} \cdot \frac{u}{B} - 1}\bigg) & u\in \Big[\frac{B-\sum_{j\in [K]}b_j}{\alpha_K}, B-\sum_{j\in [K]}b_j\Big],
        \end{cases}
    \end{align*}
    where $b_j = \frac{B\cdot \alpha_j}{K\beta}$ and $\alpha_j = 1 + \ln\theta_j$.
\end{theorem}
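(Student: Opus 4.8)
\textbf{Proof plan for Theorem~\ref{theorem-proportional-fairness-trade-off-design}.}
The plan is to decompose the analysis into two independent parts, one establishing the $\beta$-\PF guarantee and one establishing the competitive ratio, exploiting the fact that the local threshold functions $\{\phi_j\}$ and the global function $\phi^G$ operate on disjoint portions of the resource ($b_j$ for class $j$, and $B - \sum_{j\in[K]} b_j$ set aside). First I would verify that the reservation sizes $b_j = \frac{B\alpha_j}{K\beta}$ are feasible, i.e., $\sum_{j\in[K]} b_j = \frac{B}{K\beta}\sum_{j\in[K]}\alpha_j \le B$, which holds precisely because $\beta \ge \frac{1}{K}\sum_{j\in[K]}\alpha_j$; this is why the hypothesis on $\beta$ is exactly what is needed, and it also shows the set-aside portion $B - \sum_j b_j = B(1 - \frac{\sum_j \alpha_j}{K\beta}) \ge 0$ is nonnegative, so the global function's domain is well-defined.

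For the fairness part, I would observe that each local threshold function $\phi_j$ on $[0,b_j]$ is, up to the affine rescaling $u \mapsto Ku/b_j$ hidden in the exponent (note $\frac{K\beta u}{B} = \frac{\alpha_j}{b_j} \cdot u \cdot \frac{b_j K\beta}{B\alpha_j} = \frac{u}{b_j}\cdot\alpha_j$ when $b_j = \frac{B\alpha_j}{K\beta}$, so the ``flat'' segment ends at $b_j/\alpha_j$ and the exponential segment reaches $\exp(\alpha_j - 1) = \theta_j$ at $u = b_j$), exactly the single-class optimal threshold used in Theorem~\ref{theorem-proportional-fairness-beta}, but scaled to a budget of $b_j$ instead of $\frac{B\alpha_j}{\sum_i \alpha_i}$. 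Since the argument in Theorem~\ref{theorem-proportional-fairness-beta} that $\frac{U_j(\mathbf{w})}{U_j(\mathbf{x})} \le \alpha_j$ for every competing allocation $\mathbf{w}$ relies only on the shape of $\phi_j$ and the per-class budget (the standard online-threshold pseudo-utility/duality argument: the allocation maximizing $av_t - \int \phi_j$ is ``$\phi_j$-optimal,'' and any feasible $\mathbf{w}$ restricted to class $j$ can extract at most $\alpha_j$ times what the algorithm banks), I would lift that lemma verbatim to get $U_j(\mathbf{x}) \ge \frac{1}{\alpha_j} U_j(\mathbf{w})$ for all $j$; then $\frac{1}{K}\sum_j \frac{U_j(\mathbf{w})}{U_j(\mathbf{x})} \le \frac{1}{K}\sum_j \alpha_j \le \beta$, giving $\beta$-\PF. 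The subtlety to check is the boundary case $U_j(\mathbf{x}) = 0$: this happens only if no class-$j$ agent ever cleared threshold $1$, which by Assumption~\ref{assumption_bounded_values} ($v_t \ge 1$) forces no class-$j$ arrivals with positive rate, hence $U_j(\mathbf{w}) = 0$ too, and the convention $0/0 = 0$ handles it.

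For the competitive-ratio part, I would argue that the total utility $\ALG(I)$ collected by the algorithm is at least the utility collected on the set-aside portion alone, where the global threshold $\phi^G$ is a standard single-threshold online-knapsack function on budget $B_G := B - \sum_j b_j = B(1 - \frac{\sum_j\alpha_j}{K\beta})$ with value range $[1,\theta_K]$ but with its exponential segment steepened: the exponent is $\frac{\alpha_K}{1 - \sum_j\alpha_j/(K\beta)}\cdot\frac{u}{B} - 1 = \frac{\alpha_K}{B_G}\cdot\frac{B_G}{B(1-\sum_j\alpha_j/(K\beta))}\cdot\frac{B_G u}{B_G}\cdots$ — more cleanly, substituting $B_G$ shows $\phi^G$ on $[0,B_G]$ is exactly the $\alpha'$-competitive threshold function of~\cite{sun2020competitive,tan2020mechanism} shape with $\alpha' = \alpha_K / (1 - \sum_j\alpha_j/(K\beta))$, except it is capped at $\theta_K$ rather than $e^{\alpha'-1}$, so it ``saturates early.'' The key inequality is that running this global rule against \OPT restricted to the set-aside resource yields an $\alpha'$-competitive guarantee \emph{for that sub-instance}, and since $\ALG(I) \ge \ALG_G(I)$ while $\OPT(I) \le \OPT$ on full budget $\le \alpha' \cdot \ALG_G(I)$ (here one needs that the offline optimum on budget $B$ is at most $\alpha'$ times what the global rule banks on budget $B_G$, because $\phi^G$ reaches the ceiling $\theta_K$ exactly at $u = B_G$, so the standard threshold argument closes with the ratio $\alpha'$), we obtain $\CR \le \alpha_K(1 - \frac{\sum_{j\in[K-1]}\alpha_j}{K\beta})^{-1}$ — note the index set $[K-1]$ rather than $[K]$, which I expect arises because the $K$-th class's full share $b_K$ can itself be folded back in when $\theta$'s are ordered (the algorithm's class-$K$ local function already has range up to $\theta_K$, matching the global ceiling), effectively returning $b_K$'s worth of budget to the competitive pool; making this folding rigorous is the main obstacle. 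Concretely, the hard part will be the bookkeeping that shows utility banked on class $K$'s reserved slice plus the set-aside slice together behave like a single $\alpha_K/(1-\sum_{j\in[K-1]}\alpha_j/(K\beta))$-competitive threshold mechanism on budget $b_K + B_G = B(1 - \sum_{j\in[K-1]}\alpha_j/(K\beta))$; I would handle this by a careful merging argument on the two threshold functions (checking their exponential pieces have compatible slopes after the budget substitution) rather than treating $b_K$ as pure overhead, and then invoke the single-threshold competitive-ratio lemma one final time on the merged budget.
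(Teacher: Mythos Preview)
Your two-part decomposition matches the paper, and the fairness half is essentially right, though the per-class bound you state, $U_j(\mathbf{w})/U_j(\mathbf{x}) \le \alpha_j$, is false: $\mathbf{w}$ may put all of $B$ into class $j$, giving ratio $K\beta$. The correct inequality (as in the proof of Theorem~\ref{theorem-proportional-fairness-beta}) is $U_j(\mathbf{w})/U_j(\mathbf{x}) \le K\beta\, w_j/B$, which sums over $j$ to $K\beta$ and then averages to $\beta$.

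The real gap is in the competitive-ratio plan. You attribute the $[K-1]$ (rather than $[K]$) in the final ratio to a ``merging'' of class $K$'s reserved slice $b_K$ with the global pool, and propose to argue that $\phi_K$ and $\phi^G$ together act like a single threshold mechanism on budget $b_K + B_G$. This cannot work: $\phi_K$ is exercised only by class-$K$ arrivals while $\phi^G$ is exercised by all arrivals, so on any instance with no class-$K$ agents the budget $b_K$ sits idle and your merged mechanism would recover only the weaker $[K]$-sum ratio. The paper's argument is simpler and class-agnostic. Let $j^*$ attain $v^* := \max_j v_j$ and lower-bound $\ALG(I)$ by the $j^*$-local contribution plus the global one: a direct computation gives $\Psi_{j^*}(v^*) = \tfrac{B}{K\beta}\,v^*$ (the \emph{same} value for every class, since all $\phi_j$ share the exponential slope $K\beta/B$) and $\Psi^G(v^*) = \tfrac{B}{\alpha_K}\bigl(1-\tfrac{\sum_{i\in[K]}\alpha_i}{K\beta}\bigr)\,v^*$. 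Adding these and writing $\tfrac{1}{K\beta} = \tfrac{\alpha_K}{K\beta\,\alpha_K}$ cancels exactly the $\alpha_K$ term from the $[K]$-sum, yielding
\[
\ALG(I)\;\ge\; Bv^*\cdot \frac{1}{\alpha_K}\Bigl(1-\frac{\sum_{i\in[K-1]}\alpha_i}{K\beta}\Bigr);
\]
combined with $\OPT(I) \le Bv^*$ this is the claimed ratio. So the $[K-1]$ is pure algebra---the argmax class's own local pool, whichever class that is, supplies the missing $\alpha_K/(K\beta)$---and no threshold-merging is needed.
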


The proof of this theorem is provided in Appendix \ref{appendix-theorem-proportional-fairness-trade-off-design}. By utilizing the threshold functions defined in Theorem \ref{theorem-proportional-fairness-trade-off-design}, Algorithm \ref{alg-proportional-fairness-tradeoff-local-globol} reserves a budget of $\frac{B \cdot \alpha_j}{K \beta}$ for each class of agents to achieve $\beta$-\PF based on the class-specific threshold function $\phi_j$. Furthermore, the global threshold function $\phi^G$ allocates the remaining fraction of the resource, given by $B - \sum_{i \in [K]} \frac{B \cdot \alpha_i}{K \beta}$, to ensure high levels of efficiency. Notably, as $\beta$ approaches infinity, the resource allocation is determined entirely by the global threshold function, aligning the optimal algorithm without fairness guarantee with the optimal competitive ratio $\alpha_K$ \cite{zhou2008budget}. Conversely, achieving optimal fairness requires an algorithm that omits the global threshold function. This aligns with the design outlined in Theorem \ref{theorem-proportional-fairness-beta}, which achieves $\left(\frac{1}{K} \sum\nolimits_{i \in [K]} \alpha_i\right)$-\PF.

\begin{algorithm}[t]
    \SetKwInput{KwInput}{Input}
    \SetKwInput{KwInitialization}{Initialization}
    \SetKwComment{Com}{\textcolor{gray}{$\triangleright$} }{}
    \KwInput{ $\beta$; $ \theta_{j}, \forall j \in [K] $.}
    \KwInitialization{Initial global utilization, $ u_0 = 0$; Initial utilization of class $j$, $u^{j}_{0} = 0, \forall j \in [K] $.}
    
    \While{agent $ t$ arrives}{
    Obtain the value and class information of agent $ t $: $ v_t $ and $j_t$ \;
    
    \If(\Com*[f]{\textcolor{gray}{Set-Aside threshold-based allocation.}}){$v_t \geq \phi_{j_t}(u^{j_t}_{t-1})$}{
      $x_t^1 = \argmax_{a\in[0,r_t]} \left\{a\cdot v_t-\int_{u^{j_t}_{t-1}}^{u^{j_t}_{t-1}+a} \phi_{j_t}(\eta)d\eta \right\}$.
      }
    \If(\Com*[f]{\textcolor{gray}{Global threshold-based allocation.}}){$v_t \geq \phi^G(u_{t-1})$}{
      $x_t^2 = \argmax_{a \in [0, r_t-x_t^1]} \left\{a\cdot v_t-\int_{u_{t-1}}^{u_{t-1}+a} \phi^G(\eta)d\eta \right\}$.
      }
    
    Update the local cumulative utilization: $ u^{j_t}_t = u^{j_t}_{t-1} + x^1_t $. 
    
    Update the global cumulative utilization: $u_t = u_{t-1} + x^2_t $.
            }
    \caption{Set-Aside Multi-Threshold-based Fair-Efficient Allocation (\SAMT) }
    \label{alg-proportional-fairness-tradeoff-local-globol}
\end{algorithm}

The concept of reserving a portion of resources to ensure fairness is well-established in the literature \cite{banerjee2022proportionally, banerjee2022online}. However, in most cases, the allocation of this reserved portion is done greedily (e.g., reserve half of the total resource for greedy allocation), often leading to suboptimal outcomes within those frameworks. In contrast, we allocate the reserved portion using a threshold function, which offers a significant advantage in worst-case analysis. Moving forward, we prove that the threshold function design in Theorem \ref{theorem-proportional-fairness-trade-off-design} achieves the Pareto-optimal competitiveness-fairness trade-off.

\begin{theorem}[Pareto Optimality]
\label{theorem-proportional-fairness-alpha-lowerbound}
    No \(\beta\)-\PF algorithm can achieve a smaller competitive ratio than $\alpha$, where $\alpha = \alpha_K \cdot \left(1 - \frac{\sum\nolimits_{j\in[K-1]} \alpha_j}{K\beta}\right)^{-1}$. Thus, Algorithm \ref{alg-proportional-fairness-tradeoff-local-globol} attains the Pareto-optimal trade-off between \PF and competitiveness.
\end{theorem}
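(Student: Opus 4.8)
<br>

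The plan is to expose, inside one nested family of hard instances, two competing claims on the budget: to stay competitive the algorithm is forced to reserve $\tfrac{B\alpha_K}{\alpha'}$ units for the highest-fluctuation class $K$, while to stay $\beta$-\PF it is forced to reserve $\tfrac{B\alpha_j}{K\beta}$ units for each class $j\in[K-1]$; these reservations are disjoint and cannot jointly exceed $B$, so the competitive ratio $\alpha'$ of any $\beta$-\PF algorithm must obey $\tfrac{\alpha_K}{\alpha'}+\tfrac{1}{K\beta}\sum_{j\in[K-1]}\alpha_j\le 1$, which is exactly $\alpha'\ge\alpha$. I would restrict to deterministic algorithms without loss of generality: a randomized $\beta$-\PF algorithm is a distribution over deterministic $\beta$-\PF ones (the $\beta$-\PF requirement being imposed pathwise), and every inequality I derive is linear in the (random) cumulative allocation, hence survives taking expectations.

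I would build the hard family from \emph{floods}, following the representative/utilization-function approach: a flood of class $i$ up to a cap $c\le\theta_i$ is a run of class-$i$ requests whose valuations increase geometrically from $1$ to $c$, each with rate limit $B$ (the geometric ratio is sent to $1$ in a limiting argument). Let $I^\star$ be the master instance consisting of a full flood of class $K$ (cap $\theta_K$), followed by full floods of classes $1,2,\dots,K-1$ (class $j$ capped at $\theta_j$), in this order. Every instance I use is a prefix of $I^\star$, so a deterministic algorithm is fully described by monotone \emph{utilization functions}: $g_K(v)$ is the total it has allocated to class $K$ once the class-$K$ flood reaches valuation $v$, and $g_j(v)$ ($j<K$) is the total allocated to class $j$ once the class-$j$ flood reaches $v$, given the fixed earlier phases. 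Writing $h_i(v)=\int_{1^-}^{v}\eta\,dg_i(\eta)$ for the induced class-$i$ utility, the budget constraint on $I^\star$ reads $g_K(\theta_K)+\sum_{j\in[K-1]}g_j(\theta_j)\le B$.

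Next I would derive the two reservation bounds. Truncating $I^\star$ just after the class-$K$ flood reaches $v$ gives an instance with $\OPT=vB$ and $\ALG=h_K(v)$, so $\alpha'$-competitiveness yields $h_K(v)\ge vB/\alpha'$ for all $v\in[1,\theta_K]$; the Stieltjes integration-by-parts identity $g_K(c)=\tfrac{h_K(c)}{c}+\int_1^{c}\tfrac{h_K(\eta)}{\eta^2}\,d\eta$ (the main technical step) then forces $g_K(\theta_K)\ge\tfrac{B}{\alpha'}(1+\ln\theta_K)=\tfrac{B\alpha_K}{\alpha'}$. Truncating $I^\star$ just after the class-$j$ flood ($j<K$) reaches $v$ gives an instance whose highest valuation is $\theta_K$ with $B$ units of demand, hence $\OPT=\theta_K B$ while $\ALG\ge h_K(\theta_K)\ge\theta_K B/\alpha'$, so competitiveness is automatically met there; what is not automatic is fairness, and choosing the comparison allocation $\mathbf{w}$ that places all $B$ units on the class-$j$ requests of valuation $v$ kills every term of $\tfrac1K\sum_i U_i(\mathbf{w})/U_i(\mathbf{x})$ except the $j$-th, so $\beta$-\PF forces $U_j(\mathbf{x})=h_j(v)\ge vB/(K\beta)$ for all $v\in[1,\theta_j]$, and the same identity gives $g_j(\theta_j)\ge\tfrac{B\alpha_j}{K\beta}$. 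Feeding both bounds into the budget constraint yields $\tfrac{\alpha_K}{\alpha'}\le 1-\tfrac{1}{K\beta}\sum_{j\in[K-1]}\alpha_j$, i.e. $\alpha'\ge\alpha_K\big(1-\tfrac{\sum_{j\in[K-1]}\alpha_j}{K\beta}\big)^{-1}=\alpha$ (the quantity being finite exactly on the regime $\beta\ge\tfrac1K\sum_{j\in[K]}\alpha_j$ where $\beta$-\PF algorithms exist, by Theorem~\ref{theorem-proportional-fairness-lower-bound}). Combined with Theorem~\ref{theorem-proportional-fairness-trade-off-design}, which exhibits a $\beta$-\PF, $\alpha$-competitive algorithm, this pins down the achievable fairness--competitiveness pairs and shows Algorithm~\ref{alg-proportional-fairness-tradeoff-local-globol} lies on the Pareto frontier.

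I expect the most delicate part to be making the utilization-function reduction airtight: verifying that the geometric grid of ``stop-now'' instances genuinely forms a nested family on which a single pair of monotone functions $(g_K,\{g_j\}_{j<K})$ describes an \emph{arbitrary} algorithm, controlling the grid-to-continuum limit, and proving the integration-by-parts lower bound for general --- not necessarily absolutely continuous --- monotone utilization functions. A lesser but real subtlety is the offline bookkeeping on the later truncations: one must confirm that the class-$K$ high-valuation requests injected in the first phase force $\OPT=\theta_K B$ for every $j$ and $v$, since this is precisely what decouples the (already discharged) competitiveness obligation from the fairness obligation carried by classes $1,\dots,K-1$ and lets the two reservations add up cleanly against the budget.
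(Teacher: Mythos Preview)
Your proposal is correct and follows essentially the same approach as the paper: construct a hard instance that floods class~$K$ first and then the remaining classes, use utilization functions together with the integration-by-parts/Gronwall step to force $g_K(\theta_K)\ge B\alpha_K/\alpha'$ from competitiveness and $g_j(\theta_j)\ge B\alpha_j/(K\beta)$ for $j<K$ from $\beta$-\PF, and finish via the budget constraint. Your presentation is arguably cleaner than the paper's in one respect: you work with a single nested family of prefixes and per-class total utilizations, whereas the paper's proof splits the allocation into ``local'' $\psi_j$ and a ``global'' $\psi^G$, a decomposition tailored to the \SAMT structure rather than to an arbitrary online algorithm.
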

This trade-off indicates that in the fairest algorithm, the competitive ratio is at most \( \sum_{j \in [K]} \alpha_j \). In contrast, the most competitive algorithm completely disregards fairness, resulting in \( \beta \to \infty \). The proof of this theorem can be found in Appendix \ref{appendix-proportional-fairness-alpha-lowerbound}.

\section{\OMcC with ($\gamma,\beta$)-Fairness Guarantees}\label{section-gamma-beta-fairness}
In this section, we focus on developing $(\gamma, \beta)$-fair online algorithms for \OMcC. By Definition \ref{def-gamma-beta-fairness}, any online algorithm obtaining $(\gamma, \beta)$-fair guarantees needs to produce a feasible allocation $\mathbf{x} \in \mathcal{X}$ such that for every other feasible allocation $\mathbf{w}$:
\begin{align}
    \sum_{j=1}^K f\left(U_j(\mathbf{w})\right) \leq \sum_{j=1}^K f\left(\beta\cdot U_j(\mathbf{x})\right), \qquad \forall \mathbf{w}\in \mathcal{X},
\end{align}
where $f$ is defined in Eq. \eqref{eq:f-defintion} and $ \ U_{j}(\mathbf{x}) =\sum\nolimits_{t\in [T]} v_t x_t \cdot  \boldsymbol{1}_{ \{j_t = j\}}, \forall j\in [K] $. 

Based on Definition \ref{def-gamma-beta-fairness}, the value of $\gamma$ corresponds to the different fairness metrics. To demonstrate this better, we discuss some special cases of this parameter below. 

\paragraph{Competitive Ratio ($\gamma = 0$)} \quad Based on the definition of $(\gamma, \beta)$-fairness, when $\gamma = 0$, the fairness metric simplifies to $\frac{\sum_{j \in [K]} U_j(\mathbf{w})}{\sum_{j \in [K]} U_j(\mathbf{x})} \leq \beta$.  Following this definition, it is easy to see that $(0,\beta)$-fairness reduces to $\beta$-competitiveness without group fairness guarantee, meaning that the optimal algorithm may allocate the entire resource to the agents from the class with the highest valuations. In other words, when $\gamma = 0$, the only criterion that matters is the efficiency of the algorithm, and thus $\gamma = 0$ can be interpreted as a scenario where fairness is not considered at all.

\paragraph{Nash Social Welfare ($\gamma = 1$)} \quad In this case, $(\gamma, \beta)$-fairness simplifies to $\log\left(\prod_{j \in [K]} U_j(\mathbf{w})\right) \leq \log\left(\prod_{j \in [K]} \beta \cdot U_j(\mathbf{x})\right)$, or equivalently, $\frac{\left(\prod_{j \in [K]} U_j(\mathbf{w})\right)^{1/K}}{\left(\prod_{j \in [K]} U_j(\mathbf{x})\right)^{1/K}} \leq \beta$. It is equivalent to a $\beta$-approximation of the \NSW. In the offline setting, it is known that maximizing the \NSW strikes a good balance between efficiency and fairness. In particular, for the allocation of divisible goods, maximizing the \NSW implies Pareto optimality, proportional fairness, and envy-freeness \cite{vazirani2007combinatorial}. This means that no individual prefers the allocation of another, thus achieving both equitable distribution and economic efficiency.  

\paragraph{Max-Min Fairness ($\gamma \to \infty$)} \quad As $\gamma\rightarrow \infty $, $(\gamma, \beta)$-fairness reduces to $\max \min_{j \in [K]} \{ U_j(\mathbf{w}) \} \leq \beta \cdot \max \min_{j \in [K]} \{ U_j(\mathbf{x}) \}$, which is a $\beta$-approximation of the standard max-min fairness. In the offline setting, the objective is to maximize the minimum utility across all groups, ensures that the least well-off group receives the maximum possible allocation, while still ensuring that no other group's allocation is reduced. In the \OMcC problem, \MMF is the same as equalizing the utilities among all the agents. Therefore, in the worst-case allocation $\mathbf{w}$ is such that $U_1(\mathbf{w}) = \dots = U_K(\mathbf{w})$. This approach ensures that the classes with lower valuations receive more resources, while those with higher valuations receive less. Since all utilities are equalized, this can be interpreted as the fairest possible allocation.

\subsection{Fair Allocation with  Tight $(\gamma, \beta)$-Fairness Guarantees}
In this section, we demonstrate that Algorithm \ref{alg-proportional-fairness} can achieve tight ($\gamma, \beta$)-fairness guarantees if the threshold functions for each class are appropriately redesigned. These threshold functions are heavily dependent on the fairness function $f(\cdot)$, as defined in Eq. \eqref{eq:f-defintion}, and generally lack a closed-form expression. In the following, we propose a novel representative function-based approach for analytically deriving threshold functions that ensure tight fairness guarantees.

The key to our approach is the design of a set of non-decreasing utilization functions $\psi_j(\cdot): [1, \theta_j] \to [0, b_j]$ for each class $j \in [K]$, which are closely related to the threshold functions $\phi_j(\cdot): [0, b_j] \to [1, \theta_j]$. Specifically, $\psi_j(v_j)$ represents the utilization level when all agents in class $j$ have valuation $v_j$, and satisfies $v_j = \phi_j(\psi_j(v_j))$. This establishes that $\psi_j$ and $\phi_j$ can be regarded as inverse functions of each other. This relationship implies that the allocation can equivalently be viewed through the lens of the utilization function $\psi_j(\cdot)$ or the threshold function $\phi_j(\cdot)$, as both determine the allocation point in the same manner. Our design in Theorem \ref{theorem-gamma-beta-fairness} leverages this connection by constructing the utilization function $\psi_j(\cdot)$ for hard instances and showing that its inverse can yield the appropriate threshold function $\phi_j(\cdot)$ for Algorithm~\ref{alg-proportional-fairness}. In particular, the threshold function exhibits three different forms depending on the range of $\gamma$: $\gamma \in (0, 1)$, $\gamma = 1$, and $\gamma \in (1, \infty)$. In the following theorem, the notation $[j^-] = [K] \setminus \{j\}$ is used.
\begin{theorem}[($\gamma,\beta$)-Fairness Guarantee]\label{theorem-gamma-beta-fairness}
    For any given $\gamma \geq 0$, Algorithm \ref{alg-proportional-fairness} is ($\gamma,\beta$)-fair if the threshold function $ \phi_j $ is set as the inverse of the following utilization function $ \psi_j $:
    \begin{align*}
        \psi_j(v) = \frac{B }{\beta_j}F_j(v;\gamma), \qquad 1\leq v \leq \theta_j, \forall j\in[K],
    \end{align*}
    where $ \{F_j\}_{\forall j} $ are given as follows:
    \begin{itemize}
        \item if $\gamma\in(1,\infty)$, $ F_j(v;\gamma) $ is given by
        \begin{align*}
            & F_j(v;\gamma) = \frac{1}{\sum_{i\in[j^-]} \left(
            \frac{v}{\theta_i}\right)^{\frac{\gamma-1}{\gamma}} + 1} + \frac{\gamma}{\gamma-1} \ln \left(\frac{v^{\frac{\gamma-1}{\gamma}}\left(\sum_{i\in[j^-]} \left(
            \frac{1}{\theta_i}\right)^{\frac{\gamma-1}{\gamma}} + 1\right)}{\sum_{i\in[j^-]} \left(
            \frac{v}{\theta_i}\right)^{\frac{\gamma-1}{\gamma}} + 1}\right), \quad 1\leq v \leq \theta_j;
        \end{align*}
        \item if $\gamma\in(0,1)$,  $ F_j(v;\gamma) $ is given by
        \begin{align*}
           & F_j(v;\gamma) = \frac{1}{(K-1)\cdot v^{\frac{\gamma-1}{\gamma}} + 1} + \frac{\gamma}{\gamma-1} \ln \left(\frac{K \cdot v^{\frac{\gamma-1}{\gamma}}}{(K-1)\cdot v^{\frac{\gamma-1}{\gamma}} + 1}\right), \quad 1\leq v \leq \theta_j;
        \end{align*}
        \item if $\gamma=1$, $ F_j(v;\gamma)$ is given by
            \begin{align*}
            F_j(v;\gamma) = \frac{1+\ln{v}}{K}, \quad 1\leq v \leq \theta_j;
        \end{align*}
    \end{itemize}
    and the $ \beta_j $'s and $\beta$ are obtained by solving the following optimization problem:
    \begin{itemize}
        \item if $\gamma \neq 1$:
        \begin{align}\label{eq:minimax-1}
            \beta = \min\limits_{\{\beta_j\geq 1\}_{\forall j}}\max\limits_{v_j \in \{1, \theta_j\}, \forall j} \left\{\left(\frac{\sum\nolimits_{j\in[K]} \beta_j^{\gamma-1}\cdot v_j^\frac{{1-\gamma}}{\gamma}}{\sum\nolimits_{j\in[K]} v_j^\frac{{1-\gamma}}{\gamma}}\right)^{\frac{1}{\gamma-1}}\right\}, \quad\ {\rm s.t.}\ \sum\nolimits_{j\in[K]} \psi_j(\theta_j) \leq B;
        \end{align}
        \item if $\gamma = 1$:
        \begin{align}\label{eq:minimax-2}
            \beta = \min\limits_{\{\beta_j\geq 1\}_{\forall j}}\left\{\prod\nolimits_{j\in[K]} \beta_j^{1/K}\right\}\, \quad \ {\rm s.t.}\ \sum\nolimits_{j\in[K]} \psi_j(\theta_j) \leq B.
        \end{align}
    \end{itemize}
\end{theorem}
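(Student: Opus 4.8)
To prove Theorem~\ref{theorem-gamma-beta-fairness}, I would work with the utilization-function viewpoint established in the paragraph preceding the statement: since $\phi_j$ and $\psi_j$ are mutual inverses, running Algorithm~\ref{alg-proportional-fairness} with threshold $\phi_j$ on a stream of class-$j$ agents whose highest valuation is $v$ results in exactly $\psi_j(v)$ units of resource being consumed by class $j$ (the pseudo-utility maximization step allocates up to the point where the marginal threshold $\phi_j$ catches up to $v$). The first step is therefore to verify this clean characterization: fix the threshold functions as prescribed, and show that for any arrival instance $I$, the online utility of class $j$ equals $U_j(\ALG(I)) = \int_{1}^{\bar v_j} v\, d\psi_j(v)$ where $\bar v_j$ is the largest class-$j$ valuation that gets accepted, with the budget feasibility $\sum_j \psi_j(\theta_j)\le B$ (imposed as the constraint in~\eqref{eq:minimax-1}--\eqref{eq:minimax-2}) guaranteeing the allocation never exceeds $B$ even in the worst case. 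This reduces the problem to a purely analytic inequality about the functions $F_j$.

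**Reduction to a worst-case ratio.** The second step is to identify the adversary's optimal response. Given that the online algorithm has committed to $\psi_j$, the quantity $\sum_j f(U_j(\mathbf w)) - \sum_j f(\beta U_j(\ALG))$ is maximized by an offline allocation $\mathbf w$ that concentrates resource on whichever classes give the best ``fairness bang per unit,'' and — by the standard threshold-algorithm argument (convexity of $f$ for $\gamma>1$, concavity for $\gamma<1$, plus the fact that each $\psi_j$ is piecewise of the form constant-then-exponential-inverse) — the binding instances are those where every class's realized valuation sits at an endpoint $v_j\in\{1,\theta_j\}$. Here I would substitute the explicit $F_j$ into $U_j(\ALG)$: a direct (if tedious) computation should show that when all class-$j$ agents have valuation $v_j$, the online utility is $U_j = \frac{B}{\beta_j} v_j F_j(v_j;\gamma) \cdot(\text{something})$ — more precisely the $F_j$ were reverse-engineered so that $\beta_j U_j(\ALG)$ equals $U_j(\mathbf w)$ divided by the ratio appearing in~\eqref{eq:minimax-1}. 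Plugging $v_j\in\{1,\theta_j\}$ and using $F_j(1;\gamma)=1/K$ (which one checks from each of the three formulas, noting the $\ln$ terms vanish at $v=1$), the inequality $\sum_j f(U_j(\mathbf w))\le \sum_j f(\beta U_j(\ALG))$ collapses exactly to $\beta\ge\big(\sum_j \beta_j^{\gamma-1}v_j^{(1-\gamma)/\gamma}/\sum_j v_j^{(1-\gamma)/\gamma}\big)^{1/(\gamma-1)}$ for $\gamma\ne1$, and to $\beta\ge\prod_j\beta_j^{1/K}$ for $\gamma=1$ (the latter via the log/AM–GM passage already cited from~\cite{banerjee2022proportionally}). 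Taking the min over $\{\beta_j\}$ of the max over endpoint configurations then yields precisely the optimization problems~\eqref{eq:minimax-1}--\eqref{eq:minimax-2}, and any feasible choice certifies $(\gamma,\beta)$-fairness.

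**Where the three cases come from, and the main obstacle.** The three-way split on $\gamma$ arises in solving the ODE that $\psi_j$ must satisfy: the ``completeness'' condition forcing the per-unit pseudo-utility to track $f'$ gives a separable differential equation whose closed form is a logarithm-plus-power for $\gamma\ne1$ but degenerates to the clean $\tfrac{1+\ln v}{K}$ at $\gamma=1$; within $\gamma\ne1$, the sign of $(\gamma-1)/\gamma$ flips the monotonicity and the role of the cross-class sum $\sum_{i\in[j^-]}(v/\theta_i)^{(\gamma-1)/\gamma}$, which is why $\gamma\in(0,1)$ (where the adversary equalizes utilities, matching the Max-Min intuition, so the sum becomes $(K-1)v^{(\gamma-1)/\gamma}$) and $\gamma\in(1,\infty)$ (where it retains the $\theta_i$ dependence) look different. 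I would present these as a lemma: \emph{for each $\gamma$, $F_j$ is the unique non-decreasing solution on $[1,\theta_j]$ of the relevant fairness-tracking condition with $F_j(1)=1/K$}, then verify by differentiation. The main obstacle I anticipate is the second step — pinning down that the adversary's best allocation genuinely lies at the valuation endpoints and that no ``mixed'' instance (different valuations appearing within one class, or a class receiving a fractional boundary allocation) does better; this requires a careful convexity/exchange argument tailored to the non-smooth, class-indexed threshold structure, and is exactly the place where the usual single-threshold competitive-analysis toolkit does not directly apply. Everything after that is substitution and algebra verifying that the reverse-engineered $F_j$ make the inequality tight.
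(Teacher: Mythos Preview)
Your plan is in the right neighborhood but misses the paper's organizing principle, and the obstacle you flag is not where the real work lies.

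The paper's proof does not try to compare $\sum_j f(U_j(\mathbf w))$ with $\sum_j f(\beta\,U_j(\mathbf x))$ directly. Instead it proceeds in two decoupled steps. First (Proposition~\ref{appendix-proposition-U(w)-upperbound}), H\"older's inequality gives a closed form for the adversary's \emph{best} $U_j(\mathbf w)$: when the maximum valuations are $(v_1,\dots,v_K)$, the optimal $U_j(\mathbf w)=B\,v_j^{1/\gamma}\big/\sum_i v_i^{(1-\gamma)/\gamma}$. Second (Proposition~\ref{proposition-modified-gamma-beta-thresholds}), a direct integration-by-parts computation shows that the online utility satisfies $\Psi_j(v_j)=\psi_j(1)+\int_1^{v_j}\eta\,d\psi_j(\eta)=\tfrac{B}{\beta_j}\cdot v_j\big/\big(\sum_{i\in[j^-]}(v_j/\theta_i)^{(\gamma-1)/\gamma}+1\big)$, and a monotonicity argument in the $\theta_i$'s then yields the clean \emph{per-class} inequality $U_j(\mathbf w)\le \beta_j\,U_j(\mathbf x)$ for \emph{every} $v\in[1,\theta_j]$, not just endpoints. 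The aggregation from $\{\beta_j\}$ to a single $\beta$ (Proposition~\ref{proposition-gamma-beta-min-max}) is then a weighted-power-mean inequality in which the inner maximization over $v_j$ is a ratio of affine functions of $v_j^{(1-\gamma)/\gamma}$ and hence monotone in each coordinate---so the endpoint restriction is immediate, not an obstacle.

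Concretely, two things in your proposal would not go through as written. You never invoke H\"older (or any substitute) to bound the adversary; without it you cannot get the explicit $U_j(\mathbf w)$ that makes the per-class comparison with $\Psi_j$ work. And your check ``$F_j(1;\gamma)=1/K$'' is false for $\gamma>1$: there $F_j(1;\gamma)=\big(\sum_{i\in[j^-]}\theta_i^{(1-\gamma)/\gamma}+1\big)^{-1}$, which depends on the $\theta_i$. The three regimes of $\gamma$ are not separated by an ODE argument in the paper; rather, the $F_j$ are simply verified by differentiating, and the case split reflects whether the worst adversarial configuration sets the \emph{other} classes' valuations to $\theta_i$ (for $\gamma>1$) or to $1$ (for $\gamma<1$) in the per-class bound---this is the monotonicity step in Proposition~\ref{proposition-modified-gamma-beta-thresholds}, and it is where the $[j^-]$-indexed sum versus the $(K-1)v^{(\gamma-1)/\gamma}$ form comes from.
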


The proof of Theorem \ref{theorem-gamma-beta-fairness} is provided in Appendix \ref{appendix-theorem-gamma-beta-fairness}. At a high level, Theorem \ref{theorem-gamma-beta-fairness} seeks to satisfy the inequality $U_j(\mathbf{w}) \leq \beta_j \cdot U_j(\mathbf{x})$ for each $ j \in [K] $. This leads to the result $\frac{1}{1-\gamma} \sum_{j\in[K]} U_j^{1-\gamma}(\mathbf{w}) \leq \frac{1}{1-\gamma} \sum_{j\in[K]} \beta_j^{1-\gamma} \cdot U_j^{1-\gamma}(\mathbf{x})$. The main idea of Eqs. \eqref{eq:minimax-1} and \eqref{eq:minimax-2} is to appropriately choose $\beta_j$ for different classes such that $\frac{1}{1-\gamma} \sum_{j\in[K]} \beta_j^{1-\gamma} \cdot U_j^{1-\gamma}(\mathbf{x}) \leq \frac{1}{1-\gamma} \sum_{j\in[K]} \beta^{1-\gamma} \cdot U_j^{1-\gamma}(\mathbf{x})$, which allows us to derive the fairness guarantee $ \beta $.  Note that the solution to the minimax problem in Eq. \eqref{eq:minimax-1} lies either at the boundary points, where \( v_j \in \{1, \theta_j\} \) for all \( j \in [K] \). In this case, the minimax problem is convex in the \(\beta_j\)'s, making it straightforward to solve. Alternatively, the solution occurs when \(\beta_j = \beta_i\) for all \( i, j \in [K]\). The core complexity of Theorem \ref{theorem-gamma-beta-fairness} lies in solving Eq. \eqref{eq:minimax-1} and \eqref{eq:minimax-2}, which are critical to the theorem's results. To provide a more explicit demonstration of the solution to these equations, we include an example for the simplest case, \(K=2\), covering all possible values of \(\gamma\), in Appendix \ref{gamma-beta-fairness-case-study}.

\subsection{Lower Bound of $ \beta $ and Order-Optimality of Theorem \ref{theorem-gamma-beta-fairness}}
In this section, we first prove a lower bound for the ($\gamma,\beta$)-fairness guarantee of any online algorithm and then discuss the order-optimality of the fairness guarantee of Theorem \ref{theorem-gamma-beta-fairness}.

\begin{theorem}[($\gamma,\beta$)-Fairness Lower-Bound]\label{theorem-gamma-beta-lowerbound}
For any $ \gamma \geq 0$ and $ \epsilon > 0 $, there exists no  ($\gamma,\beta^{*}_\gamma - \epsilon$)-fair algorithm for \OMcC,  where $ \beta^{*}_\gamma $ is the optimal objective value to the following optimization problem:
    \begin{align}
        &\beta^{*}_\gamma = \min_{\beta, \{\lambda_j,\rho_{j} \}_{j \in [K]}} \beta  \nonumber\\
        \label{eq:lower-bound-GBF-constraint}
        & \textrm{s.t.}\quad  \beta \ge B^{-1}\left(\sum_{j \in [K]} \frac{1}{\theta_{j}}  \cdot g_j(\theta_{j})^{\frac{1}{1-\gamma}} + \int_{\eta=1}^{\theta_{j}}  \frac{1}{\eta^{2}} \cdot g_j(\eta)^{\frac{1}{1-\gamma}} \cdot d\eta +  \lambda_j\right), 
    \end{align}
    where $ g_j $ is given by
    \begin{align*}
        &g_{j}(v) = B^{1-\gamma}\cdot \left(j-1 + v^{\frac{\gamma-1}{\gamma}} + \sum_{l=j+1}^{k} \theta_{l}^{\frac{\gamma-1}{\gamma}}\right)^\gamma - \sum_{i=1}^{j-1} \rho_{j}^{1-\gamma} - \sum_{i=j+1}^{k} V_{i}^{1-\gamma},
    \end{align*}
    and $V_{j}$ is defined as follows:
    \begin{align*}
        &V_{j} = 
       \left(B^{1-\gamma}\cdot \left(j-1 + \theta_j^{\frac{\gamma-1}{\gamma}} + \sum_{l=j+1}^{k} \theta_{l}^{\frac{\gamma-1}{\gamma}}\right)^\gamma - \sum_{i=1}^{j-1} \rho_{j}^{1-\gamma} - \sum_{i=j+1}^{k} V_{i}^{1-\gamma} \right)^{\frac{1} {1-\gamma}} + \lambda_{j} \cdot \theta_{j}.
    \end{align*}
\end{theorem}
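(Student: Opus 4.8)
The plan is to establish the lower bound via an adversarial construction that forces any online algorithm to commit to allocations before learning which class will ultimately be the "bottleneck" for the $(\gamma,\beta)$-fairness objective. I would build a family of hard instances indexed by a stopping parameter, following the standard recipe used in online threshold algorithms (e.g. as in \cite{sun2020competitive}) but adapted to the multi-class, utility-fairness setting. Concretely, for each class $j \in [K]$, feed a continuum of arrivals with increasing valuations sweeping from $1$ up to some value $v \in [1,\theta_j]$, with tiny rate limits so that the algorithm's cumulative allocation to class $j$ traces out a utilization curve; the adversary can then truncate class $j$'s stream at any $v$, revealing only that $\theta_j$ is the true fluctuation ratio. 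The key observation is that against such a "ramp" instance, the allocation the online algorithm makes to class $j$ up to the point where valuations reach $\eta$ is captured by some nondecreasing function, and the worst case over truncation points yields the integral terms $\int_{1}^{\theta_j}\eta^{-2} g_j(\eta)^{1/(1-\gamma)}\,d\eta$ together with the boundary term $\theta_j^{-1} g_j(\theta_j)^{1/(1-\gamma)}$ appearing in \eqref{eq:lower-bound-GBF-constraint}.

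First I would set up the instance formally: process classes in order $1,2,\dots,K$, where each class $j$ contributes a ramp of arrivals, and the adversary's "benchmark" allocation $\mathbf{w}$ concentrates class $j$'s budget on its highest-valued arrivals while the online algorithm, ignorant of the future, must have spread allocation across lower values. The offline optimum for the fairness objective, for a guess that class $j$ truncates at valuation $v$, is what drives the definition of $g_j(v)$: the term $B^{1-\gamma}\big(j-1 + v^{(\gamma-1)/\gamma} + \sum_{l>j}\theta_l^{(\gamma-1)/\gamma}\big)^\gamma$ is exactly the optimal value of $\sum_i U_i(\mathbf w)^{1-\gamma}$ under the budget constraint when classes $1,\dots,j-1$ have been "used up" at level $\rho_i$, class $j$ is capped at $v$, and classes $j+1,\dots,K$ still have headroom up to $\theta_l$ — this is a weighted-power-mean optimization solved by Lagrange/KKT, which explains the particular exponents. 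Subtracting $\sum_{i<j}\rho_i^{1-\gamma}$ and $\sum_{i>j}V_i^{1-\gamma}$ isolates the "residual budget" that must be distributed among remaining classes, and $V_j$ is defined recursively precisely so that this bookkeeping is consistent across the telescoping stages. The variables $\lambda_j$ (a slack/reservation term) and $\rho_j$ (the level at which class $j$'s stream terminates) are the adversary's free parameters, and minimizing $\beta$ over them while the constraint must hold for the induced worst case gives the stated optimization.

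The main technical steps, in order, would be: (1) fix the instance template and show that any online algorithm's behavior on it is fully described by the nondecreasing utilization profiles $\{\psi_j\}$ it induces (allocation-as-a-function-of-current-valuation), using that rate limits can be made arbitrarily small so profiles are continuous; (2) for each candidate truncation/stopping configuration, compute $\mathrm{OPT}$ for the $(\gamma,\beta)$-fairness benchmark via the KKT conditions of the power-mean maximization subject to $\sum_j U_j \le B$ — this yields the closed form inside $g_j$; (3) write the constraint that the online allocation be $(\gamma,\beta^*_\gamma-\epsilon)$-fair against every truncation, i.e. $\sum_j f(U_j(\mathbf w)) \le \sum_j f((\beta^*_\gamma-\epsilon)U_j(\mathbf x))$ for the adversarially chosen $\mathbf w$, and integrate/aggregate these constraints over the ramp to surface the integral term $\int_1^{\theta_j}\eta^{-2}g_j(\eta)^{1/(1-\gamma)}d\eta$; (4) take the worst case / optimize over the adversary's parameters $\{\lambda_j,\rho_j\}$ to obtain \eqref{eq:lower-bound-GBF-constraint}, and conclude by letting $\epsilon \to 0$ that no $(\gamma,\beta^*_\gamma)$-fair algorithm exists either (via a limiting/compactness argument on the algorithm's responses). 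Throughout, the $\gamma=1$ case would be handled separately by replacing powers with logarithms, but the structure of the argument is identical.

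The hard part will be step (2)–(3): correctly identifying, for each stage $j$ and each truncation valuation $\eta$, what the offline optimal fair allocation looks like when the earlier classes are "locked" at their termination levels and later classes retain full flexibility, and then showing that aggregating the pointwise fairness constraints over the ramp is tight — i.e., that there is genuinely an instance (not merely a relaxation) achieving the bound. The recursive definition of $V_j$ and the subtraction of the $\rho_j^{1-\gamma}$ and $V_i^{1-\gamma}$ terms must be shown to arise naturally from this locking/unlocking structure rather than being imposed; getting the exponents $\frac{\gamma-1}{\gamma}$ and $\frac{1}{1-\gamma}$ to line up, and verifying the KKT stationarity for the weighted power mean, is where the bulk of the careful computation lives. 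A secondary subtlety is handling the direction of the inequality for $\gamma \in (0,1)$ versus $\gamma \in (1,\infty)$, since $x \mapsto x^{1-\gamma}$ and the map $x \mapsto x^{1/(1-\gamma)}$ flip monotonicity across $\gamma = 1$, so the "worst case" for the adversary changes character; I would organize the proof so that both regimes are covered by the same algebraic identity with sign-tracking, deferring the $\gamma=1$ logarithmic case to a short separate paragraph.
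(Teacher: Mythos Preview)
Your high-level template matches the paper: build a ramp-style hard instance, represent any online algorithm by utilization functions, impose the $(\gamma,\beta)$-fairness inequality at every truncation, and apply a Gr\"onwall argument to extract the integral constraint. But two structural details are inverted in your plan, and following it as written would not recover \eqref{eq:lower-bound-GBF-constraint}.

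First, the instance structure. The paper's hard instance begins with a single valuation-$1$ arrival from \emph{every} class, and then ramps the classes in \emph{decreasing} order $K, K-1,\dots,1$. When class $j$ is ramping at level $v$, classes $j+1,\dots,K$ have \emph{already finished} their ramps (max seen valuation $\theta_l$), while classes $1,\dots,j-1$ have seen \emph{only} their initial valuation-$1$ arrival. That is exactly why the offline benchmark term reads $j-1 + v^{(\gamma-1)/\gamma} + \sum_{l>j}\theta_l^{(\gamma-1)/\gamma}$: the $j-1$ is $\sum_{i<j} 1^{(\gamma-1)/\gamma}$. Your ordering $1,2,\dots,K$ with ``classes $1,\dots,j-1$ used up at level $\rho_i$'' and ``classes $j+1,\dots,K$ still have headroom'' produces a different expression and is inconsistent with the stated $g_j$. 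The initial all-classes-at-$1$ batch is not optional; without it, classes $i<j$ have no arrivals at all at stage $j$, and the $(\gamma,\beta)$-fairness constraint becomes vacuous for them.

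Second, the role of $\{\lambda_j,\rho_j\}$. These are not adversary parameters; they encode the \emph{algorithm's} behavior on the fixed hard instance. In the paper $\rho_j=\beta\cdot V_j(1)$ is (a scaling of) the utility the algorithm gives class $j$ from its lone valuation-$1$ arrival, and $\lambda_j$ is the slack by which the algorithm's final allocation $\zeta_j(\theta_j)$ exceeds the Gr\"onwall lower bound. The adversary's freedom is already exhausted by truncating the ramp at each $v$; the minimization over $\{\lambda_j,\rho_j\}$ is the search for the \emph{best} online algorithm, and its optimal value is therefore a lower bound on $\beta$. Your reading of $\rho_j$ as ``the level at which class $j$'s stream terminates'' and of the optimization as adversarial would flip the direction and would not connect to the theorem's $\min$.

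Once the instance and the parameter roles are corrected, the remaining steps you outline---H\"older/KKT for the offline power-mean benchmark, the Gr\"onwall step producing $\int_1^{\theta_j}\eta^{-2}g_j(\eta)^{1/(1-\gamma)}\,d\eta$, and closing via the budget constraint $\sum_j\zeta_j(\theta_j)\le B$---line up with the paper's argument.
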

The proof of this theorem is provided in Appendix \ref{appendix-gamma-beta-lowerbound}. At a high level, for the optimization problem in Eq. \eqref{eq:lower-bound-GBF-constraint}, the term $V_j $ depends solely on the decision variables $\lambda_j$ and $\rho_j$. Consequently, the function $g_j(v)$ also depends exclusively on these decision variables. Hence, the constraint defined in Eq. \eqref{eq:lower-bound-GBF-constraint} is determined entirely by  $\rho_j$ and $\lambda_j$. Therefore, by identifying the optimal values for these decision variables, we can derive the optimal value of $\beta_\gamma^*$. Figure \ref{figure-lower-bound-MTF-BTF} illustrates the numerically computed lower bound $ \beta_{\gamma}^* $ for $K = 2$. This figure highlights that even as $\gamma$ shifts, \U-\PRB with ($\gamma,\beta$)-fairness guarantee maintains fairness close to the theoretical lower bound, providing strong evidence of the algorithm's near-optimal performance.

In general, the optimization problem in Eq. \eqref{eq:lower-bound-GBF-constraint} cannot be solved analytically, which complicates the analysis of the asymptotic behavior of the optimal fairness guarantee. However, we prove that for the simple case of $K=2$, Algorithm \ref{alg-proportional-fairness} is asymptotically optimal.

\begin{corollary}[Order-Optimality of \U-\PRB with ($\gamma,\beta$)-fairness guarantee]\label{theorem-gamma-beta-order-optimal} 
    Algorithm \ref{alg-proportional-fairness}, using the threshold functions from Theorem \ref{theorem-gamma-beta-fairness}, achieves order-optimal performance for all $\gamma$ when $K=2$. Specifically: 
    \begin{itemize} 
        \item For $\gamma < 1$, any $(\gamma,\beta)$-fair online algorithm must have $ \beta = \Omega(\alpha_2)$, and the fairness guarantee of Algorithm \ref{alg-proportional-fairness} is $ \mathcal{O}(\alpha_2)$. 
        \item For $\gamma \approx 1$, any $(\gamma,\beta)$-fair online algorithm must have $\beta = \Omega(\sqrt{\alpha_2})$, and the fairness guarantee of Algorithm \ref{alg-proportional-fairness} is $ \mathcal{O}(\sqrt{\alpha_2})$. 
        \item For $\gamma > 1$, any $(\gamma,\beta)$-fair online algorithm must have  $\beta = \Omega(\alpha_1)$, and the fairness guarantee of Algorithm \ref{alg-proportional-fairness} is $ \mathcal{O}(\alpha_1)$.
    \end{itemize} 
    Recall that we have $\alpha_j = 1+\ln\theta_j$ for all $ j \in [K] $.
\end{corollary}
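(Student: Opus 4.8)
The plan is to prove each bullet in two halves: the upper bound on the guarantee of Algorithm~\ref{alg-proportional-fairness} by specializing Theorem~\ref{theorem-gamma-beta-fairness} to $K=2$, and the matching lower bound by specializing the optimization program of Theorem~\ref{theorem-gamma-beta-lowerbound} to $K=2$ (the restriction to $K=2$ is exactly what makes both programs tractable in closed form). I would treat $\theta_1$ as a fixed constant (so $\alpha_1=\Theta(1)$) and let $\theta_2\to\infty$, so that the three regimes of $\gamma$ correspond to three growth rates of the optimal $\beta$ in $\alpha_2$: linear, square-root, and bounded. As a sanity check, the limiting case $\gamma=0$ sits in the first regime and should recover the known $\Theta(\alpha_2)=\Theta(1+\ln\theta_2)$ rate for pure competitiveness.

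For the upper bound, the case $\gamma=1$ is fully explicit: $F_j(\theta_j;1)=\alpha_j/2$, so the feasibility constraint $\sum_j\psi_j(\theta_j)\le B$ in \eqref{eq:minimax-2} reads $\alpha_1/\beta_1+\alpha_2/\beta_2\le 2$, and minimizing $\sqrt{\beta_1\beta_2}$ over this set gives, by a one-line KKT argument, $\beta_j=\alpha_j$ and hence $\beta=\sqrt{\alpha_1\alpha_2}=\Theta(\sqrt{\alpha_2})$. For $\gamma\ne1$ I would first pin down the asymptotics of $F_j(\theta_j;\gamma)$ as $\theta_2\to\infty$: when $\gamma<1$ the ($K=2$) expression expands to $F_j(\theta_j;\gamma)=\ln\theta_j+O(1)=\Theta(\alpha_j)$ for both classes; when $\gamma>1$ one checks $F_1(\theta_1;\gamma)\to\alpha_1$ while $F_2(\theta_2;\gamma)\to c(\theta_1,\gamma)=\Theta(1)$, because the factor $(v/\theta_1)^{(\gamma-1)/\gamma}$ in the denominator cancels the growing numerator $v^{(\gamma-1)/\gamma}$ at $v=\theta_2$. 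Substituting these into the constraint and into the minimax objective of \eqref{eq:minimax-1}, and using the structural fact noted after Theorem~\ref{theorem-gamma-beta-fairness} that the inner maximum is attained either at the corners $v_j\in\{1,\theta_j\}$ (where the problem is convex in the $\beta_j$) or when all $\beta_j$ coincide — so that the objective essentially behaves like $\max_j\beta_j$ — reduces the problem to ``minimize $\max_j\beta_j$ subject to $\sum_j F_j(\theta_j;\gamma)/\beta_j\lesssim 1$''. This yields $\beta=\Theta(\alpha_1+\alpha_2)=\Theta(\alpha_2)$ for $\gamma<1$ and $\beta=\Theta(\alpha_1+c)=\Theta(\alpha_1)$ for $\gamma>1$, the asymmetry being precisely that class~2 contributes only $\Theta(1)$ to the constraint once $\gamma>1$. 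The $K=2$ case study in Appendix~\ref{gamma-beta-fairness-case-study} already carries out these substitutions and can be invoked directly.

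For the lower bound I would set $k=2$ in Theorem~\ref{theorem-gamma-beta-lowerbound}, which collapses the recursions into closed form: $g_2(v)=B^{1-\gamma}(1+v^{(\gamma-1)/\gamma})^\gamma-\rho_2^{1-\gamma}$, $g_1(v)=B^{1-\gamma}(v^{(\gamma-1)/\gamma}+\theta_2^{(\gamma-1)/\gamma})^\gamma-V_2^{1-\gamma}$, with $V_1,V_2$ read off accordingly, so that \eqref{eq:lower-bound-GBF-constraint} becomes a single scalar lower bound on $\beta$ parameterized by $\lambda_1,\lambda_2,\rho_1,\rho_2$, and $\beta^{*}_\gamma$ equals the minimum of its right-hand side over those four variables. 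The key observation is that the term $\int_{1}^{\theta_j}\eta^{-2}g_j(\eta)^{1/(1-\gamma)}d\eta$ is what drives the $\ln\theta$ growth: for $\gamma<1$ the integrand is $\Theta(\eta^{-1})$, forcing $\Omega(\ln\theta_2)=\Omega(\alpha_2)$ no matter how $\lambda_j,\rho_j$ are chosen; for $\gamma>1$ the same integral is $O(1)$ uniformly and the binding contribution is instead the class-1 term, giving $\Omega(\alpha_1)$; and for $\gamma\to1$ the two competing effects balance and leave $\Omega(\sqrt{\alpha_2})$. I would make this rigorous by dropping the nonnegative $\lambda_j$ (legitimate for a lower bound) and exploiting monotonicity of $g_j$ in $\rho_j$ and $V_j$ to pin down their most favorable feasible values, then matching the resulting constants against the upper bound to conclude order-optimality.

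The main obstacle is the transition regime $\gamma\approx1$. Both programs contain exponents such as $\tfrac{\gamma}{\gamma-1}$ in $F_j$ and $\tfrac{1}{1-\gamma}$ in $g_j$ and $V_j$ that diverge as $\gamma\to1$, and the correct rate $\sqrt{\alpha_2}$ emerges only after a delicate cancellation between these diverging exponents and the logarithmic factors they multiply — the $\gamma=1$ formulas ($F_j=\tfrac{1+\ln v}{K}$ and $f=\log$) are the finite limits of expressions that are individually singular. Establishing this limit carefully, and in particular showing that the $\mathcal{O}(\cdot)$ and $\Omega(\cdot)$ bounds match in \emph{order} near $\gamma=1$ rather than merely being some power of $\alpha_2$, is where the bulk of the work lies; by contrast, the regimes $\gamma<1$ and $\gamma>1$ bounded away from $1$ are comparatively routine once the asymptotics of $F_j$ and $g_j$ are in hand. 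A secondary difficulty is that neither program admits a closed-form solution even for $K=2$, so the lower-bound direction must commit to an explicit (possibly suboptimal) choice of $\lambda_j,\rho_j$ that is nonetheless good enough to certify the claimed $\Omega(\cdot)$.
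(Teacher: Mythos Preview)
Your upper-bound treatment---specializing Theorem~\ref{theorem-gamma-beta-fairness} to $K=2$ and reading off the asymptotics of $F_j(\theta_j;\gamma)$ as $\theta_2\to\infty$---matches the paper's route, including the explicit $\gamma=1$ computation yielding $\beta=\sqrt{\alpha_1\alpha_2}$.

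The lower-bound side is where you diverge. You propose to specialize the optimization program of Theorem~\ref{theorem-gamma-beta-lowerbound} to $K=2$ and extract the $\Omega(\cdot)$ rates by bounding the integrals $\int_1^{\theta_j}\eta^{-2}g_j(\eta)^{1/(1-\gamma)}\,d\eta$ from below uniformly over $\lambda_j,\rho_j$. The paper does not invoke Theorem~\ref{theorem-gamma-beta-lowerbound} here at all. Instead, for each regime of $\gamma$ it builds a fresh hard instance and applies Gronwall directly to the resulting necessary condition on the utilization functions: for $\gamma>1$, the power-mean inequality replaces the $(1-\gamma)$-mean of the $U_i(\mathbf{x})$ by their arithmetic mean, and an instance with both classes rising in lockstep to $\theta_1$ gives $\beta\ge\alpha_1$; for $\gamma<1$, a converse power-mean bound plus a single class-$1$ arrival followed by class-$2$ arrivals up to $\theta_2$ gives $\beta\ge 2^{\gamma/(\gamma-1)}\alpha_2$; for $\gamma=1$, a three-phase instance combined with AM--GM gives $\beta\ge\sqrt{\alpha_1\alpha_2}$. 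Each case is a page of elementary calculus.

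Your route is defensible in principle---Theorem~\ref{theorem-gamma-beta-lowerbound} is a valid lower bound and lower-bounding its optimum suffices---but it is strictly more work, since you must control a minimum over the free variables. Note in particular that $\lambda_j$ is not constrained to be nonnegative in \eqref{eq:lower-bound-GBF-constraint}, so ``dropping the nonnegative $\lambda_j$'' does not obviously go the right direction; the effective constraints on $\lambda_j$ enter only through the $V_j$ recursion and would have to be made explicit before you could discard terms. More to the point, the $\gamma\approx 1$ transition you flag as the main obstacle is in fact no harder than the other two regimes in the paper's treatment: by working directly at $\gamma=1$ with the \NSW formulation, the paper sidesteps the $\tfrac{1}{1-\gamma}$ and $\tfrac{\gamma}{\gamma-1}$ singularities entirely, so no delicate cancellation is required.
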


The intuition behind these results is as follows. For $\gamma < 1$, the optimal solution prioritizes the highest valuation class, leading to fairness of ${\Omega}(\alpha_2)$. For $\gamma > 1$, the optimal solution favors lower-value classes, achieving the fairness of ${\Omega}(\alpha_1)$. When $\gamma \approx 1$, resources are more evenly distributed, with a less pronounced increase in fairness achieving the fairness of $\Omega(\sqrt{\alpha_2})$. The proof of Corollary \ref{theorem-gamma-beta-order-optimal} is given in Appendix \ref{appendix-theorem-gamma-beta-order-optimal}.

\begin{figure}[t]
    \centering
    \begin{minipage}{0.46\textwidth}
        \centering
        \includegraphics[trim=0.5cm 0cm 0.5cm 0.5cm,clip,width=\textwidth]{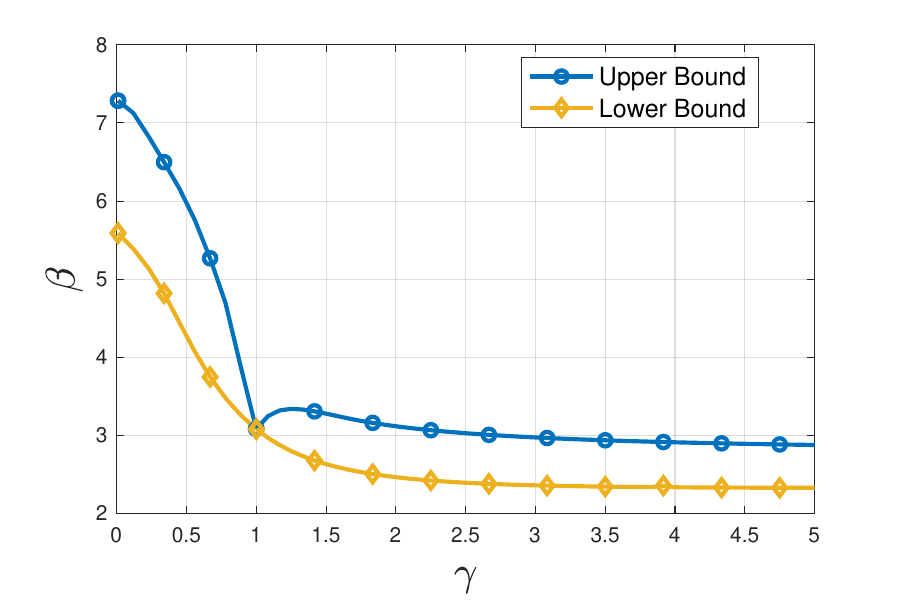}
        \caption{The fairness guarantee of \U-\PRB with ($\gamma,\beta$)-fairness vs lower bound; $\theta_1 = 2$, $\theta_2 = 100$.}
        \label{figure-lower-bound-MTF-BTF}
    \end{minipage}
    \hspace{0.5cm}
    \begin{minipage}{0.46\textwidth}
        \centering
        \includegraphics[trim=0.5cm 0cm 0.5cm 0.5cm,clip,width=\textwidth]{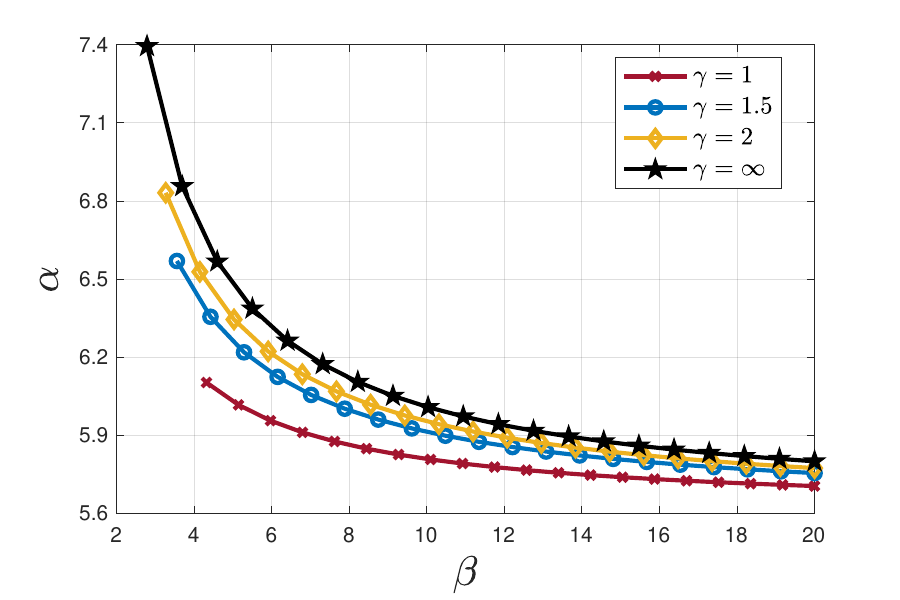}
        \caption{The trade-off between fairness and competitiveness for different values of $\gamma$; $\theta_1 = 2$, $\theta_2 = 100$.}
        \label{figure-trade-off-gamma-beta-alpha}
    \end{minipage}
\end{figure}

\subsection{Efficiency-Fairness Trade-off: $ \alpha $-Competitiveness vs ($\gamma, \beta$)-Fairness}
We are now ready to introduce a family of algorithms that demonstrate the trade-off between fairness and efficiency for  \(\gamma \geq 1\).\footnote{We focus on this domain because when \(\gamma\) is less than one, the definition of \((\gamma, \beta)\)-fairness inherently emphasizes efficiency—specifically, allocating more resources to classes with higher valuations. As a result, discussing the trade-off between fairness and efficiency in this region becomes less meaningful. } We use the same structure of \SAMT in Algorithm \ref{alg-proportional-fairness-tradeoff-local-globol} and show in Theorem \ref{theorem-gamma-beta-fairness-trade-off-design} below that it is possible to redesign the local threshold function for each class to guarantee ($\gamma,\beta$)-fairness and the global threshold function to ensure $\alpha$-competitiveness. 

\begin{theorem}[$\alpha$-Competitiveness vs ($\gamma,\beta$)-Fairness]\label{theorem-gamma-beta-fairness-trade-off-design}
    For a given $\beta $ and $\gamma \ge 1$, Algorithm \ref{alg-proportional-fairness-tradeoff-local-globol} is $\alpha$-competitive and $(\gamma,\beta)$-fair for the \OMcC problem  if the threshold functions $\{\phi_{j}\}_{j \in [k]}$ and $\phi^{G}$ are respectively designed as the inverse of the utilization functions $\{\psi_{j}\}_{j \in [k]}$ and $\psi^{G}$ given as follows: 
    \begin{align*}
        &\psi_j(v) = \frac{B}{\beta_j}F_j(v;\gamma), \qquad 1\leq v \leq \theta_j, \forall j\in[K],\\
        &\psi^G(v) = B\cdot \frac{1 - \sum_{j\in[K-1]} \frac{F_j(\theta_i;\gamma) - F_j(1;\gamma)}{\beta_j}}{1+\ln\theta_K}(1 + \ln v) - \frac{B}{\beta_K}F_K(v;\gamma) - \sum\nolimits_{j\in[K-1]} \frac{B}{\beta_j}F_j(1;\gamma),
    \end{align*}
    where $ \{F_j\}_{\forall j} $ is designed according to Theorem \ref{theorem-gamma-beta-fairness}; $\alpha$ and $ \{\beta_j\}_{\forall j}$ are the solutions to the following optimization problem
    \begin{subequations}\label{eq:alpha-beta-trade-off}
        \begin{align}
        \alpha = & \max_{\{\beta_j\geq 1\}_{\forall j}} \left\{\frac{1+\ln\theta_K}{1 - \sum_{j\in[K-1]} \frac{F_j(\theta_j;\gamma) - F_j(1;\gamma)}{\beta_j}}\right\}, \\
         & \ {\rm s.t.}\ \sum\nolimits_{j\in[K]} \psi_j(\theta_j) \leq B, \qquad   \beta = \begin{cases}
            \max\limits_{v_j \in \{1,\theta_{j}\}, \forall j} \left\{\left(\frac{\sum\nolimits_{j\in[K]} \beta_j^{\gamma-1}\cdot v_j^\frac{{1-\gamma}}{\gamma}}{\sum\nolimits_{j\in[K]} v_j^\frac{{1-\gamma}}{\gamma}}\right)^{\frac{1}{\gamma-1}}\right\}, & \text{if } \gamma >1,  \\
            \\
            \prod\nolimits_{j\in[K]} \beta_j^{1/K}, & \text{if }  \gamma =1.
        \end{cases}
        \end{align}
    \end{subequations}
\end{theorem}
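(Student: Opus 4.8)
The plan is to prove the two claimed properties—$(\gamma,\beta)$-fairness and $\alpha$-competitiveness—separately, reusing as much of the analysis of Theorems \ref{theorem-gamma-beta-fairness} and \ref{theorem-proportional-fairness-trade-off-design} as possible. The key structural observation is that Algorithm \ref{alg-proportional-fairness-tradeoff-local-globol} makes two nested allocation decisions per agent: the local one $x_t^1$ driven by $\phi_{j_t}$, and the global one $x_t^2$ driven by $\phi^G$. The total allocation to class $j$ is at least $\sum_t x_t^1 \mathbb{1}_{\{j_t=j\}}$, so the fairness guarantee only ever improves relative to running the local thresholds alone; conversely, the total allocation is at least $\sum_t x_t^2$, so the efficiency guarantee is controlled by $\phi^G$. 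First I would verify feasibility: show $\sum_j b_j + (B - \sum_j b_j) = B$ where here $b_j := \psi_j(\theta_j) = \frac{B}{\beta_j}F_j(\theta_j;\gamma)$ (note $F_j(1;\gamma)$ need not be zero, so one must be careful that the local utilization functions $\psi_j$ as written start at $\psi_j(1) = \frac{B}{\beta_j}F_j(1;\gamma) \geq 0$ and the ``reserved'' amount per class is really $\psi_j(\theta_j) - \psi_j(1)$; this bookkeeping is exactly why the $\psi^G$ formula subtracts off the $\sum_{j\in[K-1]}\frac{B}{\beta_j}F_j(1;\gamma)$ terms), and that the constraint $\sum_j \psi_j(\theta_j)\le B$ in \eqref{eq:alpha-beta-trade-off} guarantees $\psi^G$ is well-defined with a nonnegative domain $[0, B-\sum_j b_j]$ and that $\psi^G$ maps onto $[1,\theta_K]$ monotonically.

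For the fairness part, I would argue that the local component of Algorithm \ref{alg-proportional-fairness-tradeoff-local-globol} is, in isolation, exactly Algorithm \ref{alg-proportional-fairness} run with the threshold functions $\phi_j$ (inverses of $\psi_j = \frac{B}{\beta_j}F_j(\cdot;\gamma)$) prescribed by Theorem \ref{theorem-gamma-beta-fairness}. Hence by that theorem the per-class inequality $U_j(\mathbf{w}) \le \beta_j\cdot U_j(\mathbf{x}^1) \le \beta_j\cdot U_j(\mathbf{x})$ holds for every $j$ (monotonicity in the allocation because $v_t\ge 1$), and the aggregation step—choosing $\beta = \big(\frac{\sum_j \beta_j^{\gamma-1}v_j^{(1-\gamma)/\gamma}}{\sum_j v_j^{(1-\gamma)/\gamma}}\big)^{1/(\gamma-1)}$ for $\gamma>1$ and $\beta = \prod_j\beta_j^{1/K}$ for $\gamma=1$—converts $\{U_j(\mathbf{w})\le\beta_j U_j(\mathbf{x})\}_j$ into $\sum_j f(U_j(\mathbf{w}))\le\sum_j f(\beta U_j(\mathbf{x}))$ verbatim as in the proof of Theorem \ref{theorem-gamma-beta-fairness}; the only new point is that $\mathbf{x}$ there should be replaced by the (coordinatewise larger) allocation of the combined algorithm, which only helps. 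The one subtlety to check is that the pseudo-utility maximization for $x_t^1$ over $[0,r_t]$ in Algorithm \ref{alg-proportional-fairness-tradeoff-local-globol} behaves identically to the one in Algorithm \ref{alg-proportional-fairness}—it does, since $\phi^G$ does not enter that line.

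For the competitiveness part, I would follow the template of Theorem \ref{theorem-proportional-fairness-trade-off-design}'s proof: bound $\OPT(I)$ by splitting the offline optimum's use of resource into the part that ``fits under'' the global threshold and the reserved local buffers, and lower-bound $\ALG(I)$ by the pseudo-utility accumulated along $\phi^G$ plus the guaranteed local value. The design of $\psi^G$ is reverse-engineered so that after the $\sum_{j\in[K-1]}$ reserved buffers are subtracted, the remaining set-aside resource $B - \sum_j b_j$ is filled according to an exponential threshold with rate $\frac{1+\ln\theta_K}{1-\sum_{j\in[K-1]}(F_j(\theta_j;\gamma)-F_j(1;\gamma))/\beta_j}\cdot\frac{1}{B}$, which is precisely the standard optimal-threshold ODE calibrated to the reduced budget; solving that ODE / verifying the competitive-ratio inequality pointwise over all ``critical'' instances (those where the adversary drives utilization to an arbitrary level and then stops, or reveals a single high value) yields the ratio $\alpha = \frac{1+\ln\theta_K}{1 - \sum_{j\in[K-1]}(F_j(\theta_j;\gamma)-F_j(1;\gamma))/\beta_j}$. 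The main obstacle I anticipate is precisely this competitiveness calculation: unlike in Theorem \ref{theorem-proportional-fairness-trade-off-design}, where the reserved fractions $b_j = \frac{B\alpha_j}{K\beta}$ are clean linear expressions, here the reserved fractions involve $F_j(\theta_j;\gamma) - F_j(1;\gamma)$, which are messy closed forms depending on $\gamma$ and all the $\theta_i$'s, and one must show that the global ODE's solution respects the boundary conditions ($\phi^G(0^+)$ continuity at the junction $u = \frac{B-\sum_j b_j}{\alpha_K}$, $\phi^G$ reaching exactly $\theta_K$ at the right endpoint) so that no resource is wasted and the ratio is tight—this requires carefully tracking how the $\sum_{j\in[K-1]}\frac{B}{\beta_j}F_j(1;\gamma)$ correction interacts with the domain of $\psi^G$, and confirming that the coupled system \eqref{eq:alpha-beta-trade-off} (which simultaneously pins down $\alpha$, $\beta$, and all $\beta_j$) is consistent, i.e., admits a solution with all $\beta_j\ge 1$ whenever $\beta$ is at least the Theorem \ref{theorem-gamma-beta-fairness} optimum.
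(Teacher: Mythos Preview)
Your proposal is correct and follows essentially the same route as the paper: reuse Theorem \ref{theorem-gamma-beta-fairness} verbatim for the fairness part (the global allocation only adds to each $U_j(\mathbf{x})$), and for competitiveness lower-bound $\ALG(I)\ge \sum_j\Psi_j(v_j)+\Psi^G(\max_j v_j)$, upper-bound $\OPT(I)\le B\max_j v_j$, and identify the worst case as $v_i=1$ for $i<K$ with $v_K$ the running maximum. One clarification on the ``obstacle'' you flag: $\psi^G$ is \emph{not} a pure logarithmic/exponential threshold on the reduced budget—the extra $-\frac{B}{\beta_K}F_K(v;\gamma)$ term is there precisely so that $\psi_K(v)+\psi^G(v)=C(1+\ln v)-D$ with $C=B\big(1-\sum_{j\in[K-1]}\frac{F_j(\theta_j)-F_j(1)}{\beta_j}\big)/\alpha_K$, i.e.\ the class-$K$ local contribution cancels exactly against that subtraction, and the $-\sum_{j\in[K-1]}\frac{B}{\beta_j}F_j(1)$ constant cancels against the $\Psi_j(1)$ terms from the other classes; once you see these two cancellations the competitive-ratio calculation is immediate and no ODE boundary-matching is needed.
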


To determine the competitive ratio $ \alpha $ in the theorem above, we need to solve the optimization problem in Eq. \eqref{eq:alpha-beta-trade-off}. This can be done efficiently, as both the objective function and the constraint are convex in the $\beta_j$ variables. The proof of Theorem \ref{theorem-gamma-beta-fairness-trade-off-design} can be found in Appendix \ref{appendix-theorem-gamma-beta-fairness-trade-off-design}. It is worth noting the optimization problem in Eq. \eqref{eq:alpha-beta-trade-off} always admits a solution for any given $\beta$ that is greater than the minimum fairness guarantee derived in Theorem \ref{theorem-gamma-beta-fairness}. For more details, please refer to the proof of Theorem \ref{theorem-gamma-beta-fairness} and Proposition \ref{proposition-modified-gamma-beta-thresholds} in the appendix.

Figure \ref{figure-trade-off-gamma-beta-alpha} presents the trade-off between fairness and competitiveness  described in Theorem \ref{theorem-gamma-beta-fairness-trade-off-design} for $K=2$, evaluated across different $\gamma$ values. As expected, the results demonstrate that for each $\gamma$, an increase in fairness leads to a corresponding decrease in competitiveness.

\section{Numerical Results}\label{section-numerical-result}
In this section, we model the utility-based \TTL caching protocol using \OMcC framework (see the illustrative example in Section~\ref{sec-problem-formulation-sub-an-illustrative-example}) and perform numerical experiments based on the Wikipedia Clickstream dataset \cite{wikimedia_analytics}. We evaluate the performance of \Q-\PRB using the \GFQ metric and evaluate \U-\PRB using the $\beta$-\PF and ($\gamma,\beta$)-fairness metrics. The results are compared against both the offline optimal solution and the optimal online algorithm. We also empirically evaluate the performance of \SAMT to investigate the trade-off between fairness and competitiveness. 

\subsection*{Experiment Setup.} 
Based on the \TTL protocol described in \cite{dehghan2019utility}, the objective is to maximize the cumulative utility over all files, represented as $\sum_{t \in [T]} \mathcal{U}_t(x_t)$, by optimizing the hitting probability $x_t \in [0,1]$ for the file arriving at time $t$. According to \cite{dehghan2019utility}, when $x_t = 0$, the file is not cached, while $x_t = 1$ corresponds to caching the file indefinitely. An average buffer occupancy constraint, $\sum_{t \in [T]} x_t \leq B$, where $B$ denotes the cache size, is imposed to limit the total caching capacity. 

This problem is analogous to the \OMcC problem when the utility function $\mathcal{U}_t$ is assumed to be linear. For our experiments, we employed the Wikipedia Clickstream dataset, which categorizes referral pages on Wikipedia by language and contains over half a billion data points. This dataset includes the language of each page and the number of its referrers across the internet, providing a measure of the page's impression level. We interpret this impression level as the valuation of each data point. This interpretation aligns with real-world scenarios, where highly linked pages are more popular and, consequently, should remain in the cache for longer durations. 

To ensure fairness, a portion of the cache buffer must be allocated to pages from different languages. We model cached page sizes using an exponential distribution with a mean of 2.6 KB and set the total cache size to 100 MB. Our first experiment considers three languages—English, French, and Japanese—as distinct classes, aiming to evaluate utility, resource allocation, and competitive ratio performance. The arrival sequence is modeled by sampling with probabilities of 0.65 for English, 0.25 for French, and 0.10 for Japanese, repeated 200 times, with mean results reported. After removing extreme cases, normalized valuations are set as $\theta_1 = 116$, $\theta_2 = 178$, and $\theta_3 = 253.9$.

\begin{figure}[t]
    \centering
    \begin{subfigure}[b]{\textwidth}
        \centering
        \includegraphics[width=0.9\textwidth]{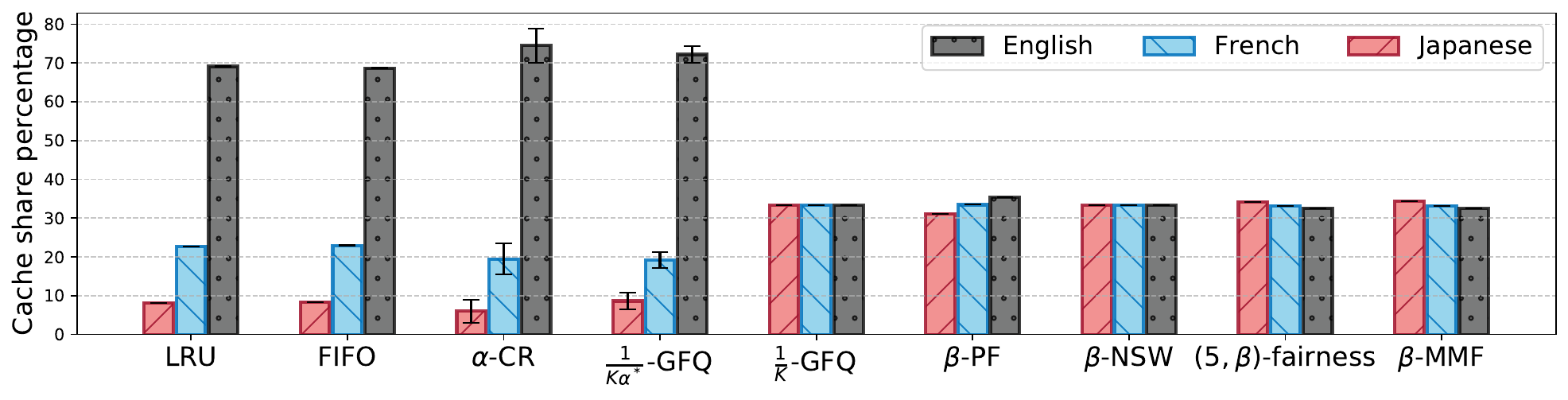}
        \caption{Resource share of each class.}
        \label{fig:sub1}
    \end{subfigure}
    \begin{subfigure}[b]{\textwidth}
        \centering
        \includegraphics[width=0.9\textwidth]{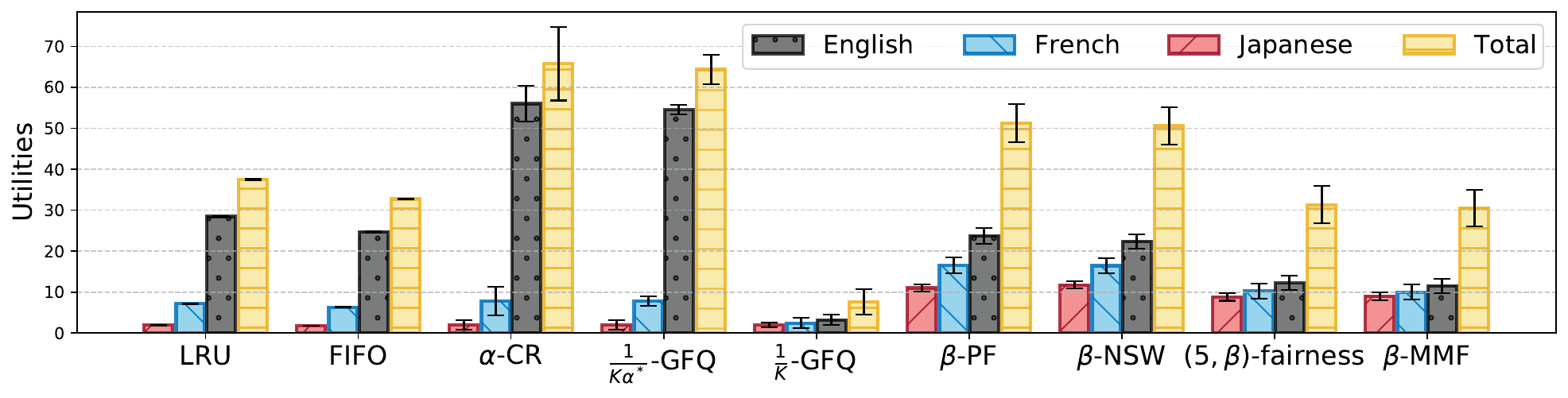}
        \caption{Utility of each class.}
        \label{fig:sub2}
    \end{subfigure}
    \caption{Results of the first experiment; $\theta_1 = 116$, $\theta_2 = 178$ and $\theta_3 = 253.9$.}
    \vspace{-10pt}
    \label{figure-utilites-resource-shares}
\end{figure}

We denote the \TTL optimal online algorithm without fairness constraints as $\alpha$-\CR and compare its performance to several fairness-driven algorithms. These include the \GFQ algorithm with fairness parameters $m_j = 1/3$ and $m_j = 1/(3\alpha_0^*)$ for all classes, where $\alpha_0^*$ is defined in Theorem \ref{theorem-LowerBound-multi-Class}. These variants are referred to as $\frac{1}{K}$-\GFQ and $\frac{1}{K\alpha^*}$-\GFQ, respectively. Additionally, we consider the \U-\PRB algorithm using the proportional fairness metric ($\beta$-\PF) and the \U-\PRB algorithm incorporating the $(\gamma, \beta)$-fairness measure. The latter employs $\gamma$ values of 1, 5, and $\infty$, corresponding to $\beta$-\NSW, $(5, \beta)$-fairness, and $\beta$-\MMF, respectively. Additionally, we evaluate alternative caching protocols, including hindsight \LRU and \FIFO algorithms, to compare with our online algorithms, demonstrating that our algorithms outperform these protocols in terms of fairness, even in a hindsight scenario. Unlike \TTL, these protocols cache all files for a fixed duration. As demonstrated by \cite{dehghan2019utility}, these approaches can also be formulated within a utility-based optimization framework

In the second experiment, we consider two arrival classes, with the goal of examining the impact of the maximum possible valuation for each class. Specifically, we fix the first language class with $\theta_1 = 48.5$ and vary the second language class one at a time. We then compare the empirical competitive ratio for each algorithm. Each experiment is repeated 200 times per algorithm and for each $\theta_2$ value, with the mean performance reported alongside a 95\% confidence interval. The final experiment aims to evaluate the impact of the number of classes on the competitive ratio. We fix English as the maximum valuation across all arrivals, then gradually increase the number of classes $K$ from 3 to 20. As before, each experiment is repeated 200 times for each class configuration, with the mean results reported alongside the 95\% confidence intervals.

\begin{figure}[t]
    \centering
    \begin{subfigure}[b]{0.48 \textwidth}
        \centering
        \includegraphics[width=\textwidth]{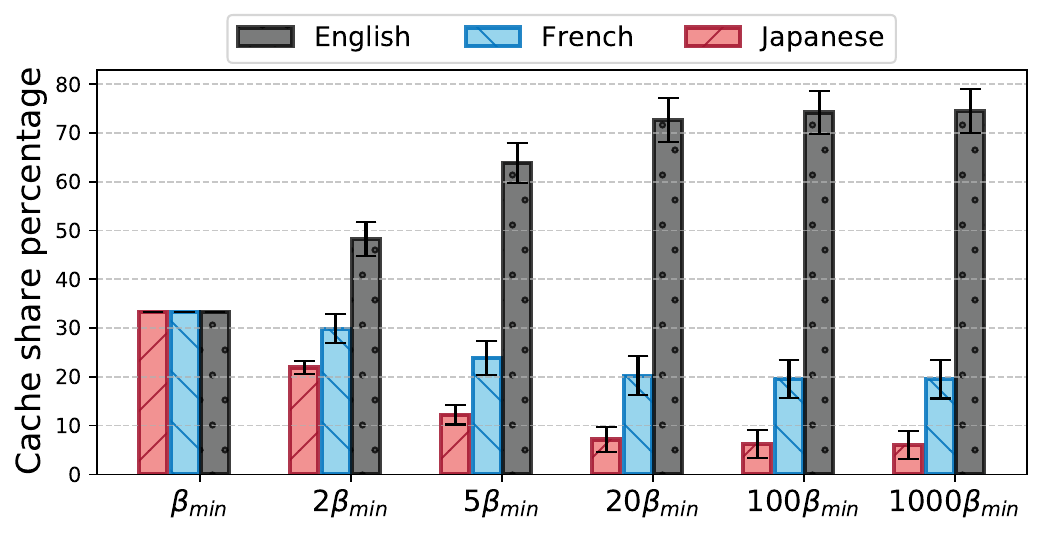}
        \caption{Resource of each class for different values of $\beta$.}
        \label{figure-1}
    \end{subfigure}
    \begin{subfigure}[b]{0.48\textwidth}
        \centering
        \includegraphics[width=\textwidth]{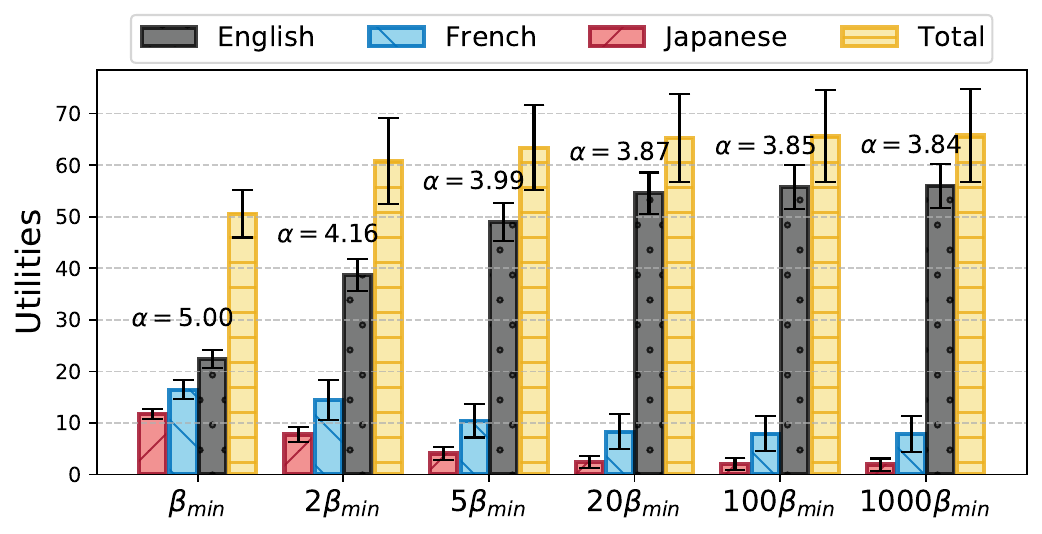}
        \caption{Utility of each class for different values of $\beta$.}
        \label{figure-2}
    \end{subfigure}
    \caption{Fairness-Efficiency trade-off for different values of $\beta$; $\theta_1 = 116$, $\theta_2 = 178$ and $\theta_3 = 253.9$.}
    \vspace{-10pt}
    \label{figure-alpha-beta-empirical-trade-off}
\end{figure}

\subsection*{Experiment Results.}
Figure \ref{figure-utilites-resource-shares} presents the distribution of resource shares across classes and the corresponding utilities received by each class. Utility-based fairness-driven algorithms and $\frac{1}{K}$-\GFQ maintain nearly equal resource allocation among classes, prioritizing fairness. In contrast, \LRU, \FIFO, $\alpha$-\CR, and $\frac{1}{K\alpha^*}$-\GFQ allocate more resources to high-valuation classes, emphasizing efficiency over fairness. Notably, $\beta$-\NSW achieves significantly higher total utility than $\frac{1}{K}$-\GFQ despite similar resource distribution, highlighting limitations of quantity-based fairness measures like \GFQ in utility maximization. Additionally, increasing $\gamma$ in the $(\gamma,\beta)$-fairness measure raises the share and utility of lower-valuation classes, promoting fairness but reducing overall utility, illustrating the trade-off between group fairness and performance. In another experiment using the same inputs, we evaluate the fairness-efficiency trade-off in $\beta$-\NSW by gradually increase $\beta$ from its minimum possible value ($\beta_{\min}$ which is obtained from Theorem \ref{theorem-proportional-fairness-beta}). Based on Figure \ref{figure-alpha-beta-empirical-trade-off}, as $\beta$ increases, resource allocation and utility for high-valuation classes rise while the empirical competitive ratio decreases, aligning with Theorem \ref{theorem-gamma-beta-fairness-trade-off-design} and Figure \ref{figure-trade-off-gamma-beta-alpha}.

Figure \ref{figure-emperical-3class} compares the empirical competitive ratio (ECR) using a CDF plot. The $\frac{1}{K\alpha^*}$-\GFQ and $\alpha$-\CR algorithms exhibit the fastest convergence and superior individual performance, consistent with Section \ref{section-GFQ}, where \GFQ achieved near-optimal competitive ratios under $M = \sum_j m_j \leq \frac{B}{\alpha^*_0}$. Interestingly, the $\frac{1}{K}$-\GFQ algorithm has a higher CDF than others, highlighting the importance of $m_j$ selections in \GFQ's performance. However, despite similar resource distributions to $\beta$-\NSW, $\frac{1}{K}$-\GFQ underperforms in individual welfare, exposing the inefficiency of quantity-based fairness in this context. The figure also illustrates the trade-off between group-level fairness and individual welfare, as equity-focused algorithms tend to achieve lower competitive ratios, underscoring the challenge of balancing fairness and efficiency.

In Figure \ref{figure-theta2-theta1-changes}, we observe that increasing the maximum valuation $\theta_2$ significantly impacts the $\frac{1}{K}$-\GFQ algorithm. To better illustrate this effect, we scaled the corresponding line by $0.5$ in the plot. In contrast, the impact on the $\alpha$-\CR\ algorithm remains minimal. Generally, algorithms that prioritize allocations to classes with lower valuations experience a greater degradation in competitive ratio as $\theta_2$ increases. This is because, in order to maintain fairness, the resource share allocated to the first class increases when $\theta_2$ rises, leading to an increase in the competitive ratio. In contrast, the $\alpha$-\CR\ and $\frac{1}{K\alpha^*}$-\GFQ algorithms are less affected by changes in $\theta_2$, as these algorithms inherently allocate more resources to classes with higher valuations (see Figure \ref{figure-utilites-resource-shares}).

Furthermore, in Figure \ref{figure-number-of-classes-changes}, we also observe that increasing the number of classes has the least effect on the $\alpha$-\CR\ algorithm. This is because $\alpha$-\CR\ focuses predominantly on allocating resources to the highest-valued class, and the introduction of additional classes has only a minor impact. However, as more classes with lower valuations are introduced, the concentration of arrivals with lower valuations increases, leading to a slight reduction in performance for the $\alpha$-\CR\ algorithm. On the other hand, algorithms that emphasize fairness experience a more pronounced increase in competitive ratio as the number of classes increases. This is due to the fact that having more classes inherently reduces the resource share available to the highest valued class. Consequently, the trade-off between group-level fairness and individual welfare becomes more apparent.

\begin{figure}[t]
    \centering
    \includegraphics[width=0.8\textwidth]{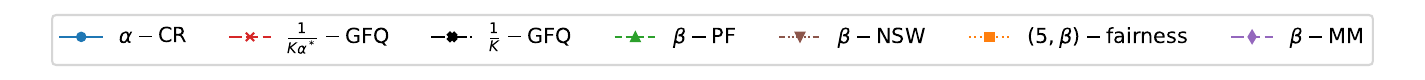}
    \begin{minipage}{0.32\textwidth}
        \centering
        \includegraphics[width=\textwidth]{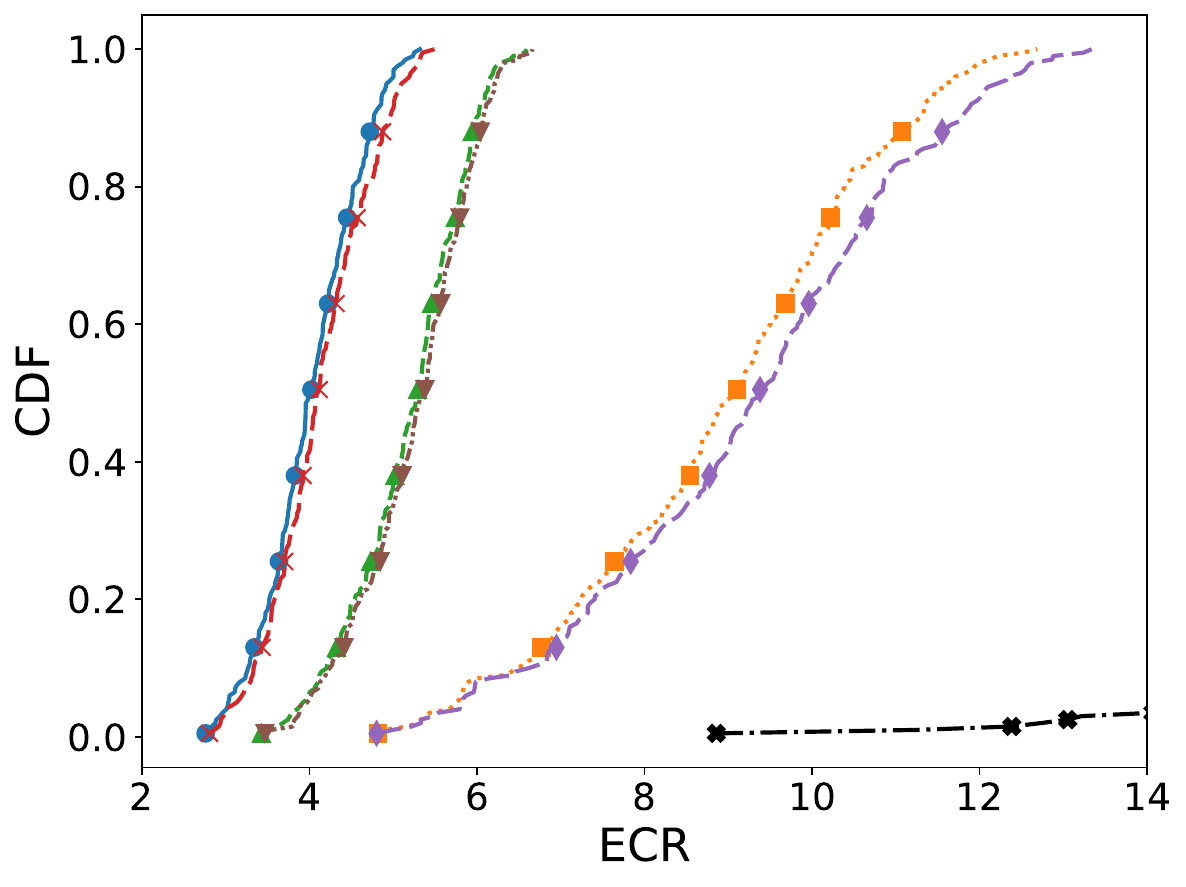}
        \caption{CDF plot of the ECR of algorithms; first experiment.}
        \label{figure-emperical-3class}
    \end{minipage}
    \hspace{0.1cm}
    \begin{minipage}{0.32\textwidth}
        \centering
        \includegraphics[width=\textwidth]{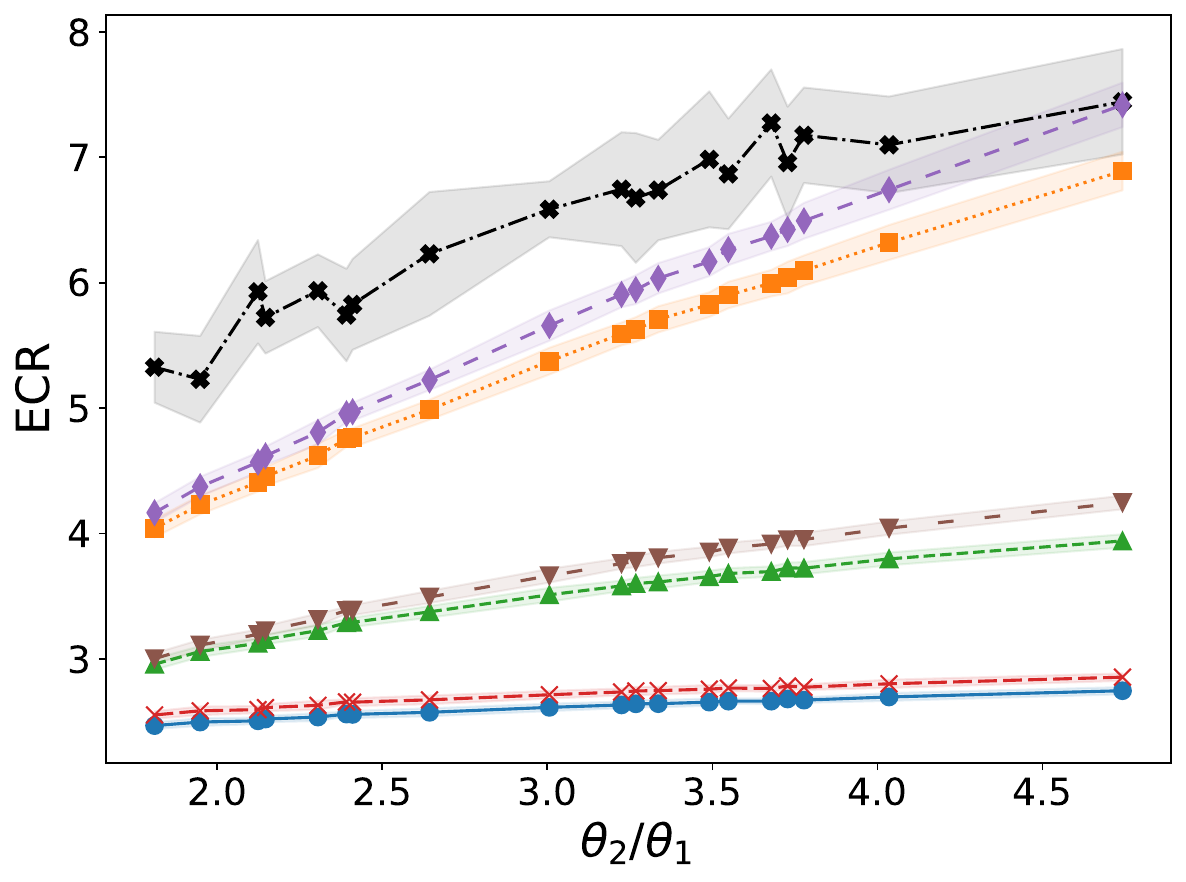}
        \caption{CDF plot of the E$\beta$-\PF of algorithms; first experiment.}
        \label{figure-theta2-theta1-changes}
    \end{minipage}
    \hspace{0.1cm}
    \begin{minipage}{0.32\textwidth}
        \centering
        \includegraphics[width=\textwidth]{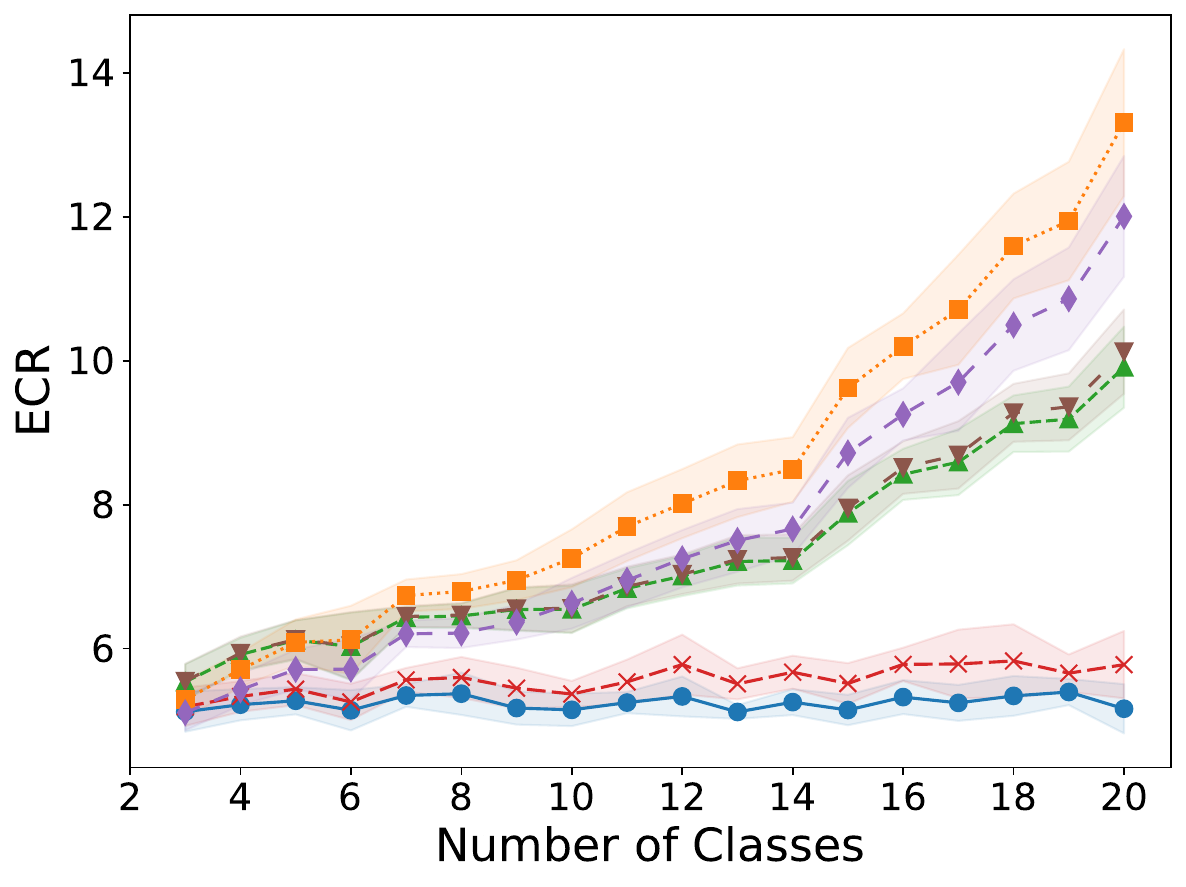}
        \caption{ECR vs the number of classes; {third experiment}.}
        \label{figure-number-of-classes-changes}
    \end{minipage}
    \vspace{-10pt}
\end{figure}

\section{Conclusions and Future Work}\label{section-conclusion}
In this work, we investigated the problem of online fair allocation among sequentially arriving agents, each belonging to a class or group. We proposed three novel threshold-based online algorithms: \Q-\PRB, \U-\PRB, and \SAMT. The \Q-\PRB algorithm employs a multi-segment threshold function that incorporates a quantity-based fairness metric among groups, termed \GFQ. The \U-\PRB algorithm uses a class-dependent threshold function to ensure fair allocation according to utility-based fairness metrics, specifically $\beta$-\PF and $(\gamma,\beta)$-fairness. Finally, \SAMT combines global and class-dependent threshold functions, leading to a family of algorithms that can smoothly balance fairness and efficiency. To achieve these results, we proposed a novel representative function-based approach to derive the lower bounds for online allocation under different group fairness notations, proving that our algorithms are optimal for \GFQ and $\beta$-\PF, and near-optimal for $(\gamma,\beta)$-fairness. Additionally, we empirically evaluated the performance of our model through numerical simulations on a real-world Wikipedia Clickstream dataset to address the caching problem using utility-based \TTL caching protocol, highlighting the trade-offs between group fairness and individual welfare in online settings.

Several open directions remain for extending this study on the trade-off between group fairness and individual welfare. A key unresolved question is whether a Pareto-optimal trade-off can be determined under the ($\gamma, \beta$)-fairness metric. Additionally, extending our model to the multi-resource setting would be a valuable but challenging next step.

\printbibliography{}

@article{kelly1998rate,
  title={Rate Control for Communication Networks: Shadow Prices, Proportional Fairness and Stability},
  author={Kelly, Frank P and Maulloo, Aman K and Tan, David Kim Hong},
  journal={Journal of the Operational Research society},
  volume={49},
  number={3},
  pages={237--252},
  year={1998},
  publisher={Taylor \& Francis}
}

@inproceedings{lechowicz2023time,
  title={Time Fairness in Online Knapsack Problems},
  author={Lechowicz, Adam and Sengupta, Rik and Sun, Bo and Kamali, Shahin and Hajiesmaili, Mohammad},
  booktitle = {International Conference on Learning Representations (ICLR)},
  year={2024}
}

@inproceedings{dutting2021secretaries,
  title={Secretaries with Advice},
  author={D{\"u}tting, Paul and Lattanzi, Silvio and Paes Leme, Renato and Vassilvitskii, Sergei},
  booktitle={Proceedings of the 22nd ACM Conference on Economics and Computation},
  pages={409--429},
  year={2021}
}

@article{chawla2024static,
  title={Static pricing for multi-unit prophet inequalities},
  author={Chawla, Shuchi and Devanur, Nikhil and Lykouris, Thodoris},
  journal={Operations Research},
  volume={72},
  number={4},
  pages={1388--1399},
  year={2024},
  publisher={Informs}
}

@book{borodin2005online,
  title={Online Computation and Competitive Analysis},
  author={Borodin, Allan and El-Yaniv, Ran},
  year={2005},
  publisher={Cambridge University Press}
}

@article{samuel1984comparison,
  title={Comparison of Threshold Stop Rules and Maximum for Independent Nonnegative Random Variables},
  author={Samuel-Cahn, Ester},
  journal={the Annals of Probability},
  pages={1213--1216},
  year={1984},
  publisher={JSTOR}
}

@article{gardner1970mathematical,
  title={Mathematical Games},
  author={Gardner, Martin},
  journal={Scientific American},
  volume={222},
  number={6},
  pages={132--140},
  year={1970},
  publisher={JSTOR}
}

@article{aslan2024fair,
  title={Fair Resource Allocation in Virtualized O-RAN Platforms},
  author={Aslan, Fatih and Iosifidis, George and Ayala-Romero, Jose A and Garcia-Saavedra, Andres and Costa-Perez, Xavier},
  journal={Proceedings of the ACM on Measurement and Analysis of Computing Systems (SIGMETRICS)},
  volume={8},
  number={1},
  pages={1--34},
  year={2024},
  publisher={ACM New York, NY, USA}
}

@article{jenkins2016energy,
  title={Energy Justice: A Conceptual Review},
  author={Jenkins, Kirsten and McCauley, Darren and Heffron, Raphael and Stephan, Hannes and Rehner, Robert},
  journal={Energy Research \& Social Science},
  volume={11},
  pages={174--182},
  year={2016},
  publisher={Elsevier}
}

@article{li2024balancing,
  title={Balancing Fairness and Efficiency in Energy Resource Allocations},
  author={Li, Jiayi and Motoki, Matthew and Zhang, Baosen},
  journal={arXiv preprint arXiv:2403.15616},
  year={2024}
}

@article{cao2022online,
  title={Online Network Utility Maximization: Algorithm, Competitive Analysis, and Applications},
  author={Cao, Ying and Sun, Bo and Tsang, Danny HK},
  journal={IEEE Transactions on Control of Network Systems},
  volume={10},
  number={1},
  pages={274--284},
  year={2022},
  publisher={IEEE}
}

@article{khan2016fairness,
  title={Fairness in Cognitive Radio Networks: Models, Measurement Methods, Applications, and Future Research Directions},
  author={Khan, Ubaid Ullah and Dilshad, Naqqash and Rehmani, Mubashir Husain and Umer, Tariq},
  journal={Journal of Network and Computer Applications},
  volume={73},
  pages={12--26},
  year={2016},
  publisher={Elsevier}
}

@article{tan2020mechanism,
  title={Mechanism Design for Online Resource Allocation: A Unified Approach},
  author={Tan, Xiaoqi and Sun, Bo and Leon-Garcia, Alberto and Wu, Yuan and Tsang, Danny Hin Kwok},
  journal={Proceedings of the ACM on Measurement and Analysis of Computing Systems (SIGMETRICS)},
  volume={4},
  number={2},
  pages={1--46},
  year={2020},
  publisher={ACM New York, NY, USA}
}

@article{sun2020competitive,
  title={Competitive Algorithms for the Online Multiple Knapsack Problem with Application to Electric Vehicle Charging},
  author={Sun, Bo and Zeynali, Ali and Li, Tongxin and Hajiesmaili, Mohammad and Wierman, Adam and Tsang, Danny HK},
  journal={Proceedings of the ACM on Measurement and Analysis of Computing Systems (SIGMETRICS)},
  volume={4},
  number={3},
  pages={1--32},
  year={2020},
  publisher={ACM New York, NY, USA}
}

@article{Sun_OKP_departure_2022,
  title={The Online Knapsack Problem with Departures},
  author={Sun, Bo and Yang, Lin and Hajiesmaili, Mohammad and Wierman, Adam and Lui, John CS and Towsley, Don and Tsang, Danny HK},
  journal={Proceedings of the ACM on Measurement and Analysis of Computing Systems (SIGMETRICS)},
  volume={6},
  number={3},
  pages={1--32},
  year={2022},
  publisher={ACM New York, NY, USA}
}

@article{el2001optimal,
  title={Optimal Search and One-way Trading Online Algorithms},
  author={El-Yaniv, Ran and Fiat, Amos and Karp, Richard M and Turpin, Gordon},
  journal={Algorithmica},
  volume={30},
  pages={101--139},
  year={2001},
  publisher={Springer}
}

@inproceedings{lechowicz2023online,
  title={Online Conversion with Switching Costs: Robust and Learning-Augmented Algorithms},
  author={Lechowicz, Adam and Christianson, Nicolas and Sun, Bo and Bashir, Noman and Hajiesmaili, Mohammad and Wierman, Adam and Shenoy, Prashant},
  booktitle={ACM SIGMETRICS/IFIP PERFORMANCE},
  year={2023}
}

@inproceedings{hosseini2023class,
  title={Class fairness in online matching},
  author={Hosseini, Hadi and Huang, Zhiyi and Igarashi, Ayumi and Shah, Nisarg},
  booktitle={Proceedings of the AAAI Conference on Artificial Intelligence},
  volume={37},
  number={5},
  pages={5673--5680},
  year={2023}
}

@article{bertsimas2012efficiency,
  title={On the efficiency-fairness trade-off},
  author={Bertsimas, Dimitris and Farias, Vivek F and Trichakis, Nikolaos},
  journal={Management Science},
  volume={58},
  number={12},
  pages={2234--2250},
  year={2012},
  publisher={INFORMS}
}

@article{kelly1997charging,
  title={Charging and Rate Control for Elastic Traffic},
  author={Kelly, Frank},
  journal={European transactions on Telecommunications},
  volume={8},
  number={1},
  pages={33--37},
  year={1997},
  publisher={Wiley Online Library}
}

@article{panigrahy2017hit,
  title={Hit Rate vs. Hit Probability based Cache Utility Maximization},
  author={Panigrahy, Nitish K and Li, Jian and Towsley, Don},
  journal={ACM SIGMETRICS Performance Evaluation Review},
  volume={45},
  number={2},
  pages={21--23},
  year={2017},
  publisher={ACM New York, NY, USA}
}

@inproceedings{huang2023online,
  title={Online Nash Welfare Maximization Without Predictions},
  author={Huang, Zhiyi and Li, Minming and Shu, Xinkai and Wei, Tianze},
  booktitle={International Conference on Web and Internet Economics},
  pages={402--419},
  year={2023},
  organization={Springer}
}

@inproceedings{esmaeili2023rawlsian,
  title={Rawlsian Fairness in Online Bipartite Matching: Two-Sided, Group, and Individual},
  author={Esmaeili, Seyed and Duppala, Sharmila and Cheng, Davidson and Nanda, Vedant and Srinivasan, Aravind and Dickerson, John P},
  booktitle={Proceedings of the AAAI Conference on Artificial Intelligence (AAAI)},
  volume={37},
  number={5},
  pages={5624--5632},
  year={2023}
}

@inproceedings{barman2022universal,
  title={Universal and Tight Online Algorithms for Generalized-Mean Welfare},
  author={Barman, Siddharth and Khan, Arindam and Maiti, Arnab},
  booktitle={Proceedings of the AAAI Conference on Artificial Intelligence (AAAI)},
  volume={36},
  number={5},
  pages={4793--4800},
  year={2022}
}

@inproceedings{bonald2015multi,
  title={Multi-Resource Fairness: Objectives, Algorithms and Performance},
  author={Bonald, Thomas and Roberts, James},
  booktitle={Proceedings of the 2015 ACM International Conference on Measurement and Modeling of Computer Systems (SIGMETRICS)},
  pages={31--42},
  year={2015}
}

@article{joe2013multiresource,
  title={Multiresource Allocation: Fairness--Efficiency Tradeoffs in a Unifying framework},
  author={Joe-Wong, Carlee and Sen, Soumya and Lan, Tian and Chiang, Mung},
  journal={IEEE/ACM Transactions on Networking},
  volume={21},
  number={6},
  pages={1785--1798},
  year={2013},
  publisher={IEEE}
}

@article{mo2000fair,
  title={Fair End-to-End Window-based Congestion Control},
  author={Mo, Jeonghoon and Walrand, Jean},
  journal={IEEE/ACM Transactions on Networking},
  volume={8},
  number={5},
  pages={556--567},
  year={2000},
  publisher={IEEE}
}

@inproceedings{joe2013pricing,
  title={Pricing the Cloud: Resource allocations, Fairness, and Revenue},
  author={Joe-Wong, Carlee and Sen, Soumya},
  booktitle={Proceedings of the Workshop on Information Technology \& Systems (WITS’13)},
  year={2013}
}

@article{panayiotou2023balancing,
  title={Balancing Efficiency and Fairness in Resource Allocation for Optical Networks},
  author={Panayiotou, Tania and Ellinas, Georgios},
  journal={IEEE Transactions on Network and Service Management},
  year={2023},
  publisher={IEEE}
}

@article{guo2015fair,
  title={Fair Network Bandwidth Allocation in IaaS Datacenters via a Cooperative Game Approach},
  author={Guo, Jian and Liu, Fangming and Lui, John CS and Jin, Hai},
  journal={IEEE/ACM Transactions on Networking},
  volume={24},
  number={2},
  pages={873--886},
  year={2015},
  publisher={IEEE}
}

@inproceedings{zhou2008budget,
  title={Budget Constrained Bidding in Keyword Auctions and Online Knapsack Problems},
  author={Zhou, Yunhong and Chakrabarty, Deeparnab and Lukose, Rajan},
  booktitle={Proceedings of the 17th International Conference on World Wide Web (WWW)},
  pages={1243--1244},
  year={2008}
}

@article{zhang2017optimal,
  title={Optimal Posted Prices for Online Cloud Resource Allocation},
  author={Zhang, Zijun and Li, Zongpeng and Wu, Chuan},
  journal={Proceedings of the ACM on Measurement and Analysis of Computing Systems (SIGMETRICS)},
  volume={1},
  number={1},
  pages={1--26},
  year={2017},
  publisher={ACM New York, NY, USA}
}

@article{bogomolnaia2022fair,
  title={On the fair division of a random object},
  author={Bogomolnaia, Anna and Moulin, Herv{\'e} and Sandomirskiy, Fedor},
  journal={Management Science},
  volume={68},
  number={2},
  pages={1174--1194},
  year={2022},
  publisher={INFORMS}
}

@inproceedings{chakrabarty2008online,
  title={Online Knapsack Problems},
  author={Chakrabarty, Deeparnab and Zhou, Yunhong and Lukose, Rajan},
  booktitle={Workshop on Internet and Network Economics (WINE)},
  pages={1--9},
  year={2008}
}

@article{yang2024online,
  title={Online Fair Allocation with Best-of-Many-Worlds Guarantees},
  author={Yang, Zongjun and Liao, Luofeng and Gao, Yuan and Kroer, Christian},
  journal={arXiv preprint arXiv:2408.02403},
  year={2024}
}

@inproceedings{banerjee2022online,
  title={Online nash social welfare maximization with predictions},
  author={Banerjee, Siddhartha and Gkatzelis, Vasilis and Gorokh, Artur and Jin, Billy},
  booktitle={Proceedings of the 2022 Annual ACM-SIAM Symposium on Discrete Algorithms (SODA)},
  pages={1--19},
  year={2022},
  organization={SIAM}
}

@article{steinhaus1948problem,
  title={The problem of fair division},
  author={Steinhaus, Hugo},
  journal={Econometrica},
  volume={16},
  pages={101--104},
  year={1948}
}

@article{dubins1961cut,
  title={How to cut a cake fairly},
  author={Dubins, Lester E and Spanier, Edwin H},
  journal={The American Mathematical Monthly},
  volume={68},
  number={1P1},
  pages={1--17},
  year={1961},
  publisher={Taylor \& Francis}
}

@article{vazirani2007combinatorial,
  title={Combinatorial Algorithms for Market Equilibria},
  author={Vazirani, Vijay V},
  journal={Algorithmic game theory},
  pages={103--134},
  year={2007}
}

@article{sinclair2022sequential,
  title={Sequential Fair Allocation: Achieving the Optimal Envy-Efficiency Tradeoff Curve},
  author={Sinclair, Sean R and Banerjee, Siddhartha and Yu, Christina Lee},
  journal={ACM SIGMETRICS Performance Evaluation Review},
  volume={50},
  number={1},
  pages={95--96},
  year={2022},
  publisher={ACM New York, NY, USA}
}

@inproceedings{10.1145/3578338.3593558,
author = {Banerjee, Siddhartha and Hssaine, Chamsi and Sinclair, Sean R.},
title = {Online Fair Allocation of Perishable Resources},
year = {2023},
isbn = {9798400700743},
publisher = {Association for Computing Machinery},
address = {New York, NY, USA},
booktitle = {Abstract Proceedings of the 2023 ACM SIGMETRICS International Conference on Measurement and Modeling of Computer Systems},
pages = {55–56},
numpages = {2},
keywords = {model-predictive control, nash social welfare, online resource allocation, perishable resources, varian fairness},
location = {Orlando, Florida, United States},
series = {SIGMETRICS '23}
}

@inproceedings{ma2020group,
  title={Group-level Fairness Maximization in Online Bipartite Matching},
  author={Ma, Will and Xu, Pan and Xu, Yifan},
  booktitle={21st International Conference on Autonomous Agents and Multiagent Systems, AAMAS 2022},
  pages={1687--1689},
  year={2022}, 
}

@inproceedings{banerjee2022proportionally,
author = {Banerjee, Siddhartha and Gkatzelis, Vasilis and Hossain, Safwan and Jin, Billy and Micha, Evi and Shah, Nisarg},
title = {Proportionally Fair Online Allocation of Public Goods with Predictions},
year = {2023},
isbn = {978-1-956792-03-4},
booktitle = {Proceedings of the Thirty-Second International Joint Conference on Artificial Intelligence (IJCAI)},
articleno = {3},
numpages = {9},
location = {Macao, P.R.China},
series = {IJCAI '23}
}

@article{lan2011axiomatic,
  title={An Axiomatic Theory of Fairness in Resource Allocation},
  author={Lan, Tian and Chiang, Mung},
  journal={George Washington University, http://www. seas. gwu. edu/tlan/papers/fairness. pdf, Tech. Rep},
  year={2011}
}

@inproceedings{wang2014dominant,
  title={Dominant Resource Fairness in Cloud Computing Systems with Heterogeneous Servers},
  author={Wang, Wei and Li, Baochun and Liang, Ben},
  booktitle={IEEE INFOCOM 2014},
  pages={583--591},
  year={2014},
  organization={IEEE}
}

@inproceedings{mehmeti2022max,
  title={Max-min Fair Resource Allocation in SD-RAN},
  author={Mehmeti, Fidan and Kellerer, Wolfgang},
  booktitle={Proceedings of the 18th ACM International Symposium on QoS and Security for Wireless and Mobile Networks},
  pages={27--35},
  year={2022}
}

@article{10.1016/j.comnet.2016.04.006,
author = {Hoteit, Sahar and El Chamie, Mahmoud and Saucez, Damien and Secci, Stefano},
title = {On Fair Network Cache Allocation to Content Providers},
year = {2016},
issue_date = {July 2016},
publisher = {Elsevier North-Holland, Inc.},
address = {USA},
volume = {103},
number = {C},
issn = {1389-1286},
journal = {Computer Networks},
month = jul,
pages = {129–142},
numpages = {14},
keywords = {Mechanism design, Information Centric Networking, In-network caching, Game theory, Cache allocation}
}

@article{10.1145/3453953.3453973,
author = {Liu, Yuezhou and Li, Yuanyuan and Ma, Qian and Ioannidis, Stratis and Yeh, Edmund},
title = {Fair Caching Networks},
year = {2021},
issue_date = {December 2020},
publisher = {Association for Computing Machinery},
address = {New York, NY, USA},
volume = {48},
number = {3},
issn = {0163-5999}, 
journal = {SIGMETRICS Perform. Eval. Rev.},
month = mar,
pages = {89–90},
numpages = {2},
keywords = {resource allocation, fairness, caching network}
}

@article{dehghan2019utility,
  title={A utility optimization approach to network cache design},
  author={Dehghan, Mostafa and Massoulie, Laurent and Towsley, Don and Menasche, Daniel Sadoc and Tay, Yong Chiang},
  journal={IEEE/ACM Transactions on Networking},
  volume={27},
  number={3},
  pages={1013--1027},
  year={2019},
  publisher={IEEE}
}

@misc{wikimedia_analytics,
  author       = {Wikimedia Foundation},
  title        = {Wikimedia Analytics Data Dumps},
  year         = {n.d.},
  note         = {Accessed: 2025-01-08},
  url          = {https://dumps.wikimedia.org/other/analytics/}
}

@book{mitrinovic2012inequalities,
  title={Inequalities Involving Functions and Their Integrals and Derivatives},
  author={Mitrinovic, Dragoslav S and Pecaric, Josip and Fink, Arlington M},
  volume={53},
  year={2012},
  publisher={Springer Science \& Business Media}
}

@article{10.1145/2483852.2483870,
author = {Procaccia, Ariel D.},
title = {Cake Cutting: Not Just Child's Play},
year = {2013},
issue_date = {July 2013},
publisher = {Association for Computing Machinery},
address = {New York, NY, USA},
volume = {56},
number = {7},
issn = {0001-0782},  
journal = {Commun. ACM},
month = jul,
pages = {78–87},
numpages = {10}
}

@article{10.1016/j.artint.2023.103965,
    author = {Amanatidis, Georgios and Aziz, Haris and Birmpas, Georgios and Filos-Ratsikas, Aris and Li, Bo and Moulin, Herv\'{e} and Voudouris, Alexandros A. and Wu, Xiaowei},
    title = {Fair Division of Indivisible Goods: Recent Progress and Open Questions},
    year = {2023},
    issue_date = {Sep 2023},
    publisher = {Elsevier Science Publishers Ltd.},
    address = {GBR},
    volume = {322},
    number = {C},
    issn = {0004-3702}, 
    journal = {Artificiual Intelligence},
    month = Sep,
    numpages = {25},
    keywords = {MMS, EFX, EF1, Proportionality, Envy-freeness, Discrete fair division}
}

@inproceedings{aleksandrov2020online,
  title={Online Fair Division: A Survey},
  author={Aleksandrov, Martin and Walsh, Toby},
  booktitle={Proceedings of the AAAI Conference on Artificial Intelligence (AAAI)},
  volume={34},
  number={09},
  pages={13557--13562},
  year={2020}
}

@article{mehta2007adwords,
  title={Adwords and Generalized Online Matching},
  author={Mehta, Aranyak and Saberi, Amin and Vazirani, Umesh and Vazirani, Vijay},
  journal={Journal of the ACM (JACM)},
  volume={54},
  number={5},
  pages={22--es},
  year={2007},
  publisher={ACM New York, NY, USA}
}

@inproceedings{buchbinder2007online,
  title={Online primal-dual algorithms for maximizing ad-auctions revenue},
  author={Buchbinder, Niv and Jain, Kamal and Naor, Joseph},
  booktitle={European Symposium on Algorithms},
  pages={253--264},
  year={2007},
  organization={Springer}
}


\newpage
\appendix
{\Large{\textbf{Appendix}}}
\begin{table}[h]
	\caption{A summary of key notations.}
	\label{tab:notations}
    \small
	\begin{center}
		\begin{tabular}[P]{|c|l|}
            \hline
            \textbf{Notation} & \textbf{Description} \\
            \hline
            $t \in [T]$ & Time step index \\
            \hline
            $K \in \mathbb{N}$ & Number of arrival classes \\
            \hline
            $\theta_j $ & Fluctuation ratio of the arrivals of class $j$  \\
            \hline
            $B $ & Total resource budget  \\
            \hline
            \hline
            \multicolumn{2}{|c|}{\textbf{Online Inputs and Decisions}}\\
            \hline
            $j_t \in [K]$ & (\textit{Input}) Class of the arriving agent at time $t$ \\
            \hline 
            $v_t\in [1, \theta_j]$& (\textit{Input}) Valuation of the arriving agent at time $t$ when $j_t = j$\\
            \hline 
            $r_t$& (\textit{Input}) Allocation rate limit of arriving agent at time $t$\\
            \hline 
            ${x}_t \in [0,r_t]$ & (\textit{Decision}) Allocation decision at time $t$ \\
            \hline
            \hline
            \multicolumn{2}{|c|}{\textbf{\GFQ Notations}}\\
            \hline
            $m_j$& Minimum allocation requirement to class $j$\\
            \hline 
            $n_j$& Minimum number of arrivals from class $j$\\
            \hline
            $M=\sum\nolimits_{j\in[K]}m_j \in [0, B]$& Cumulative \GFQ constraint of all classes\\
            \hline 
            $C_j =  B - \sum_{i=1}^{j-1} m_i$ & Remaining resource after \GFQ allocation up to class $j-1$\\
            \hline 
            $D_j = \sum_{i=1}^{j-1} m_i\theta_i$& Maximum cumulative utility of \GFQ constraint up to class $j-1$\\
            \hline 
            $\phi(\cdot):[0, B-M] \to [1, \theta_K]$& The multi-segment threshold function shared among all classes\\
            \hline
            \hline
            \multicolumn{2}{|c|}{\textbf{$\beta$-\PF and  ($\gamma,\beta)$-fairness Notations}}\\
            \hline
            $U_j(\mathbf{x})=\sum_{t\in[T]} v_t x_t\cdot \boldsymbol{1}_{\{j_t=t\}}$ & Utility of the class $j$ under allocation $\mathbf{x}$\\
			\hline
            $b_j\in[0,B]$ &The reservation resource for class $j$ \\
            \hline
            $\phi_j(\cdot):[0,b_j] \to [1, \theta_j]$& The Local threshold function of the class $j$ \\
            \hline
            $\phi^G(\cdot):[0,B] \to [1, \theta_K]$ & The Global threshold function shared among all classes\\
            \hline
            $\psi_j(\cdot):[1, \theta_j]\to [0,b_j]$ & The Local utilization function of the class $j$ \\
            \hline
            $\psi^G(\cdot):[1, \theta_K]\to [0,B]$ & The Global utilization function shared among all classes\\
            \hline
            $u^j_t$ & Local utilization level of the algorithm for group $j$ by time $t$\\       
            \hline
            $u_t$ & Global utilization level of the algorithm by time $t$\\ 
            \hline
            $\alpha_j = 1 + \ln\theta_j$ &  The competitive ratio when all the arrivals are from class $j$ \\
            \hline
		\end{tabular}
	\end{center}
\end{table}

Throughout the proofs in this appendix, we assume \( r_t = B \) for all \( t \in [T] \) to simplify the analysis. This assumption does not compromise the validity of our results, as an adversary can always choose sufficiently large \( r_t \) values to penalize the online algorithm. This scenario is particularly relevant when the time horizon is unknown, and \( r_t \) does not affect the analysis outcomes but rather the execution of the algorithm. Importantly, this assumption does not influence the algorithm's design, as established in prior work \cite{sun2020competitive, lechowicz2023online}. As a result, we exclude \( r_t \) from the input sequence of arrivals and instead define the input sequence as \( I = \{(v_1, j_1), (v_2, j_2), \dots, (v_T, j_T)\} \).

\section{Proofs of Section \ref{section-GFQ}}\label{appendix-GFQ}
\subsection{Proof of Theorem \ref{theorem-LowerBound-multi-Class}}\label{appendix-gfq-mulit-upperbound}
Here we aim to determine the dynamics required for the increasing threshold function $\phi(u_{t})$ in Algorithm \ref{alg-function-daynamic-threshold-K-class} to ensure the algorithm's $\alpha$-competitiveness. We define the function $\Upsilon(v)$ for any value $v$ such that $v \in [1,\theta_K]$ as:
\begin{align*}
\Upsilon(v) = \argmax_{a \ge 0}{a\cdot v-\int_{0}^{a} \phi(b)db}.
\end{align*}
Based on Assumption \ref{assumption-gfq-arrival} there are at least one arrival from each class. Therefore, for a given instance $I$, with the maximum received valuation $v$, the objective value of the optimal offline algorithm on give instance $I$, $\OPT(I)$, can be upper-bounded as 
\begin{align*}
\OPT(I) \leq v(B-\sum_j m_j) + v\sum_{i\geq j}m_i + \sum_{i<j}m_i \theta_i = v C_j + D_j.
\end{align*}
Now in the case that $M \leq  \frac{B}{1 + \ln\theta_K - \sum_{j=1}^{K-1} \frac{m_j}{B} \ln(\frac{\theta_K}{\theta_j})}$, the performance of Algorithm \ref{alg-function-daynamic-threshold-K-class}  can be lower-bounded as follows:
\begin{align*}
    &\ALG(I) \geq M +  \Upsilon(1)+ \int_{\Upsilon(1)}^{\Upsilon(v)} \phi(u) du.
\end{align*}
Let $v$ be some value in the range $[\theta_{j-1},\theta_{j}]$ for some $j \in [1,K]$. Then we can see that 
\begin{align*}
\ALG(I) \geq &M + \Upsilon(1) + \sum_{i=1}^{j-1} \int_{\Upsilon(\theta_{i-1})}^{\Upsilon(\theta_{i})} \phi(u) du + \int_{\Upsilon(\theta_{j-1})}^{\Upsilon(v)} \phi(u) du.
\end{align*}
Based on the definition of $\phi(\cdot)$ we have 
\begin{align*}
    \ALG(I) \geq & M + \Upsilon(1)+ \sum_{i=1}^{j-1} \frac{C_i}{\alpha_0^*} \exp\left(\frac{\alpha_0(u + M) - B - \sum_{l=1}^{i-1}m_l \ln \theta_l}{C_i}\right)\bigg\rvert_{\Upsilon(\theta_{i-1})}^{\Upsilon(\theta_{i})}\nonumber\\
    & \qquad \qquad+\frac{C_j}{\alpha_0^*} \exp\left(\frac{\alpha_0(u + M) - B - \sum_{l=1}^{j-1}m_l \ln \theta_l}{C_i}\right)\bigg\rvert_{\Upsilon(\theta_{j-1})}^{\Upsilon(v)}.
\end{align*}
Now using the definition of $\Upsilon(\cdot)$ we can reduce the above inequality to 
\begin{align*}
    \ALG(I) \geq & M + \left(\frac{B}{\alpha_0^*} - M \right)+ \sum_{i=1}^{j-1}\frac{C_{i}}{\alpha_0^*}\left(\theta_i-\theta_{i-1}\right) + \frac{C_{j}}{\alpha_0^*}\left(v-\theta_{j-1}\right).
\end{align*}
Using the telescopic summation we have 
\begin{align*}
    \ALG(I) \geq & \frac{B}{\alpha_0^*}+ \frac{C_j}{\alpha_0^*}v - \frac{C_1}{\alpha_0^*} + \sum_{i=1}^{j-1}\frac{m_i \theta_i}{\alpha_0^*} \\
    =& \frac{C_j}{\alpha_0^*}v + \frac{D_{j}}{\alpha_0^*}.
\end{align*}
Therefore
\begin{align}
    \frac{\OPT(I)}{\ALG(I)}\leq \frac{v\cdot C_j + D_j}{v\frac{C_j}{\alpha_0^*} + \frac{D_{j}}{\alpha_0^*}} = \alpha_0^*.
\end{align}

Additionally, in the case that $M \in (\frac{\theta_{j^*-1} \cdot C_{j^*} + D_{j^*}}{\alpha_{j^*}},\frac{\theta_{j^*} \cdot C_{j^*} + D_{j^*}}{\alpha_{j^*}}]$, the performance of Algorithm \ref{alg-function-daynamic-threshold-K-class} can be lower-bounded as
\begin{align*}
    &\ALG(I) \geq M + \int_{\Upsilon(v^*)}^{\Upsilon(v)} \phi(u) du.
\end{align*}

Let $v$ be some value in the range $[U_{j-1},U_{j}]$ for some $j \in [j^{*},K]$. Then the algorithm performance is
\begin{align*}
    \ALG(I) \geq &M + \int_{\Upsilon(v^*)}^{\Upsilon(\theta_{j^*})} \phi(u) du + \sum_{i=j^*+1}^{j-1} \int_{\Upsilon(\theta_{i-1})}^{\Upsilon(\theta_{i})} \phi(u) du + \int_{\Upsilon(\theta_{j-1})}^{\Upsilon(v)} \phi(u) du.
\end{align*}

Same as before, using the definition of $\phi(\cdot)$ we can obtain
\begin{align*}
    \ALG(I) \geq& M + \frac{C_{j^*}}{\alpha_{j^*}}v^* \exp\left(\frac{\alpha_{j^*}\cdot u}{C_{j^*}}\right)\bigg\rvert_{\Upsilon(v^*)}^{\Upsilon(\theta_{j^*})}\nonumber\\
    & \qquad \qquad + \sum_{i=j^*+1}^{j-1} \frac{C_i}{\alpha_{j^*}}v^* \exp\left(\frac{\alpha_{j^*}\cdot u - \sum_{l=j^*}^{i-1}m_l \ln \frac{\theta_l}{v^*}}{C_i}\right)\bigg\rvert_{\Upsilon(\theta_{i-1})}^{\Upsilon(\theta_{i})}\nonumber \\
    & \qquad \qquad+\frac{C_j}{\alpha_{j^*}}v^* \exp\left(\frac{\alpha_{j^*}\cdot u- \sum_{l=j^*}^{j-1}m_l \ln \frac{\theta_l}{v^*}}{C_i}\right)\bigg\rvert_{\Upsilon(\theta_{j-1})}^{\Upsilon(v)}. 
\end{align*}
Using the definition of $\Upsilon(\cdot)$ we have
\begin{align*}
     \ALG(I) \geq & M + \frac{C_{j^*}}{\alpha_{j^*}}v^*\left(\frac{\theta_{j^*}}{v^*}-1\right) + \sum_{i=j^* + 1}^{j-1}\frac{C_{i}}{\alpha_{j^*}}v^*\left(\frac{\theta_{i}}{v^*}-\frac{\theta_{i-1}}{v^*}\right) + \frac{C_{j}}{\alpha_{j^*}}v^*\left(\frac{v}{v^*}-\frac{\theta_{j-1}}{v^*}\right).
\end{align*}
Now we can further simplify $\ALG(I)$ such that
\begin{align*}
    \ALG(I) \geq & M + \frac{C_j}{\alpha_{j^*}}v - \frac{C_{j^*}}{\alpha_{j^*}}v^* + \sum_{i=j^*+1}^{i-1}\frac{m_i \theta_i}{\alpha_{j^*}} \\
    = &  \frac{C_j}{\alpha_{j^*}}v + \frac{D_{j^*}}{\alpha_{j^*}} + \sum_{i=j^*+1}^{i-1}\frac{m_i \theta_i}{\alpha_{j^*}}\\
    =& \frac{C_j}{\alpha_{j^*}}v + \frac{D_{j}}{\alpha_{j^*}}.
\end{align*}
Therefore,
\begin{align*}
    \frac{\OPT(I)}{\ALG(I)}\leq \frac{v\cdot C_j + D_j}{v\frac{C_j}{\alpha_{j^*}} + \frac{D_{j}}{\alpha_{j^*}}} = \alpha_{j^*}.
\end{align*}
This concludes the proof of the competitive ratio provided in Theorem \ref{theorem-LowerBound-multi-Class}.

\subsection{Proof of Theorem \ref{theorem-gfq-lowerbound}}\label{appendix-gfq-lowerbound}
Let us start with the definition of the hard instance $ I^{\GFQ}$.
\begin{lettereddef}[\GFQ Fairness Guarantee Hard Instance: $I^{\GFQ}$] \label{def_instance_I_p_GFQ}
    Instance $I^{\GFQ}$ is defined as a scenario characterized by a continuous, non-decreasing sequence of valuation arrivals. In this scenario, each valuation is replicated for every class as long as it remains feasible. For some value of $\epsilon$ such that $\epsilon \rightarrow 0$, instance $I^{\GFQ}$ can be shown as follows:

\begin{align*}
    I^{\GFQ} = \Biggl\{&\underbrace{(1, 1), (1,2), \dots , (1, K)}_{K\textit{ number of agents}}, \underbrace{(1+\epsilon, 1), \dots, (1+\epsilon, K)}_{K\textit{ number of agents}}, \dots,\\
    &\underbrace{(\theta_{1}, 1), \dots ,(\theta_{1}, K)}_{K\textit{ number of agents}}, \underbrace{(\theta_{1}+\epsilon, 2), \dots , (\theta_{1}+\epsilon, K)}_{K-1 \textit{ number of agents}}, \dots, \underbrace{(\theta_{k}, K)}_{1\textit{ agent}}  \Biggr\},
\end{align*}
where in above $(v, j)$, $\forall j \in [k]$, corresponds to a buyer with valuation equal to $v$ from class $j$.
\end{lettereddef}
\begin{lettereddef}
     A utilization function $\psi(v):[1, \theta_K] \to [0, B]$ is defined as the final utilization of the budget $B$ after executing the instance $I^{\GFQ}$ by an online algorithm. 
\end{lettereddef}

\begin{letteredprop}
    \label{lemma:Utilization_Function}
    Suppose $1 = \theta_0 \leq \theta_1 \leq \dots \leq \theta_K$. If there exists an $\alpha$-competitive online algorithm, then there must exist a utilization function  $\psi(v):[1,\theta_K] \to [0, B]$ such that $\psi(\cdot)$ is a non-decreasing function and satisfies
    \begin{equation*}
    \label{eq:Utilization_Function}
        \begin{cases}
            \psi(1) + \int_1^v ud\psi(u) \geq \frac{1}{\alpha}\left[v(B-M) + v\sum_{i\geq j}m_i + \sum_{i<j}m_i \theta_i \right], \quad\forall v\in (\theta_{j-1}, \theta_j]\\
            \psi(1) \geq \max\{M, \frac{B}{\alpha}\} , \psi(\theta_K) \leq B.
        \end{cases}
    \end{equation*}
\end{letteredprop}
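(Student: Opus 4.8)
The plan is to establish the existence of the utilization function $\psi$ by a direct construction from the behavior of the hypothesized $\alpha$-competitive algorithm on the family of instances obtained from $I^{\GFQ}$. First I would define, for each threshold value $v \in [1, \theta_K]$, the instance $I^{\GFQ}_v$ which is the prefix of $I^{\GFQ}$ that stops once all feasible arrivals with valuation at most $v$ have been presented. Then I set $\psi(v)$ to be the total amount of the budget $B$ consumed by the online algorithm after processing $I^{\GFQ}_v$. Since the arrivals in $I^{\GFQ}$ are presented in non-decreasing order of valuation and the algorithm makes irrevocable allocations, $\psi$ is automatically non-decreasing: presenting more agents can only increase the cumulative allocation. (If the algorithm is randomized, take $\psi(v)$ to be the expected utilization; all inequalities below are then in expectation and the argument is unchanged.)

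Next I would verify the three constraints. The bound $\psi(\theta_K) \le B$ is immediate from the budget constraint. For $\psi(1) \ge M$: because the algorithm must satisfy the \GFQ requirement on \emph{every} instance, and in particular on any continuation of the valuation-$1$ prefix (e.g.\ one in which no further agents of some class ever arrive), it is forced to have already allocated at least $m_j$ to each class $j$ by the time the valuation-$1$ block is processed — otherwise an adversary revealing no more agents of that class would leave \GFQ violated. Summing over $j$ gives $\psi(1) \ge M$. For $\psi(1) \ge B/\alpha$: consider the instance that stops right after the valuation-$1$ block; then $\OPT = B$ (the offline optimum fills the whole budget at value $1$, which is trivially feasible and also meets \GFQ), so $\alpha$-competitiveness forces $\ALG \ge B/\alpha$, and all of the algorithm's allocation so far has been at value $1$, hence $\psi(1) \ge B/\alpha$. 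Combining, $\psi(1) \ge \max\{M, B/\alpha\}$.

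The main inequality is the heart of the proof. Fix $v \in (\theta_{j-1}, \theta_j]$ and consider the instance $I^{\GFQ}_v$. The offline optimum on this instance is exactly the upper bound already derived in the proof of Theorem~\ref{theorem-LowerBound-multi-Class}: allocate $B - M$ units to the highest-value agents available (value $v$), meet the \GFQ floor $m_i$ for each class, where classes $i \ge j$ can be served at value $v$ but classes $i < j$ have maximum available value $\theta_i < v$ in this truncated instance, giving $\OPT(I^{\GFQ}_v) = v(B-M) + v\sum_{i \ge j} m_i + \sum_{i<j} m_i \theta_i$. For the lower bound on $\ALG$, I would express the algorithm's total utility on $I^{\GFQ}_v$ in terms of $\psi$: the amount allocated at valuations in $(w, w+dw]$ is $d\psi(w)$ (this is where monotonicity of $\psi$ and the sorted-arrival structure are essential), so the utility the algorithm accrues from the "competitive" part is $\int_1^v w\, d\psi(w)$, plus the utility $\psi(1)$ from allocations made at value $1$ (including the forced \GFQ allocations, all of which occur in the value-$1$ block in this instance). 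Hence $\ALG(I^{\GFQ}_v) = \psi(1) + \int_1^v w\, d\psi(w)$, and $\alpha$-competitiveness gives exactly the claimed inequality.

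The step I expect to be the main obstacle is making the identity $\ALG(I^{\GFQ}_v) = \psi(1) + \int_1^v w\, d\psi(w)$ fully rigorous, i.e.\ correctly bookkeeping which allocations happen at which valuation level. One must be careful that the forced \GFQ acceptances (the $y_t$ terms in Algorithm~\ref{alg-function-daynamic-threshold-K-class}, though here we argue about an \emph{arbitrary} algorithm, not our own) all occur within the value-$1$ portion of $I^{\GFQ}$ — which holds because value $1$ is the minimum valuation and every class appears in the value-$1$ block with enough agents, so any algorithm respecting \GFQ can (and, as argued above, must) complete its floor there; and that no "out of order" allocation can occur because valuations only increase. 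A secondary subtlety is the Riemann–Stieltjes integral itself: $\psi$ need not be continuous (an algorithm may jump its utilization at a single valuation), so $\int_1^v u\, d\psi(u)$ should be read as a Stieltjes integral, and one should check that the inequality is preserved at jump points; taking left-continuous or right-continuous versions of $\psi$ and being consistent with the half-open interval $(\theta_{j-1}, \theta_j]$ resolves this. Once these accounting points are nailed down, the proposition follows directly by combining the $\OPT$ formula with the $\ALG$ identity and the definition of $\alpha$-competitiveness.
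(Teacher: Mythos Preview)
Your proposal is correct and follows essentially the same approach as the paper: define $\psi(v)$ as the algorithm's cumulative utilization on the prefix $I^{\GFQ}_v$, read off monotonicity and the boundary conditions from feasibility and $\alpha$-competitiveness on the value-$1$ prefix, and obtain the main inequality by comparing $\ALG(I^{\GFQ}_v)=\psi(1)+\int_1^v w\,d\psi(w)$ against the offline optimum. Your extra worry about where the ``forced'' \GFQ acceptances occur is unnecessary---the identity $\ALG=\psi(1)+\int_1^v w\,d\psi(w)$ is pure utility bookkeeping and holds for any online algorithm regardless of the reason any given allocation was made; only the separate bound $\psi(1)\ge M$ needs the \GFQ argument.
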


\begin{proof}
    Since online algorithms operate in real-time, making irrevocable decisions solely based on causal information, $\psi(v)$ exhibits non-decreasing behavior within the interval $[1,\theta_K]$. Considering the maximum effective utilization as $B$, the utilization function must adhere to the boundary condition $\psi(\theta_K)\leq B$. Moreover, by definition, the total value attained by an $\alpha$-competitive online algorithm is at least $\frac{1}{\alpha}$ of the offline optimum for any arrival instances. Hence, within the context of instance $I^\GFQ$, we can conclude that
    \begin{align*}
        &\psi(1)\geq \frac{B}{\alpha}.
    \end{align*}
    Also $\psi(1)$ should always be at least $M$ in order to make sure the minimum allocated amount to each class constraint is satisfied. Thus
    \begin{align*}
        \psi(1)\geq \{M , \frac{B}{\alpha}\}.
    \end{align*}
    Under the conditions of instance $I^{\GFQ}$, for every $v$ falling within the interval $(\theta_{j-1}, \theta_j]$, the optimal algorithm, denoted as \OPT, maximizes the allocation by assigning the entire budget to the highest possible value ($v$). To ensure the minimum requirements of each class are met, \OPT allocates a value of $v$ to all classes $i$ where $\theta_i$ is greater than or equal to $v$, and assigns a value of $\theta_i$ to classes $i$ where $\theta_i$ is less than $v$. This means that \OPT optimally distributes the budget to classes based on their maximum possible value. Thus, the optimal value achievable in the offline setting is:
    \begin{align*}
        \OPT(I^{\GFQ})  \leq v(B-M) + v\sum_{i\geq j}m_i + \sum_{i<j}m_i \theta_i, \quad \forall v\in (\theta_{j-1}, \theta_j]
    \end{align*}
    On the other hand, the performance of any online algorithm $ \ALG $ with utilization function $ \psi $ is
    \begin{align*}
        \ALG(I^{\GFQ}) \geq \psi(1) + \int_1^v ud\psi(u).
    \end{align*}
     Combining the above analysis with the definition of $\alpha$-competitiveness,  the differential equation in Proposition \ref{lemma:Utilization_Function} follows.
\end{proof}

For any value of $\alpha$, let us define the function $\omega_{\alpha}(v):[1,\theta_K] \rightarrow R$ as follows:
\begin{align}
    \omega_{\alpha}(v) = \begin{cases}
        \frac{B}{\alpha} \cdot v & \quad v \in [1,\theta_{1}], \\
        \sum_{j=1}^{K-1} \mathbf{1}_{v \in [\theta_{j},\theta_{j+1}]} \cdot \frac{1}{\alpha}\left[v\cdot C_{j+1} + D_{j+1}\right] & \quad v \in [\theta_{1},\theta_{K}].
    \end{cases}
\end{align}
Let $v^{*} = \omega_{\alpha}^{-1}(L \cdot \max\{\frac{B}{\alpha},M\})$. It is easy to see that $v^{*}$ is well-defined. Below, we will consider the two cases where in the first case $\frac{B}{\alpha} \ge M$ and as a result $v^{*} = 1$, and in the latter $\frac{B}{\alpha} < M$ and as a result $v^{*} > 1$. 

\paragraph{Case 1: $\frac{B}{\alpha} \ge M$}\quad
Based on the inequality in Lemma \ref{lemma:Utilization_Function}, we have
\begin{align*}
    &\psi(1) + \int_1^v ud\psi(u) \geq \omega_{\alpha}(v).
\end{align*}
Then we can rewrite the integration by its parts as
\begin{align*}
    &\psi(1) + [u\psi(u)]\Big|_1^v -\int_1^v \psi(u) du \geq  \omega_{\alpha}(v).
\end{align*}
And consequently:
\begin{align*}
    \psi(v) \ge \frac{\omega_{\alpha}(v)}{v} + \frac{1}{v}\int_1^v \psi(u) du .
\end{align*}
By applying the Gronwall's inequality \cite{mitrinovic2012inequalities} we get
\begin{align*}
    &\psi(v) \geq \frac{\omega_\alpha(v)}{v} + \int_1^v \frac{\omega_\alpha(u)}{u}\cdot\frac{1}{u} du.
\end{align*}
For $v \in [\theta_{j-1}, \theta_j]$ we have:
\begin{align*}
    &\psi(v) \geq \frac{\omega_\alpha(v)}{v} + \int_1^{\theta_1} \frac{\omega_\alpha(u)}{u}\cdot\frac{1}{u} du + \sum_{i=2}^{j-1}\int_{\theta_{i-1}}^{\theta_i} \frac{\omega_\alpha(u)}{u}\cdot\frac{1}{u} du + \int_{\theta_{j-1}}^{v} \frac{\omega_\alpha(u)}{u}\cdot\frac{1}{u} du. 
\end{align*}
Using the the definition of $\omega_\alpha(\cdot)$ we have
\begin{align*}
    &\psi(v)\geq \frac{C_j}{\alpha} + \frac{D_j}{\alpha \cdot v} + \frac{B}{\alpha} \ln\left(\theta_1\right) + \sum_{i=2}^{j-1} \frac{C_i}{\alpha}\ln\left(\frac{\theta_i}{\theta_{i-1}}\right) - \frac{D_i}{\alpha}\left(\frac{1}{\theta_i} - \frac{1}{\theta_{i-1}}\right) \\
    & \hspace{5.35cm} + \frac{C_j}{\alpha}\ln\left(\frac{v}{\theta_{j-1}}\right) - \frac{D_j}{\alpha}\left(\frac{1}{v} - \frac{1}{\theta_{j-1}}\right).\nonumber
\end{align*}
By further simplifying the above inequality we can see
\begin{align*}
    &\psi(v) \geq \frac{C_j}{\alpha} + \frac{D_j}{\alpha\cdot \theta_{j-1}} -\sum_{i=2}^{j-1} \frac{D_i}{\alpha}\left(\frac{1}{\theta_i} - \frac{1}{\theta_{i-1}}\right) + \sum_{i=1}^{j-1} \frac{C_i}{\alpha}\ln\left(\frac{\theta_i}{\theta_{i-1}}\right) +\frac{C_j}{\alpha}\ln\left(\frac{v}{\theta_{j-1}}\right).
\end{align*}
By further simplification we will get
\begin{align*}
    \psi(v) \geq \frac{C_j}{\alpha} + \sum_{i=1}^{j-1} \frac{1}{\alpha\cdot\theta_i}\left(D_{i+1} - D_i\right) + \sum_{i=1}^{j-1} \frac{C_i}{\alpha}\ln\left(\frac{\theta_i}{\theta_{i-1}}\right)+\frac{C_j}{\alpha}\ln\left(\frac{v}{\theta_{j-1}}\right).
\end{align*}
Using the telescopic summation we will get
\begin{align*}
    &\psi(v) \geq \frac{C_j}{\alpha} + \sum_{i=1}^{j-1} \frac{m_i}{ \alpha} + \sum_{i=1}^{j-1} \frac{C_i}{\alpha}\ln\left(\frac{\theta_i}{\theta_{i-1}}\right)+\frac{C_j}{\alpha}\ln\left(\frac{v}{\theta_{j-1}}\right),
\end{align*}
which can be equivalently stated as
\begin{align*}
    \psi(v) \geq \frac{B}{\alpha}+\sum_{i=1}^{j-1} \frac{C_i}{\alpha}\ln\left(\frac{\theta_i}{\theta_{i-1}}\right)+\frac{C_j}{\alpha}\ln\left(\frac{v}{\theta_{j-1}}\right).
\end{align*}
Therefore by setting $v=\theta_K$ and applying the inequality from Proposition \ref{lemma:Utilization_Function}, we obtain::
\begin{align*}
    &B\geq \psi(\theta_K)\geq \frac{B}{\alpha} +\sum_{i=1}^K \frac{C_i}{\alpha}\ln\left(\frac{\theta_i}{\theta_{i-1}}\right).
\end{align*}
And as a result
\begin{align*}
    \alpha \geq 1 + \sum_{i=1}^K \frac{C_i}{B}\ln\left(\frac{\theta_i}{\theta_{i-1}}\right) = 1 + \ln(\theta_K) - \sum_{l=1}^{K-1} \frac{m_l}{B} \ln(\frac{\theta_K}{\theta_l}).
\end{align*}

\paragraph{Case 2: $M> \frac{B}{\alpha}$}\quad Let $v^{*}$ be some value in the range $[\theta_{j^{*}-1},\theta_{j^{*}}]$ for all  $j^{*} \in [1,K]$. Since $\psi(1) \ge M$, and the fact that $\psi(v)$ function is an increasing function, then we will have: 
\begin{align*}
     & v\psi(v) - \omega_{\alpha}(v) \ge \int_1^v \psi(u) du \ge \int_{1}^{v^{*}} M du + \int_{v^{*}}^{v} \psi(u)du ,
\end{align*}
which indicates that
\begin{align*}
     \psi(v)  \geq  \frac{\omega_{\alpha}(v)}{v} + (v^{*}-1) \frac{M}{v} + \frac{1}{v}\int_{v^{*}}^{v} \psi(u)du.
\end{align*}
Like the previous case, we use the Gronwall's inequality and therefore:
\begin{align*}
    & \psi(v) \geq \frac{\omega_{\alpha}(v)}{v} + (v^{*}-1) \frac{M}{v} + \int_{v^{*}}^{v} \left[\frac{\omega_{\alpha}(u)}{u} + (v^{*}-1) \frac{M}{u}\right] \frac{1}{u}du.
\end{align*}
For $v\in [\theta_{j-1}, \theta_j]$ such that $v \geq v^*$ we have:
\begin{align*}
    &\psi(v) \geq \frac{C_j}{\alpha} + \frac{D_j}{\alpha v} + (v^{*}-1) \frac{M}{v} + \int_{v^{*}}^{v} \left[\frac{\omega_{\alpha}(u)}{u} + (v^{*}-1) \frac{M}{u}\right] \frac{1}{u}du.
\end{align*}
This can be expanded as
\begin{align*}
    &\psi(v) \geq \frac{C_j}{\alpha} + \frac{D_j}{\alpha v} + (v^{*}-1) \frac{M}{v} + \int_{v^{*}}^{\theta_{j^*}} \left[\frac{\omega_{\alpha}(u)}{u} + (v^{*}-1) \frac{M}{u}\right] \frac{1}{u}du\\
    & \hspace{4.3cm}+ \sum_{i=j^* +1}^{j-1}\int_{\theta_i-1}^{\theta_{i}} \left[\frac{\omega_{\alpha}(u)}{u} + (v^{*}-1) \frac{M}{u}\right] \frac{1}{u}du \\
    & \hspace{4.3cm}+ \int_{\theta_j-1}^{v} \left[\frac{\omega_{\alpha}(u)}{u} + (v^{*}-1) \frac{M}{u}\right] \frac{1}{u}du.
\end{align*}
Replacing $\omega_\alpha(\cdot)$ by its definition we have
\begin{align*}
    &\psi(v) \geq \frac{C_j}{\alpha} + \frac{D_j}{\alpha v} + (v^{*}-1) \frac{M}{v} + \frac{C_{j^*}}{\alpha}\ln\left(\frac{\theta_{j^*}}{v^*}\right) - \left[\frac{D_{j^*}}{\alpha} + (v^* -1)M\right]\left(\frac{1}{\theta_{j^*}} - \frac{1}{v^*}\right)\\
    & \hspace{4.3cm}+ \sum_{i=j^* +1}^{j-1}\frac{C_i}{\alpha}\ln\left(\frac{\theta_{i}}{\theta_{i-1}}\right) - \left[\frac{D_{i}}{\alpha} + (v^* -1) M \right]\left(\frac{1}{\theta_{i}} - \frac{1}{\theta_{i-1}}\right) \\
    & \hspace{4.3cm}+ \frac{C_j}{\alpha}\ln\left(\frac{v}{\theta_{j-1}}\right) - \left[\frac{D_{j}}{\alpha} + (v^* -1)M\right]\left(\frac{1}{v} - \frac{1}{\theta_{j-1}}\right).
\end{align*}
By further simplifying it we have
\begin{align*}
    &\psi(v)\geq  \frac{C_j}{\alpha} + \sum_{i=j^*}^{j-1} \frac{1}{\theta_i \alpha}\left(D_{i+1} - D_i\right) + \frac{D_{j^*}}{v^*\alpha} + (v^{*}-1) \frac{M}{v^*} \\
    &\hspace{2cm}+ \frac{C_{j^*}}{\alpha}\ln\left(\frac{\theta_{j^*}}{v^*}\right) + \sum_{i=j^* +1}^{j-1}\frac{C_i}{\alpha}\ln\left(\frac{\theta_{i}}{\theta_{i-1}}\right) + \frac{C_j}{\alpha}\ln\left(\frac{v}{\theta_{j-1}}\right).
\end{align*}
Here by using the telescopic summation we will further get
\begin{align*}
    &\psi(v)\geq  \frac{C_{j^*}}{\alpha} + \frac{D_{j^*}}{v^*\alpha} + (v^{*}-1) \frac{M}{v^*} + \frac{C_{j^*}}{\alpha}\ln\left(\frac{\theta_{j^*}}{v^*}\right) + \sum_{i=j^* +1}^{j-1}\frac{C_i}{\alpha}\ln\left(\frac{\theta_{i}}{\theta_{i-1}}\right) + \frac{C_j}{\alpha}\ln\left(\frac{v}{\theta_{j-1}}\right).
\end{align*}
Now by the definition of $\omega_\alpha(v^*)$ we know: 
\begin{align*}
    \omega_\alpha(v^*) = v^*\frac{C_{j^*}}{\alpha} + \frac{D_{j^*}}{\alpha}.
\end{align*}
And also by the definition of of $v^* = \omega_\alpha^{-1}(M)$ we have:
\begin{align*}
    &\psi(v)\geq  \frac{M}{v^*} + (v^{*}-1) \frac{M}{v^*} + \frac{C_{j^*}}{\alpha}\ln\left(\frac{\theta_{j^*}}{v^*}\right) + \sum_{i=j^* +1}^{j-1}\frac{C_i}{\alpha}\ln\left(\frac{\theta_{i}}{\theta_{i-1}}\right) + \frac{C_j}{\alpha}\ln\left(\frac{v}{\theta_{j-1}}\right).\nonumber
\end{align*}
This can be simplified to
\begin{align*}
    &\psi(v)\geq  M + \frac{C_{j^*}}{\alpha}\ln\left(\frac{\theta_{j^*}}{v^*}\right) + \sum_{i=j^* +1}^{j-1}\frac{C_i}{\alpha}\ln\left(\frac{\theta_{i}}{\theta_{i-1}}\right) + \frac{C_j}{\alpha}\ln\left(\frac{v}{\theta_{j-1}}\right).
\end{align*}
If we set $v=\theta_K$ we can see that:
\begin{align*}
    &B\geq \psi(\theta_K)\geq  M + \frac{C_{j^*}}{\alpha}\ln\left(\frac{\theta_{j^*}}{v^*}\right) + \sum_{i=j^* +1}^{K}\frac{C_i}{\alpha}\ln\left(\frac{\theta_{i}}{\theta_{i-1}}\right).
\end{align*}
It is easy to verify
\begin{align*}
    \frac{C_{j^*}}{\alpha}\ln\left(\frac{\theta_{j^*}}{v^*}\right) + \sum_{i=j^* +1}^{K}\frac{C_i}{\alpha}\ln\left(\frac{\theta_{i}}{\theta_{i-1}}\right) = \frac{C_{j^*}}{\alpha}\ln\left(\frac{\theta_K}{v^*}\right) - \frac{1}{\alpha} \sum_{i=j^*}^{K-1} m_i \ln\left(\frac{\theta_K}{\theta_i}\right).
\end{align*}
Therefore,
\begin{align*}
    &B \geq M + \frac{C_{j^*}}{\alpha}\ln\left(\frac{\theta_K}{v^*}\right) - \frac{1}{\alpha} \sum_{i=j^*}^{K-1} m_i \ln\left(\frac{\theta_K}{\theta_i}\right).
\end{align*}
This implies
\begin{align*}
    &\alpha \geq \frac{1}{B - M}\Biggl[C_{j^*}\ln\left(\frac{\theta_K}{v^*}\right) - \underbrace{\sum_{i=j^*}^{K-1} m_i \ln\left(\frac{\theta_K}{\theta_i}\right)}_{X} \Biggr].
\end{align*}
Now by replacing $v^* = \frac{\alpha M - D_{j^*}}{C_{j^*}}$ we get
\begin{align*}
    &\alpha \geq \frac{1}{B - M}\left[C_{j^*}\ln\left(\frac{\theta_{K} C_{j^*}}{\alpha M - D_{j^*}}\right) - X\right].
\end{align*}
Therefore,
\begin{align*}
    \left(\alpha M - D_{j^*}\right) \exp\left(\frac{\alpha (B-M)}{C_{j^*}}\right) \geq \theta_{K}C_{j^*}\exp\left(-\frac{X}{C_{j^*}}\right).
\end{align*}
Let us define $P = \frac{\alpha (B-M)}{C_{j^*}}$. Then:
\begin{align*}
    &\Biggl(\underbrace{P-\frac{D_{j^*}(B-M)}{C_{j^* }\cdot M}}_{S}\Biggr)e^P  \geq  \frac{\theta_{K}(B-M)}{M} \exp\left(-\frac{X}{C_{j^*}}\right).
\end{align*}
As a result
\begin{align*}
    &S \cdot e^S  \geq \frac{\theta_{K}(B-M)}{M} \exp\left(-\frac{X}{C_{j^*}}\right)\exp\left(-\frac{D_{j^*}(B-M)}{C_{j^*} \cdot M}\right).
\end{align*}
Using the definition of Lambert W-function we have:
\begin{align*}
    &S \geq  W\left(\frac{\theta_{K}(B-M)}{M} \exp\left(-\frac{X}{C_{j^*}}\right)\exp\left(-\frac{D_{j^*}(B-M)}{C_{j^*} \cdot M}\right)\right).
\end{align*}
Now by replacing $S$ and $P$ we will get
\begin{align*}
    &\alpha \geq \frac{D_{j^*}}{M}+ \frac{C_{j^*}}{B-M}W\left(\frac{\theta_{K}(B-M)}{M} \exp\left(-\frac{X}{C_{j^*}}\right)\exp\left(-\frac{D_{j^*}(B-M)}{C_{j^*} \cdot M}\right)\right).
\end{align*}
This completes the proof of Theorem \ref{theorem-gfq-lowerbound} and shows the optimality of the design presented in Theorem \ref{theorem-LowerBound-multi-Class} in all cases.

\subsection{Case Study $K =2$.}\label{appendix-gfq-case-study}
Let us consider the case that the arrivals could be from only two different classes. Based on the results of Theorem \ref{theorem-LowerBound-multi-Class}, we can see that depending on the $M$, there could be three different possibilities for the threshold function. Let us go through these three cases in details:

$\bullet$ If $M \leq  \frac{M}{\alpha_0^*}$, where $\alpha_0^*$ is the competitive ratio and it is defined as follows:
\begin{align*}
     \alpha_{0}^* \coloneqq 1 + \ln\theta_2 - \frac{m_1}{B} \ln\frac{\theta_2}{\theta_1}, 
\end{align*}
then the $\phi$ function is given by 
 \begin{align*}
    \phi(u) = \begin{cases}
        1 & u\in [0, \frac{B}{\alpha_{0}^*} - M],\\
        e^{\left(\frac{\alpha_0^* (u+M)}{B}-1\right)} & u \in (\frac{B}{\alpha_{0}^*} - M,\frac{B}{\alpha_0^*} - M +\frac{B}{\alpha_0^*}\ln\theta_1],\\
        e^{\left(\frac{\alpha_0^* (u+M)-B - m_1 \ln\theta_1}{B-m_1}\right)} & u \in (\frac{B}{\alpha_0^*} - M + \frac{B}{\alpha_0^*}\ln\theta_1,B-M].
    \end{cases}
\end{align*}
As we can observe, the optimal competitive ratio in this case is lower than that of the \OMcC without the \GFQ guarantee, which may seem counter intuitive. This is because, although adding \GFQ makes the problem more challenging, it also imposes restrictions on the adversary, potentially reducing the performance of the optimal offline algorithm as well. Another key observation is that when $M \leq \frac{B}{\alpha_0^*}$, the only \GFQ parameter that affects the competitive ratio is $m_1$. We will later numerically analyze how changes in $m_2$ impact the competitive ratio.

$\bullet$ If $M \in (\frac{B}{\alpha_1^*},\frac{B}{\alpha_{1}^*} \theta_{1} ]$, where $\alpha_{1}^*$ is defined as
\begin{align*}
     \alpha_{1}^* \coloneqq \frac{B}{B-M}W\left(\frac{\theta_{2}(B-M)}{M} e^{\left(-\frac{m_1}{B}\ln\frac{\theta_2}{\theta_1}\right)}\right),
\end{align*}
then Algorithm \ref{alg-function-daynamic-threshold-K-class} is $ \alpha_1^* $-competitive if $ \phi $ is given by
\begin{align*}
        \phi(u) = \begin{cases}
            v^* e^{\left(\frac{\alpha_1^* u}{B}\right)} & u \in [0,\frac{B}{\alpha_1^*}\ln\frac{\theta_1^*}{v^*}],\\
            v^*e^{\left(\frac{\alpha_1^* u - m_1 \ln\frac{\theta_1}{v^*}}{B-m_1}\right)} & u \in [\frac{B}{\alpha_1^*}\ln\frac{\theta_1}{v^*},B-M],
        \end{cases}
    \end{align*}
where $v^* = (\alpha_1^*\cdot M)/B$. 
    
Here, we observe that the number of segments in the threshold function will decrease by 1. This occurs because to satisfy the \GFQ requirement, the algorithm must accept more arrivals at the beginning of the allocation process, regardless of their valuation. This puts \ALG at a disadvantage compared to the offline optimal solution \OPT. Consequently, in the optimal algorithm, to compensate for this, the threshold function begins to increase from a value $v^*$ between 1 and $\theta_1$.
    
$\bullet$ If $M \in (\frac{B}{\alpha_{2}^*}\theta_{1}, B]$, where $\alpha_{2}^*$ is defined as 
\begin{align*}
   \alpha_{2}^* \coloneqq \theta_1\frac{m_1}{M} + \frac{B-m_1}{B-M}W\left(\frac{\theta_{2}(B-M)}{M} e^{\left(-\frac{\theta_1m_1(B-M)}{(B-m_1)M}\right)}\right),
\end{align*}
then Algorithm \ref{alg-function-daynamic-threshold-K-class} is $ \alpha_2^* $-competitive if $ \phi $ is given by
\begin{align*}
    \phi(u) = v^*e^{\left(\frac{\alpha_2^* u}{B-m_1}\right)}, \qquad  \forall u \in [0,B-M].
\end{align*}
where $v^* = (\alpha_2^*M - m_1\theta_1)/(B-m_1)$.

Here we can see that when $ M $ approaches $ B $, $\alpha_2^* $ converges to \( (\theta_{1}m_{1}+ \theta_{2}(B-m_1))/B \). The intuition is that when $M=B$, due to the fairness requirement, no online algorithm can effectively reserve any portion of its resource for future agents. Thus, in the worst case, no online algorithm can perform better than $B$, while the offline optimal algorithm may achieve the maximum revenue $\theta_{1}m_{1}+ \theta_{2}(B-m_1)$, leading to the worst-case competitive ratio $ (\theta_{1}m_{1}+ \theta_{2}(B-m_1))/B$. 

At first glance, it might appear that the three intervals of $M$ that define the three cases in Theorem \ref{theorem-LowerBound-multi-Class} are not continuous and do not fully cover the range of $ [0, B] $. However, as the value of $M$ approaches the end-point of one interval (e.g., the end-point $ \frac{B}{\alpha_1^*} \theta_1 $ of the second interval), the start-point of the next interval (e.g., $ \frac{B}{\alpha_2^*} \theta_1$) also converges to the end-point of the last interval. This observation is illustrated in Figure \ref{fig:GFQ_alphas}, where we fix all the parameters except the value of $m_{2}$ and show how the competitive ratio of Algorithm \ref{alg-function-daynamic-threshold-K-class} changes with variations in $ M $. As $m_2$ increases, the competitive ratio of Algorithm \ref{alg-function-daynamic-threshold-K-class}, denoted by $ \CR^* $, continuously increases. This outcome was foreseeable, since we should allocate a larger share of resources to agents regardless of their valuations to ensure fairness guarantee. Moreover, we can observe that $ \CR^* $ switches from $ \alpha_0^* $ to $ \alpha_1^* $ and $ \alpha_2^* $ w.r.t. the increase of $ m_2 $ (or equivalently, the increase of $ M $). 

\begin{figure}
    \centering
    \includegraphics[width=0.5\linewidth]{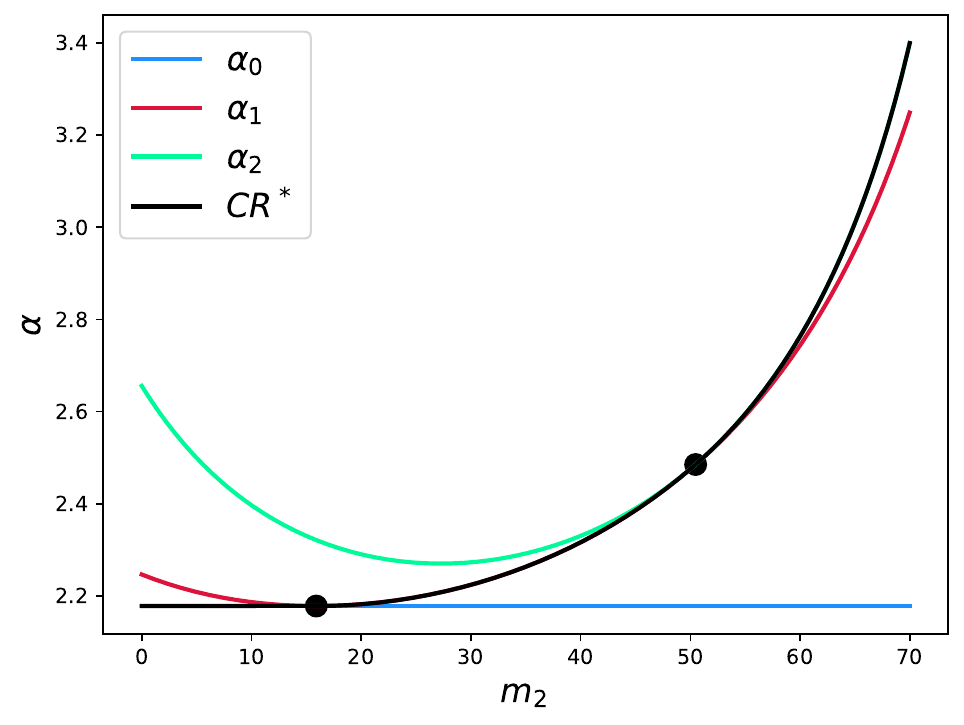}
    \caption{Illustrating the competitive ratio as a function of $m_2$. $\CR^*$ denotes the optimal competitive ratio.}
    \label{fig:GFQ_alphas}
\end{figure}

\section{Proofs of Section \ref{section-proportional-fairness}}\label{appendix-proportional-fairness}
\subsection{Proof of Theorem \ref{theorem-proportional-fairness-beta}}\label{appendix-proportional-fairness-beta}
Our goal is to determine the dynamics for the increasing threshold function $\phi_j(u_t)$ in Algorithm \ref{alg-proportional-fairness} to achieve $\beta$-\PF. For this, consider the function $\Upsilon_j(v)$, defined for any $v \in [1, \theta_j]$, as
\begin{align*}
    \Upsilon_j(v) = \argmax_{a \geq 0} \left( a \cdot v - \int_{0}^{a} \phi_j(u) \, du \right).
\end{align*}
Then for any instance $I$ with at least one arrival and set $ A$ be the set of classes that has at list one arrival. Let $\mathbf{v}$ be the vector of the maximum received valuation from each class that has at least one arrival. The utility of the classes in $j\in A$ is therefore lower-bounded as
\begin{align*}
    U_j(\mathbf{x}) \geq &\Upsilon_j(1)+ \int_{\Upsilon_j(1)}^{\Upsilon_j(v_j)} \phi_j(u) du \\
    =& \Upsilon_j(1)+ \frac{B}{K\cdot \beta}\exp\left(\frac{K\cdot\beta\cdot u}{B} -1\right)\bigg\rvert_{\Upsilon_j(1)}^{\Upsilon_j(v_j)}\\
    = & \frac{B}{\sum_i\alpha_i}+ \frac{B}{K\cdot \beta}\left(v_j-1\right)\\
    = & \frac{B}{K\cdot \beta} v_j, \quad \forall j\in A.
\end{align*}
On the other hand the utility of any allocation is upper-bounded as
\begin{align*}
    U_j(\mathbf{w}) \leq v_j\cdot w_j, \quad \forall j\in A,
\end{align*}
where $\sum_{j\in A}w_j \leq B$. Therefore:
\begin{align*}
    \frac{1}{K}\sum\nolimits_{j\in A}\frac{U_j(\mathbf{w})}{U_j(\mathbf{x})} \leq \frac{1}{K}\sum\nolimits_{j\in A}\frac{v_j\cdot w_j}{\frac{B}{K\cdot \beta} v_j} \leq \beta\sum\nolimits_{j\in A}\frac{w_j}{B} \leq \beta.
\end{align*}

\subsection{Proof of Theorem \ref{theorem-proportional-fairness-lower-bound}}\label{appendix-proportional-fairness-lower-bound-fairness}
Here we first define the hard instance that we considered.

\begin{lettereddef}[$\beta$-\PF Fairness Guarantee Hard Instance: $I^{\PF}$] \label{def_instance_I_p_pf}
    Instance $I^{\PF}$ is defined as a scenario characterized by a at most $K$ continuous, non-decreasing sequence of valuation arrivals segments. In this scenario, first there are a sequence of arrivals from class $1$, followed by the second sequence of arrivals all from class $2$ and and this continues until the arrivals of class $K$. For some value of $\epsilon$ such that $\epsilon \rightarrow 0$, instance $I^{\PF}$ can be shown as follows:
    \begin{align*}
        I^{\PF} = \Biggl\{\underbrace{(1,1), (1+\epsilon, 1), \dots , (\theta_1, 1)}_{\text{First batch of arrivals}}, \underbrace{(1, 2), (1+\epsilon,2), \dots , (\theta_2,2)}_{\text{Second batch of arrivals}}, \dots,\underbrace{(1,K), (1+\epsilon,K), \dots , (\theta_K,K)}_{K\text{-th batch of arrivals}}  \Biggr\},
    \end{align*}
    where in above $(v, j)$, $\forall j \in [k]$, corresponds to a buyer with valuation equal to $v$ from class $j$.
\end{lettereddef}
\begin{lettereddef}
     A utilization function $\psi_j(v):[1, \theta_j] \to [0, b_j],~\forall j\in[K]$  is defined as the final utilization of the effective budget $b_j$ after executing the instance $I^{\PF}$ by an online algorithm, where $\sum_{j\in[K]} b_j\leq B$.
\end{lettereddef}
\begin{letteredprop}
    \label{lemma-proportionally-fair}
    If there exists an $\beta$-proportionally fair online algorithm, there must exist a utilization function $\psi_j(u):[1,\theta_j]\to [0,b_j]$ for each class $j\in[K]$ such that $\psi_j(\cdot)$ is a non-decreasing function and satisfies
    \begin{align*}
        \begin{cases}
            \frac{1}{K} \sum_{j=1}^K \frac{v_j w_j}{\Psi_j(v_j)} \leq \beta,\\
            \sum_{j=1}^K w_j = B,\\
            \frac{B}{\beta}\leq \sum_{j=1}^K\psi_j(1),\\
            \sum_{j=1}^K \psi_j(\theta_j) \leq B,
        \end{cases}
    \end{align*}
    where $\Psi_j(v_j) = \psi_j(1) + \int_1^{v_j} u d \psi_j(u)$.
\end{letteredprop}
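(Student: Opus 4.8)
The plan is to mirror the utilization-function argument used for \GFQ in Proposition~\ref{lemma:Utilization_Function}, now adapted to the batch-structured hard instance $I^{\PF}$ of Definition~\ref{def_instance_I_p_pf}. Fix any $\beta$-\PF online algorithm; since the $\beta$-\PF guarantee must hold for every realized allocation, we may assume it is deterministic. Run it on $I^{\PF}$ and, for each class $j\in[K]$, let $\psi_j(v)$ denote the total amount of resource the algorithm has committed to agents of class $j$ by the moment the class-$j$ batch has reached valuation $v$ (taking the $\epsilon\to 0$ limit so that $v$ ranges continuously over $[1,\theta_j]$). Because allocations are irrevocable and nonnegative and the class-$j$ valuations arrive in nondecreasing order, $\psi_j$ is nondecreasing; set $b_j:=\psi_j(\theta_j)$ so that $\psi_j:[1,\theta_j]\to[0,b_j]$. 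The global budget constraint of the algorithm immediately gives $\sum_{j\in[K]}\psi_j(\theta_j)=\sum_{j\in[K]}b_j\le B$, which is the last required inequality, and $\sum_{j\in[K]} w_j=B$ is simply the normalization we impose on the competing allocation $\mathbf{w}$.

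Next I would express the class utilities through $\psi_j$. On the prefix of $I^{\PF}$ in which the class-$j$ batch is truncated at valuation $v_j$, the mass the algorithm places ``at valuation level $u$'' inside class $j$ is $d\psi_j(u)$, and the jump $\psi_j(1)$ at $u=1$ carries valuation $1$; hence $U_j(\mathbf{x})=\psi_j(1)+\int_1^{v_j} u\,d\psi_j(u)=\Psi_j(v_j)$, exactly the quantity in the statement. A preliminary but essential observation is that a $\beta$-\PF algorithm must allocate a strictly positive amount to \emph{every} class that has at least one arrival: otherwise, putting positive mass of $\mathbf{w}$ on an agent of an unserved class makes $\tfrac1K\sum_j U_j(\mathbf{w})/U_j(\mathbf{x})$ equal to $+\infty$, contradicting $\beta$-\PF. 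In particular each $\psi_j(1)>0$, so all denominators $\Psi_j(v_j)$ are positive and the ratios are well defined.

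With these in hand, the two remaining inequalities come from instantiating the $\beta$-\PF guarantee on suitable (prefixes of) $I^{\PF}$ against well-chosen $\mathbf{w}$. For the first inequality, I consider the truncated instance in which each class $j$'s batch is presented up to valuation $v_j$; the algorithm then produces $\mathbf{x}$ with $U_j(\mathbf{x})=\Psi_j(v_j)$, and since that instance contains an agent of class $j$ with valuation $v_j$ (and rate limit $B$), any $\mathbf{w}$ with $\sum_j w_j=B$ that loads $w_j$ onto that agent achieves $U_j(\mathbf{w})=v_j w_j$, so $\beta$-\PF reads precisely $\tfrac1K\sum_{j}\frac{v_j w_j}{\Psi_j(v_j)}\le\beta$. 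For the third inequality, I take for each $j$ the prefix of $I^{\PF}$ that stops immediately after the first (valuation-$1$) agent of the class-$j$ batch; there $U_j(\mathbf{x})=\psi_j(1)$, and loading all of $B$ onto that single agent (so $U_j(\mathbf{w})=B$ and $U_i(\mathbf{w})=0$ for $i\ne j$) forces $\tfrac1K\cdot\frac{B}{\psi_j(1)}\le\beta$, i.e.\ $\psi_j(1)\ge\frac{B}{K\beta}$; summing over $j$ gives $\sum_{j\in[K]}\psi_j(1)\ge\frac{B}{\beta}$.

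The main obstacle is the bookkeeping forced by the \emph{sequential batch structure} of $I^{\PF}$: unlike the fully interleaved instance used for \GFQ, here the algorithm's state when class $j$'s batch begins already depends on how the earlier batches $1,\dots,j-1$ unfolded, so one must argue that the truncation pattern realizing a given profile $(v_1,\dots,v_K)$ is consistent with a single family $\{\psi_j\}$ (or, equivalently, let the adversary fix the truncation points $v_j$ and read off the utilization functions along that chosen branch). Making this precise — together with the $\epsilon\to 0$ limit that turns the discrete valuation ladder into the continuous domain $[1,\theta_j]$ and justifies the Riemann--Stieltjes identity $U_j(\mathbf{x})=\Psi_j(v_j)$ — is where the real work lies; the rest is a direct transcription of the $\beta$-\PF inequality.
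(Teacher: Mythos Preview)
Your proposal is correct and follows essentially the same approach as the paper: define class-wise utilization functions on the hard instance $I^{\PF}$, read off the budget constraint $\sum_j\psi_j(\theta_j)\le B$, and instantiate the $\beta$-\PF inequality on prefixes to obtain the remaining conditions. The one minor difference is in deriving $\sum_j\psi_j(1)\ge B/\beta$: the paper invokes a separate ``all-valuations-$1$'' instance, whereas you stay within $I^{\PF}$ by stopping right after the first class-$j$ agent and loading all of $B$ there; both routes give $\psi_j(1)\ge B/(K\beta)$ and then sum. Your worry about simultaneously realizing an arbitrary profile $(v_1,\dots,v_K)$ is well-spotted but harmless for the application, since the proof of Theorem~\ref{theorem-proportional-fairness-lower-bound} only ever uses prefix-consistent profiles $(\theta_1,\dots,\theta_{j-1},v,\text{---},\dots,\text{---})$, which $I^{\PF}$ delivers directly.
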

\begin{proof}
    Since the maximum utilization is $1$, the utilization function must satisfy the boundary condition $\sum_{j=1}^K \psi_j(\theta_j) \leq B$. Additionally, under an instance where all arrivals from each class $j\in [K]$ have the valuations of $1$ we can see that:
    \begin{align*}
        U_j(\mathbf{x}) \geq \psi_j(1)\quad\forall j \in [K],\\
        U_j(\mathbf{w}) \leq B \quad\forall j \in [K].
    \end{align*}
    Based on the definition of a $\beta$-proportional fair algorithms we must have:
    \begin{align*}
        \frac{1}{K} \sum_{j=1}^K \frac{U_j(\mathbf{w})}{U_j(\mathbf{x})} \leq \frac{1}{K}\cdot \frac{B}{\psi_j(1)} \leq \beta.
    \end{align*}
    By summation over all $j$'s we can see that:
    \begin{align*}
        \frac{B}{\beta}\leq \sum_{j=1}^K \psi_j(1).
    \end{align*}
    Also the utility of each class is lower bounded by
    \begin{align*}
         &U_j(\mathbf{x}) \geq \psi_j(1) + \int_1^{v_j} u d \psi_j(u) = \Psi_j(v_j).
    \end{align*}
    Additionally, in the worst case, allocation $\mathbf{w}$ will just select the highest valuation of each class, which implies $U_j(\mathbf{w}) \leq v_j w_j$. As a result, in the worst case for any online algorithm we can see
    \begin{align*}
         &\frac{1}{K} \sum_{j=1}^K \frac{U_j(\mathbf{w})}{U_j(\mathbf{x})} \leq \frac{1}{K} \sum_{j=1}^K \frac{v_j w_j}{\Psi_j(v_j)}.
    \end{align*}
    Since the algorithm is $\beta$-proportional fair, in the worst case it also needs to be $\beta$-proportionally fair. Therefore
    \begin{align*}
         & \frac{1}{K} \sum_{j=1}^K \frac{v_j w_j}{\Psi_j(v_j)} \leq \beta.
     \end{align*}
     We thus complete the proof of Proposition \ref{lemma-proportionally-fair}. 
\end{proof}
Let us first consider the case where the valuation \( v \) in instance \( I^{\PF} \) belongs to class 1. Since all arrivals are from this class, the proportional-fairness condition simplifies to:
\begin{align*}
    &\frac{1}{K}\frac{v\cdot w_1}{\Psi_1(v)} \leq \frac{1}{K}\frac{v\cdot B}{\Psi_1(v)}\leq \beta.
\end{align*}
Therefore
\begin{align*}
    & \frac{v\cdot B}{\beta\cdot K } \leq \psi_1(1) + \int_1^{v} u d \psi_1(u).
\end{align*}
Applying Gronwall's inequality and setting \( v = \theta_1 \), we obtain
\begin{align*}
    \frac{B}{K\cdot\beta}\alpha_1 \leq \psi_1(\theta_1).
\end{align*}
Next, assume that \( v \) belongs to class 2. The \(\beta\)-\PF condition now reduces to 
\begin{align*}
    &\frac{1}{K}\left(\frac{\theta_1\cdot w_1}{\Psi_1(\theta_1)} + \frac{v\cdot w_2}{\Psi_2(v)}\right) \leq \beta.
\end{align*}
By the definition of proportional fairness in Definition \ref{def_proportional_fairness}, any allocation \( \mathbf{w} \) must satisfy \(\beta\)-\PF. In particular, if the entire resource is allocated to class 2, the condition remains valid, i.e.,
\begin{align*}
    &\frac{1}{K}\frac{v\cdot B}{\Psi_2(v)}\leq \beta,
\end{align*}
which implies that
\begin{align*}
    & \frac{v\cdot B}{\beta\cdot K } \leq \psi_2(1) + \int_1^{v} u d \psi_i(u).
\end{align*}
Again applying Gronwall's inequality and substituting \( v = \theta_2 \), we derive
\begin{align*}
    \frac{B}{K\cdot\beta}\alpha_2 \leq \psi_2(\theta_2).
\end{align*}
Continuing the same approach we will get $\frac{B}{K\cdot \beta}\alpha_j \leq \psi_j(\theta_j)$. Therefore by summation over all $j\in[K]$ we will get
\begin{align*}
    &\sum\nolimits_{j\in[K]} \frac{B}{K\cdot \beta}\alpha_j \leq \sum\nolimits_{j\in[K]} \psi_j(\theta_j) \leq B.
\end{align*}
As a result
\begin{align*}
    \frac{1}{K}\sum\nolimits_{j\in[K]}\alpha_j \leq \beta.
\end{align*}
This completes the proof of Theorem \ref{theorem-proportional-fairness-lower-bound} and shows the optimality of the algorithm design presented in Theorem \ref{theorem-proportional-fairness-beta}.

\subsection{Proof of Theorem \ref{theorem-proportional-fairness-trade-off-design}} \label{appendix-theorem-proportional-fairness-trade-off-design}
The design of increasing threshold functions $\phi_j(\cdot)$, $j \in [K]$, follows the same design as  Theorem \ref{theorem-proportional-fairness-beta}. Consequently, we can argue that Algorithm \ref{theorem-proportional-fairness-beta} is  $\beta$-\PF for a given $\beta \in [\frac{1}{K} \sum_{j \in [K]} \alpha_{j},\infty]$. Next, let us find the competitive ratio of the Algorithm \ref{alg-proportional-fairness-tradeoff-local-globol} following the threshold design done in Theorem \ref{theorem-proportional-fairness-trade-off-design}. Consider an input instance $I$ such that from each class $j \in [k]$ the agent with maximum valuation has value $v_{i}$. Then, $\OPT(I)$, the optimal offline objective on input instance $I$, is upper-bounded as follows:
\begin{align*}
    \OPT(I) \leq \max_{j\in [K]} \{B\cdot v_j\}.
\end{align*}
Let $\ALG(I)$ denote the objective of the Algorithm \ref{alg-proportional-fairness-tradeoff-local-globol} on the input instance $I$. It follows that: 
\begin{align*}
    \ALG(I) \geq \left( \sum_{j \in [K]}  \psi_{j}(1) + \int_{\eta = 1}^{v_{j}} \eta \cdot d\psi_{j}(\eta) \right) + \psi^{G}(1) + \int_{\eta=1}^{\max_{j \in [K]}{\{v_{j}\}}} \eta \cdot d\psi^{G}(\eta),
\end{align*}
where in above functions $\psi_{j}:[1,\theta_{j}]\rightarrow [0,\frac{B\cdot \alpha_{j}}{K \cdot \beta}]$, $\forall j \in [K]$, are defined as $\psi_{j}(\eta) = \max{\{x \geq 0 | \phi_{j}(x) \leq \eta\}}$ and function $\psi^{G}:[1,\theta_{K}]\to [0,B]$  defined as $\psi_{G}(\eta) = \max{\{x \geq 0 | \phi^{G}(x) \leq \eta\}}$. 
Based on the design of threshold functions $\phi_{j}$ and $\phi^{G}$ done in Theorem \ref{theorem-proportional-fairness-trade-off-design}, it follows that:
\begin{align*}
    \ALG(I) & \ge   \psi_{j}(1) + \int_{\eta = 1}^{v_{j}} \eta \cdot d\psi_{j}(\eta) + \psi^{G}(1) + \int_{\eta=1}^{v_{j}} \eta \cdot d\psi^{G}(\eta) \\
    & \ge  \frac{B}{K\cdot \beta} + \int_{\eta = 1}^{v_{j}} \frac{B}{K \cdot \beta} d\eta + B\cdot \frac{1-\frac{\sum_{j \in [K]} \alpha_{j}}{K\cdot\beta}}{\alpha_{K}} + B\cdot \int_{\eta=1}^{v_j}  \frac{1-\frac{\sum_{j \in [K]} \alpha_{j}}{K\cdot\beta}}{\alpha_{K}} d\eta \\
    & = v_j \cdot B \cdot  \left( \frac{1}{K\cdot \beta} + \frac{1-\frac{\sum_{j \in [K]} \alpha_{j}}{K\cdot\beta}}{\alpha_{K}} \right) = v_j \cdot B \cdot \frac{1-\frac{\sum_{j \in [K-1]} \alpha_{j}}{K\cdot\beta}}{\alpha_{K}},
\end{align*}
where in above the second inequality follows from the definition of $\phi_{j}$ functions. Consequently, it follows that:
\begin{align*}
    \ALG(I) \ge \max_{j \in [K]}{\{B \cdot v_j\}} \cdot \frac{1-\frac{\sum_{j \in [K-1]} \alpha_{j}}{K\cdot\beta}}{\alpha_{K}} \ge \OPT(I) \cdot \frac{1-\frac{\sum_{j \in [K-1]} \alpha_{j}}{K\cdot\beta}}{\alpha_{K}}.
\end{align*}
As a result, we can see the that the design presented in Theorem \ref{theorem-proportional-fairness-trade-off-design} is $\left(\frac{1-\frac{\sum_{j \in [K-1]} \alpha_{j}}{K\cdot\beta}}{\alpha_{K}}\right)$ while it is $\beta$-\PF.

\subsection{Proof of Theorem \ref{theorem-proportional-fairness-alpha-lowerbound}}\label{appendix-proportional-fairness-alpha-lowerbound}
Here we again start with the define the hard instance that we considered.

\begin{lettereddef}[Efficiency-Fairness Trade-off for $\beta$-\PF Guarantee Hard Instance: $I^{to-\PF}$] \label{def_instance_I_p_pf_to}
    Instance $I^{to-\PF}$ is defined as a scenario characterized by a at most $K$ continuous, non-decreasing sequence of valuation arrivals segments. In this scenario, first there are a sequence of arrivals from class $K$, followed by the second sequence of arrivals all from class $K-1$ and this continues until the arrivals of class $1$. For some value of $\epsilon$ such that $\epsilon \rightarrow 0$, instance $I^{to-\PF}$ can be shown as follows:
    \begin{align*}
        I^{to-\PF} = \Biggl\{\underbrace{(1,K), (1+\epsilon, K), \dots , (\theta_K,K)}_{\text{First batch of arrivals}},& \underbrace{(1, K-1), (1+\epsilon, K-1), \dots , (\theta_{K-1}, K-1)}_{\text{Second batch of arrivals}},\\& \dots,\underbrace{(1,1), (1+\epsilon,1), \dots , (\theta_1,1)}_{K\text{-th batch of arrivals}}  \Biggr\},
    \end{align*}
    where in above $(v, j)$, $\forall j \in [K]$, corresponds to a buyer with valuation equal to $v$ from class $j$.
\end{lettereddef}
Now, based on the Proposition \ref{lemma-proportionally-fair}, in order to guarantee the $\beta$-proportional fairness, it is necessary to have:
\begin{align*}
    &b_1\geq \psi_1(\theta_1) \geq B\cdot\frac{1 + \ln \theta_1}{K\cdot \beta} = \frac{B\cdot \alpha_1}{K\cdot \beta },\\
    &b_2\geq \psi_2(\theta_2) \geq B\cdot\frac{1 + \ln \theta_2}{K\cdot \beta} = \frac{B\cdot\alpha_2}{K\cdot \beta},\\
    &\vdots \\
    &b_K\geq \psi_K(\theta_K) \geq B\cdot\frac{1 + \ln \theta_K}{K\cdot \beta} = \frac{B\cdot\alpha_K}{K\cdot \beta}.
\end{align*}
Therefore, the remaining portion of the resource is $B-\sum_{i\in[K]}\frac{B\cdot\alpha_i}{K\cdot \beta}$. Thus in order to guarantee the $\alpha$-competitiveness, there must exist a global utilization function $\psi^G(\cdot):[1,\theta_K]\to [0,B-\sum_{i\in[K]}\frac{B\cdot\alpha_i}{K\cdot \beta}]$ such that
\begin{align*}
    \frac{B}{\alpha}\max\nolimits_j\{v_j\}\leq \sum\nolimits_{i\in[K]}\Psi_i(v_i) + \Psi^G(\max\nolimits_j\{v_j\}),
\end{align*}
where $\Psi_i(v) = \psi_i(1)+\int_1^v \eta d \psi_i(\eta)$ for all $i\in[K]$ and $\Psi^G(v) = \psi^G(1)+\int_1^v \eta d \psi^G(\eta)$.
Now consider a case that all the arrivals are from class $K$. Therefore
\begin{align*}
    &\psi_K(1) + \int_1^{v_K} \eta ~d \psi_K(\eta) + \psi^G(1) + \int_1^{v_K} \eta ~d \psi^G(\eta) \geq \frac{B}{\alpha}v_K,\\
    &v_K\cdot \psi_K(v_K) - \int_1^{v_K}\psi_i(\eta)d\eta + v_K\cdot \psi^G(v_K) - \int_1^{v_K}\psi^G(\eta)d\eta \geq \frac{B}{\alpha}v_K.
\end{align*}
Again by using the Gronwall's inequality and substituting $v_K$ by $\theta_K$ we will get
\begin{align*}
    &\psi_K(\theta_K) + \psi^G(\theta_K) \geq \frac{B}{\alpha}(1+\ln\theta_K).
\end{align*}
As a result
\begin{align*}
    \frac{B\cdot\alpha_K}{K\cdot \beta } + B - \sum\nolimits_{j\in[K]} \frac{B\cdot\alpha_j}{K\cdot \beta} \geq \frac{B}{\alpha} \alpha_K.
\end{align*}
And finally we get
\begin{align*}
    \alpha\geq \frac{\alpha_K}{1 - \frac{\sum\nolimits_{j\in[K-1]} \alpha_j}{K\cdot \beta}}.
\end{align*}

\begin{letteredcor}[$\beta$-\PF Competitive Ratio Upper-Bound]\label{theorem-proportional-fariness-alpha}
    Algorithm \ref{alg-proportional-fairness} is (\(\sum_{j\in[K]} \alpha_j\))-competitive and (\(\frac{1}{K}\sum_{j\in[K]} \alpha_j\))-\PF with the design of the threshold functions described in Theorem \ref{theorem-proportional-fairness-beta}.
\end{letteredcor}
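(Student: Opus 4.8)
The plan is to obtain this corollary as the boundary specialization of Theorem \ref{theorem-proportional-fairness-trade-off-design}, with a short direct argument as backup. The $\beta$-\PF half of the statement needs no new work: Theorem \ref{theorem-proportional-fairness-beta} already certifies that Algorithm \ref{alg-proportional-fairness}, run with the threshold functions $\{\phi_j\}_{j\in[K]}$ specified there, is $\beta$-\PF with $\beta=\frac1K\sum_{j\in[K]}\alpha_j$. So it remains only to bound the competitive ratio by $\sum_{j\in[K]}\alpha_j$.

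For the competitive ratio, I would first invoke Theorem \ref{theorem-proportional-fairness-trade-off-design} at the extreme point $\beta=\frac1K\sum_{j\in[K]}\alpha_j$. At this value the reserved budgets become $b_j=\frac{B\alpha_j}{K\beta}=\frac{B\alpha_j}{\sum_{i\in[K]}\alpha_i}$, so $\sum_{j\in[K]}b_j=B$ and the set-aside resource $B-\sum_{j\in[K]}b_j$ vanishes; the global threshold function $\phi^G$ then has the degenerate domain $\{0\}$, forcing $x_t^2=0$ for every agent $t$. Hence Algorithm \ref{alg-proportional-fairness-tradeoff-local-globol} collapses to Algorithm \ref{alg-proportional-fairness} with precisely the Theorem \ref{theorem-proportional-fairness-beta} design, and the ratio it guarantees, $\alpha_K\bigl(1-\tfrac{\sum_{j\in[K-1]}\alpha_j}{K\beta}\bigr)^{-1}$, simplifies via $K\beta=\sum_{i\in[K]}\alpha_i$ to $\alpha_K\cdot\frac{\sum_{i\in[K]}\alpha_i}{\sum_{i\in[K]}\alpha_i-\sum_{j\in[K-1]}\alpha_j}=\sum_{i\in[K]}\alpha_i$.

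For completeness I would also supply a direct argument not routed through \SAMT. Fix an instance $I$, let $v_{\max}=\max_{t\in[T]}v_t$, and let $j^\star$ be the class of an agent attaining it; such an agent exists, so class $j^\star$ has an arrival. The per-class utility computation inside the proof of Theorem \ref{theorem-proportional-fairness-beta} shows that every class $j$ with at least one arrival and largest received valuation $v_j$ obtains $U_j(\mathbf{x})\ge\frac{B}{K\beta}v_j=\frac{B}{\sum_{i\in[K]}\alpha_i}v_j$; applying this to $j^\star$ gives $\ALG(I)\ge U_{j^\star}(\mathbf{x})\ge\frac{B\,v_{\max}}{\sum_{i\in[K]}\alpha_i}$. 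Since every feasible allocation uses at most $B$ units of resource, each worth at most $v_{\max}$ per unit, $\OPT(I)\le B\,v_{\max}$, whence $\OPT(I)/\ALG(I)\le\sum_{i\in[K]}\alpha_i$.

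There is no substantive obstacle, as the corollary is essentially a boundary case of an already-proved theorem. The two points requiring a line of care are: (i) checking the degeneracy $B-\sum_{j\in[K]}b_j=0$ at $\beta=\frac1K\sum_{j\in[K]}\alpha_j$, so that the reduction of Algorithm \ref{alg-proportional-fairness-tradeoff-local-globol} to Algorithm \ref{alg-proportional-fairness} is exact; and (ii) in the direct argument, observing that the class realizing $v_{\max}$ automatically has an arrival, so the Theorem \ref{theorem-proportional-fairness-beta} lower bound on $U_{j^\star}(\mathbf{x})$ applies. Classes with no arrivals and the zero-denominator conventions of Definition \ref{def_proportional_fairness} are already handled upstream in Theorem \ref{theorem-proportional-fairness-beta}.
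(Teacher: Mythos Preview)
Your proposal is correct. Your second, direct argument is essentially the paper's own proof: the paper bounds $\ALG(I)\ge\sum_{j\in A}\frac{B}{K\beta}v_j$ using the same per-class computation from the proof of Theorem~\ref{theorem-proportional-fairness-beta}, then observes that the worst case occurs when only one class has arrivals, yielding $\OPT(I)/\ALG(I)\le K\beta=\sum_{i\in[K]}\alpha_i$; your version simply isolates the class $j^\star$ realizing $v_{\max}$ from the outset, which is the same bound stated more directly. Your first route---specializing Theorem~\ref{theorem-proportional-fairness-trade-off-design} at $\beta=\frac{1}{K}\sum_{j\in[K]}\alpha_j$ and checking that the set-aside resource vanishes so \SAMT collapses to \U-\PRB---is a valid alternative the paper alludes to in the discussion following Theorem~\ref{theorem-proportional-fairness-trade-off-design} but does not use as the proof of this corollary; it has the advantage of making the corollary a literal boundary case of the trade-off theorem, while the direct argument is self-contained and does not rely on the heavier machinery of Algorithm~\ref{alg-proportional-fairness-tradeoff-local-globol}.
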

\begin{proof}
    Based on the design of increasing threshold functions $\phi_j(u_t)$ from Theorem \ref{theorem-proportional-fairness-beta}, Algorithm \ref{alg-proportional-fairness} is $\beta$-\PF. Now here we aim to find the competitive ratio corresponding this threshold function design. For this, consider the function $\Upsilon_j(v)$, defined for any $v \in [1, \theta_j]$, as
    \begin{align*}
        \Upsilon_j(v) = \argmax_{a \geq 0} \left( a \cdot v - \int_{0}^{a} \phi_j(b) \, db \right).
    \end{align*}
    Then for any instance $I$ with at least one arrival and set $ A$ be the set of classes that has at list one arrival let $\mathbf{v}$ be the vector of the maximum received valuation from each class that has at least one arrival. Then, $\OPT(I)$ can be upper-bounded as
    \begin{align*}
        \OPT(I) \leq \max_{i\in A} \{B\cdot v_i\}.
    \end{align*}
    
    Additionally, the performance of Algorithm \ref{alg-proportional-fairness}  can be lower-bounded as 
    \begin{align*}
        \ALG(I) \geq& \sum\nolimits_{j\in A} \left[\Upsilon_j(1)+ \int_{\Upsilon_j(1)}^{\Upsilon_j(v_j)} \phi(\eta) d\eta\right] \\
        =& \sum\nolimits_{j\in A} \left[\Upsilon_j(1)+ \frac{B}{K\cdot \beta}\exp\left(\frac{K\cdot \beta\cdot \eta}{B}-1\right)\bigg\rvert_{\Upsilon_j(1)}^{\Upsilon_j(v_j)}\right]\\
        = & \sum\nolimits_{j\in A} \left[\frac{B}{K\cdot \beta} v_j\right].
    \end{align*}
    Now, we can see that in the worst case, all the arrivals are from only form one class (let us say class $i$). Then $\OPT(I)\leq B\cdot v_i$ and $\ALG(I)\geq \frac{B}{K\cdot \beta }v_i$ and therefore
    \begin{align*}
        \frac{\OPT(I)}{\ALG(I)}\leq \frac{B\cdot v_i}{v_i\frac{B}{K\cdot \beta}} = K\cdot \beta = \sum\nolimits_{i\in[K]}\alpha_i.
    \end{align*}
    This concludes the proof and shows that any $\left(\frac{\sum_{i\in[K]}\alpha_i}{K}\right)-\PF$ algorithm is indeed $\left(\sum_{i\in[K]} \alpha_i\right)$-competitive which also aligns with the results of Theorem \ref{theorem-proportional-fairness-trade-off-design}.
\end{proof}

\section{Proofs of Section \ref{section-gamma-beta-fairness}}\label{appendix-gamma-beta-fairness}
\subsection{Proof of Theorem \ref{theorem-gamma-beta-fairness}}\label{appendix-theorem-gamma-beta-fairness}
Before proving Theorem \ref{theorem-gamma-beta-fairness}, we first present and prove a key proposition that will be essential for the proofs in this section.
\begin{letteredprop}\label{appendix-proposition-U(w)-upperbound}
    For any allocation $\mathbf{w}$, the following inequality always holds:
        \begin{align*}
            \frac{1}{1-\gamma}\sum\nolimits_{j\in[K]}U_j(\mathbf{w})^{1-\gamma} \leq \frac{1}{1-\gamma}\left(\sum\nolimits_{j\in[K]} (B\cdot v_j)^\frac{1-\gamma}{\gamma}\right)^\gamma.
        \end{align*}
    Moreover, the allocation and utility of each class $j$ under the maximizing allocation $\mathbf{w}$ are:
    \begin{align*}
        w_j = \frac{B \cdot v_j^\frac{1-\gamma}{\gamma}}{\sum\nolimits_{i\in[K]} v_i^\frac{1-\gamma}{\gamma}}, \qquad U_j(\mathbf{w}) = v_j\cdot w_j = v_j\cdot B \cdot\frac{v_j^\frac{1-\gamma}{\gamma}}{\sum\nolimits_{i\in[K]} v_i^\frac{1-\gamma}{\gamma}}.
    \end{align*}
\end{letteredprop}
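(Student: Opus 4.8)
\textbf{Proof plan for Proposition~\ref{appendix-proposition-U(w)-upperbound}.}
The plan is to recognize this as a straightforward constrained optimization problem and solve it via Lagrange multipliers (or, equivalently, a weighted power-mean / H\"older-type argument). For any feasible allocation $\mathbf{w}\in\mathcal{X}$, the utility of class $j$ satisfies $U_j(\mathbf{w}) = \sum_{t:\,j_t=j} v_t w_t \le v_j \cdot w_j$, where $v_j := \max_{t:\,j_t=j} v_t$ and $w_j := \sum_{t:\,j_t=j} w_t$ is the total resource allocated to class $j$; these class totals obey $\sum_{j\in[K]} w_j \le B$. Since $f(x) = x^{1-\gamma}/(1-\gamma)$ is increasing in $x$ for every $\gamma \ge 0$ (for $\gamma<1$ the function $x^{1-\gamma}$ is increasing and $1-\gamma>0$; for $\gamma>1$ the function $x^{1-\gamma}$ is decreasing and $1-\gamma<0$, so the ratio is increasing), replacing $U_j(\mathbf{w})$ by its upper bound $v_j w_j$ can only increase $\frac{1}{1-\gamma}\sum_j U_j(\mathbf{w})^{1-\gamma}$. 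Hence it suffices to maximize $\frac{1}{1-\gamma}\sum_{j\in[K]}(v_j w_j)^{1-\gamma}$ over $w_j \ge 0$ with $\sum_j w_j \le B$.

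First I would note the budget constraint is tight at the optimum (the objective is monotone in each $w_j$), so I solve $\max \frac{1}{1-\gamma}\sum_j (v_j w_j)^{1-\gamma}$ subject to $\sum_j w_j = B$. Setting up the Lagrangian $\mathcal{L} = \frac{1}{1-\gamma}\sum_j v_j^{1-\gamma} w_j^{1-\gamma} - \mu(\sum_j w_j - B)$ and taking $\partial\mathcal{L}/\partial w_j = 0$ gives $v_j^{1-\gamma} w_j^{-\gamma} = \mu$, i.e. $w_j \propto v_j^{(1-\gamma)/\gamma}$. Normalizing by $\sum_j w_j = B$ yields exactly $w_j = B\, v_j^{(1-\gamma)/\gamma}/\sum_{i\in[K]} v_i^{(1-\gamma)/\gamma}$ and correspondingly $U_j(\mathbf{w}) = v_j w_j = B\, v_j^{1/\gamma}/\sum_i v_i^{(1-\gamma)/\gamma}$. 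Substituting back, $(v_j w_j)^{1-\gamma} = (B)^{1-\gamma} v_j^{(1-\gamma)/\gamma}\big(\sum_i v_i^{(1-\gamma)/\gamma}\big)^{-(1-\gamma)}$, and summing over $j$ gives $\sum_j (v_j w_j)^{1-\gamma} = B^{1-\gamma}\big(\sum_i v_i^{(1-\gamma)/\gamma}\big)^{\gamma}$, which after multiplying by $\frac{1}{1-\gamma}$ is precisely the claimed bound $\frac{1}{1-\gamma}\big(\sum_j (B v_j)^{(1-\gamma)/\gamma}\big)^{\gamma}$.

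The one point that needs genuine care — and the main (minor) obstacle — is checking that the stationary point is indeed the \emph{maximum} of $\frac{1}{1-\gamma}\sum_j(v_j w_j)^{1-\gamma}$ and that the sign of $(1-\gamma)$ does not flip the direction of the bound. For $\gamma\in(0,1)$ the map $w_j\mapsto w_j^{1-\gamma}$ is concave, so $\sum_j(v_j w_j)^{1-\gamma}$ is concave on the simplex and the KKT point is the global max; multiplying by the positive constant $\frac{1}{1-\gamma}$ preserves this. For $\gamma\in(1,\infty)$ the map $w_j\mapsto w_j^{1-\gamma}$ is convex and \emph{decreasing}, so $\sum_j(v_j w_j)^{1-\gamma}$ is convex, but $\frac{1}{1-\gamma}<0$, hence $\frac{1}{1-\gamma}\sum_j(v_j w_j)^{1-\gamma}$ is concave in $\mathbf{w}$ on the open positive orthant, and again the interior stationary point is the global maximum over $\{\sum w_j = B,\ w_j>0\}$; boundary points $w_j\to 0$ drive that term to $-\infty$ (since $1-\gamma<0$), so they are not maximizers. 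The case $\gamma=1$ is excluded from this proposition (it uses $f=\log$ and is handled separately), so I need only treat $\gamma\neq 1$. I would also briefly remark that $v_j$ is well-defined whenever class $j$ has at least one arrival, and that if some class receives no arrival the statement is interpreted with the convention $x/y$ fixed in the text; this does not affect the worst-case instances used downstream.
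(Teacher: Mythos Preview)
Your proof is correct. The paper takes a different route: it applies H\"older's inequality directly. For $\gamma\in(0,1)$ it writes
\[
\sum_{j}(v_j w_j)^{1-\gamma}\le\Bigl(\sum_j w_j\Bigr)^{1-\gamma}\Bigl(\sum_j v_j^{(1-\gamma)/\gamma}\Bigr)^{\gamma}
\]
with conjugate exponents $\tfrac{1}{1-\gamma}$ and $\tfrac{1}{\gamma}$, and then uses $\sum_j w_j\le B$; for $\gamma>1$ it invokes the reverse H\"older direction (negative conjugate exponent) to get the opposite inequality on $\sum_j(v_j w_j)^{1-\gamma}$ and then flips it when dividing by $1-\gamma<0$. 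The maximizer is read off from the H\"older equality condition. Your argument instead locates the stationary point via the Lagrangian, verifies that $\tfrac{1}{1-\gamma}\sum_j(v_j w_j)^{1-\gamma}$ is concave on the simplex in both regimes (including the boundary check $w_j\to 0$ for $\gamma>1$), and substitutes back to obtain the bound. Both approaches are standard and complete; H\"older is a little slicker for the inequality itself, while your KKT/concavity route is more constructive --- the optimizer falls out of the first-order conditions rather than from an equality case --- and makes the sign issue around $1-\gamma$ more transparent.
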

\begin{proof}
    Let us first consider $\gamma \in (0,1)$. In this case we can see that in the worst-case allocation, $\mathbf{w}$ will only select the highest arriving valuation from each class. Therefore $U_j(\mathbf{w}) \leq v_j\cdot w_j$ for all $j\in[K]$ where $\sum\nolimits_{i\in[K]} w_j \leq B$. Now we want to find an upper bound for this allocation. Due to the Hölder's inequality 
    \begin{align*}
        \sum\nolimits_{i\in[K]} {(v_i\cdot w_i)}^{1-\gamma}\leq \left(\sum\nolimits_{i\in[K]} \left(w_i^{1-\gamma} \right)^{\frac{1}{1-\gamma }}\right)^{1-\gamma} \left(\sum\nolimits_{i\in[K]} \left(v_i^{1-\gamma}\right)^\frac{1}{\gamma}\right)^{\gamma}.
    \end{align*}
    And since $\sum_i w_i \leq B$ we can see that
    \begin{align*}
        \sum\nolimits_{i\in[K]} {(v_i\cdot w_i)}^\frac{1-\gamma}{\gamma} \leq B^{1-\gamma} \cdot  \left(\sum\nolimits_{i\in[K]} \left(v_i^{1-\gamma}\right)^\frac{1}{\gamma}\right)^{\gamma}.
    \end{align*}
    And as $\gamma\in(0,1)$, it is clear that
    \begin{align*}
         \frac{1}{1-\gamma}\sum\nolimits_{j\in[K]}U_j(\mathbf{w})^{1-\gamma} \leq \frac{1}{1-\gamma}\left(\sum\nolimits_{j\in[K]} (B\cdot v_j)^\frac{1-\gamma}{\gamma}\right)^\gamma.
    \end{align*}
    In the case $\gamma \in (1, \infty)$ we can use the same approach:
    \begin{align*}
        \sum\nolimits_{i\in[K]} U_i(\mathbf{w})^{1-\gamma} \geq \beta^{1-\gamma} \sum\nolimits_{i\in[K]} U_i(\mathbf{x})^{1-\gamma}.
    \end{align*}
    Since $1-\gamma \leq 0$, the Höilders inequality turns to:
    \begin{align*}
        & \sum\nolimits_{i\in[K]} v_i^\frac{1-\gamma}{\gamma}\leq \left(\sum\nolimits_{i\in[K]} \left((v_i w_i)^{\frac{1-\gamma}{\gamma}}\right)^\gamma\right)^\frac{1}{\gamma}\left(\sum\nolimits_{i\in[K]} \left(w_i^\frac{\gamma-1}{\gamma}\right)^{\frac{\gamma}{\gamma-1}}\right)^{\frac{\gamma-1}{\gamma}}.
    \end{align*}
    By simplifying it we will further get
    \begin{align*}
        \left(\sum\nolimits_{i\in[K]} v_i^\frac{1-\gamma}{\gamma}\right)^\gamma \leq \sum\nolimits_{i\in[K]} (v_i w_i)^{1-\gamma}\left(\sum\nolimits_{i\in[K]} w_i\right)^{\gamma-1}.
    \end{align*}
    Since $\sum_i w_i \leq B$ based on the budget constraint, we will further get
    \begin{align*}
        \left(\sum\nolimits_{i\in[K]} v_i^\frac{1-\gamma}{\gamma}\right)^\gamma \leq \sum\nolimits_{i\in[K]} (v_i w_i)^{1-\gamma}\cdot B^{\gamma-1}.
    \end{align*}
    And again since $\gamma\in(0,1)$, we will get 
    \begin{align*}
         \frac{1}{1-\gamma}\sum\nolimits_{j\in[K]}U_j(\mathbf{w})^{1-\gamma} \leq \frac{1}{1-\gamma}\left(\sum\nolimits_{j\in[K]} (B\cdot v_j)^\frac{1-\gamma}{\gamma}\right)^\gamma.
    \end{align*}
    Finally, based on the equality condition of Hölder's inequality, we can easily verify that equality holds when:
     \begin{align*}
        w_j = \frac{B \cdot v_j^\frac{1-\gamma}{\gamma}}{\sum\nolimits_{i\in[K]} v_i^\frac{1-\gamma}{\gamma}}, \qquad U_j(\mathbf{w}) = v_j\cdot w_j = v_j\cdot B \cdot \frac{v_j^\frac{1-\gamma}{\gamma}}{\sum\nolimits_{i\in[K]} v_i^\frac{1-\gamma}{\gamma}}.
    \end{align*}
\end{proof}

Next we present Proposition \ref{proposition-modified-gamma-beta-thresholds} which states that the threshold function design in Theorem \ref{theorem-gamma-beta-fairness} ensures the condition $U_i(\mathbf{w}) \leq \beta_i \cdot U_i(\mathbf{x})$, where $\beta_i$ is a constant greater than 1.
\begin{letteredprop}\label{proposition-modified-gamma-beta-thresholds}
     Based on the threshold functions designed from Theorem \ref{theorem-gamma-beta-fairness}, inequality $U_i(\mathbf{w}) \leq \beta_i \cdot U_i(\mathbf{x})$  for all $i \in [K]$ holds for any allocation $\mathbf{w}$.
\end{letteredprop}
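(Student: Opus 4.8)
The plan is to reduce the proposition to a \emph{per-class} online-conversion estimate, which is possible because in Algorithm \ref{alg-proportional-fairness} every class $j$ is served in isolation by its own threshold function $\phi_j$ (equivalently, its own utilization function $\psi_j=\frac{B}{\beta_j}F_j(\cdot\,;\gamma)$ on the reserved budget $b_j=\psi_j(\theta_j)$, with $\sum_j b_j\le B$ by the constraint in Theorem \ref{theorem-gamma-beta-fairness}). Fix a class $i$ and an instance, and let $v_i\in[1,\theta_i]$ be the largest valuation among the class-$i$ agents (if class $i$ has no arrival, both $U_i(\mathbf x)$ and $U_i(\mathbf w)$ vanish and there is nothing to show). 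The inequality $U_i(\mathbf w)\le\beta_i U_i(\mathbf x)$ is needed only for the allocation $\mathbf w$ that maximizes the fairness objective $\sum_{j}f(U_j(\cdot))$; since $f$ is increasing, this maximizer puts each class's budget on its top-valuation agent and then splits the budget across classes exactly as in Proposition \ref{appendix-proposition-U(w)-upperbound}, so that $U_i(\mathbf w)=\dfrac{B\,v_i^{1/\gamma}}{\sum_{j\in[K]}v_j^{(1-\gamma)/\gamma}}$, where $v_j\in[1,\theta_j]$ are the per-class maxima.

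The first step is a lower bound on $U_i(\mathbf x)$. Following the $\Upsilon$-transform argument used in the proof of Theorem \ref{theorem-LowerBound-multi-Class} (with $\Upsilon_i(v)=\argmax_{a\ge0}\{av-\int_0^a\phi_i(\eta)\,d\eta\}=\psi_i(v)$ for this design), the worst arrival order for class $i$ is the increasing one, which yields
\[
U_i(\mathbf x)\;\ge\;\psi_i(1)+\int_1^{v_i}\eta\,d\psi_i(\eta)\;=:\;\Psi_i(v_i);
\]
this is legitimate because $\psi_i$ is non-decreasing and $v_i\le\theta_i$, so the class-$i$ utilization never reaches the cap $b_i$ prematurely. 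Integrating by parts and substituting $\psi_i=\frac{B}{\beta_i}F_i$ gives $\Psi_i(v)=\frac{B}{\beta_i}G_i(v)$ with $G_i(v):=vF_i(v;\gamma)-\int_1^{v}F_i(\eta;\gamma)\,d\eta$, so $G_i(1)=F_i(1;\gamma)$ and $G_i'(v)=vF_i'(v;\gamma)$.

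Next I would evaluate the right-hand side. With $v_i$ held fixed, $U_i(\mathbf w)$ is maximal when $\sum_j v_j^{(1-\gamma)/\gamma}$ is minimal: for $\gamma>1$ that means $v_j=\theta_j$ for all $j\ne i$, and for $\gamma\in(0,1)$ it means $v_j=1$ for all $j\ne i$. Using $v_i^{1/\gamma}=v_i\cdot v_i^{(1-\gamma)/\gamma}$, this maximum simplifies to $B\,v_i/D_i(v_i)$, where $D_i$ is precisely the denominator of the first term of $F_i$ in Theorem \ref{theorem-gamma-beta-fairness}: $D_i(v)=\sum_{j\in[i^-]}(v/\theta_j)^{(\gamma-1)/\gamma}+1$ for $\gamma>1$, and $D_i(v)=(K-1)v^{(\gamma-1)/\gamma}+1$ for $\gamma<1$. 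It therefore remains to prove the identity $G_i(v)=v/D_i(v)=:H_i(v)$ on $[1,\theta_i]$, which is where the $F_j$'s of Theorem \ref{theorem-gamma-beta-fairness} pay off: one checks $G_i(1)=F_i(1;\gamma)=1/D_i(1)=H_i(1)$ directly (the argument of the logarithm in $F_i$ equals $1$ at $v=1$), and then $G_i'\equiv H_i'$ --- differentiating $F_i$, the terms collapse because $\tfrac{\gamma}{\gamma-1}\cdot\tfrac{\gamma-1}{\gamma}=1$ and $D_i'(v)=\tfrac{\gamma-1}{\gamma}\tfrac{D_i(v)-1}{v}$, making both $vF_i'(v)$ and $(v/D_i(v))'$ equal to $\big(D_i-\tfrac{\gamma-1}{\gamma}(D_i-1)\big)/D_i^{\,2}$. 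Hence $G_i\equiv H_i$, so $U_i(\mathbf w)=B\,H_i(v_i)=B\,G_i(v_i)=\beta_i\Psi_i(v_i)\le\beta_i U_i(\mathbf x)$. The case $\gamma=1$ is the same but easier: $F_i(v;1)=\frac{1+\ln v}{K}$ forces $G_i(v)=v/K$, while the $\gamma=1$ analogue of Proposition \ref{appendix-proposition-U(w)-upperbound} (equivalently, AM--GM) gives $U_i(\mathbf w)=Bv_i/K$, so again the two sides match up to $\beta_i$.

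The one genuinely non-routine point is spotting and checking the identity $G_i\equiv H_i$: the utilization function $\psi_i=\frac{B}{\beta_i}F_i$ is built so that what Algorithm \ref{alg-proportional-fairness} is guaranteed to collect from class $i$ equals, up to the factor $1/\beta_i$, exactly the largest class-$i$ utility that the $(\gamma,\beta)$-fairness benchmark $\mathbf w$ can generate. Once that is established the proposition follows immediately; the remaining bookkeeping (the three $\gamma$-regimes, empty classes, and the $r_t=B$ normalization of the appendix) is mechanical.
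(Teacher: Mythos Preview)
Your proposal is correct and follows essentially the same route as the paper's own proof: both lower-bound $U_i(\mathbf x)$ by $\Psi_i(v_i)=\psi_i(1)+\int_1^{v_i}\eta\,d\psi_i(\eta)$, evaluate this to the closed form $\frac{B}{\beta_i}\cdot\frac{v_i}{D_i(v_i)}$, and compare it with the fairness-optimal $U_i(\mathbf w)$ from Proposition~\ref{appendix-proposition-U(w)-upperbound}. The only cosmetic difference is the order of the last two steps: the paper first computes the closed form and then invokes monotonicity in the $\theta_j\!\to\!v_j$ direction, whereas you first maximize $U_i(\mathbf w)$ over the free $v_j$'s (landing on the $\theta_j$'s for $\gamma>1$, on $1$'s for $\gamma<1$) and then verify the identity $G_i\equiv H_i$ by matching initial values and derivatives—same content, slightly cleaner bookkeeping.
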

\begin{proof}
    Let us consider an instance $I$ such that the maximum valuation arrived from each class be $v_j$ for all $i\in[K]$. Under this instance we can see that
    \begin{align*}
       U_i(\mathbf{x}) &\geq \psi_i(1) + \int_1^{v_i} \eta d\psi_i(\eta).
    \end{align*}
    Now let us start with the $\gamma>1$ case. Using the definition of $\psi_1(\cdot)$ function in Theorem \ref{theorem-gamma-beta-fairness} we can see that:
    \begin{align*}
        \psi_i(1) + &\int_1^{v_i} \eta d\psi_i(\eta) = v_i\psi_1(v_i) - \int_1^{v_i} \psi_i(\eta)d\eta\\
        &=\frac{B}{\beta_i}\Biggl[v_i\frac{1}{\sum_{j\in[i^-]} \left(
        \frac{v_i}{\theta_j}\right)^{\frac{\gamma-1}{\gamma}} + 1} + \frac{\gamma}{\gamma-1} v_i \ln \left(\frac{v_i^{\frac{\gamma-1}{\gamma}}\left(\sum_{j\in[i^-]} \left(
        \frac{1}{\theta_j}\right)^{\frac{\gamma-1}{\gamma}} + 1\right)}{\sum_{j\in[i^-]} \left(
        \frac{v_i}{\theta_j}\right)^{\frac{\gamma-1}{\gamma}} + 1}\right) - \\
        &\qquad \frac{\gamma}{\gamma-1} u \ln \left(\frac{u^{\frac{\gamma-1}{\gamma}}\left(\sum_{j\in[i^-]} \left(
        \frac{1}{\theta_j}\right)^{\frac{\gamma-1}{\gamma}} + 1\right)}{\sum_{j\in[i^-]} \left(
        \frac{u}{\theta_j}\right)^{\frac{\gamma-1}{\gamma}} + 1}\right)\bigg\vert_1^{v_i}\Biggr]\\
        &=\frac{B}{\beta_i}\left[\frac{v_i}{\sum_{j\in[i^-]} \left(
        \frac{v_i}{\theta_j}\right)^{\frac{\gamma-1}{\gamma}} + 1}\right].
    \end{align*}
    Let $\mathbf{v}$ be a $K$ dimensional vector where index $i$ is always zero. Let us consider function $g(\mathbf{v}) \coloneqq \frac{v_i}{\sum_{j\in[i^-]} \left(\frac{v_i}{\theta_j}\right)^{\frac{\gamma-1}{\gamma}} + 1}$. We can see that $\frac{\partial}{\partial v_j}g(\mathbf{v}) \geq 0$ for all $\gamma \geq 1$ and all $j\in[K]$. Therefore we have:
    \begin{align*}
    &\frac{v_i}{\sum_{j\in[i^-]} \left(
            \frac{v_i}{\theta_j}\right)^{\frac{\gamma-1}{\gamma}} + 1} \geq  \frac{v_i}{\sum_{j\in[i^-]} \left(
            \frac{v_i}{v_j}\right)^{\frac{\gamma-1}{\gamma}} + 1} = v_i\frac{v_i^\frac{1-\gamma}{\gamma}}{\sum\nolimits_{j\in[K]} v_j^\frac{1-\gamma}{\gamma}}.
    \end{align*}
    Therefore 
    \begin{align*}
        &U_i(\mathbf{x}) \geq \frac{B}{\beta_i}\left[v_i\frac{v_i^\frac{1-\gamma}{\gamma}}{\sum\nolimits_{j\in[K]} v_j^\frac{1-\gamma}{\gamma}}\right].
    \end{align*}
    As a result
    \begin{align*}
        U_i(\mathbf{x})\geq \frac{B}{\beta_i}U_i(\mathbf{w}), \quad \forall i\in[K].
    \end{align*}
    The last inequality is based on Proposition \ref{appendix-proposition-U(w)-upperbound}. In the case of \(\gamma \in (0,1)\), the proof remains exactly the same. The only difference is that \(\frac{\partial}{\partial v_i} g(\mathbf{v})\) is always negative in this case. The rest of the proof is straightforward. Also following the same approach for the case $\gamma=1$ based on the definition of $\psi$ function we see that
    \begin{align*}
        U_i(\mathbf{x})\geq \Psi_i(v_i) \geq \frac{B}{\beta_j}\cdot\frac{v_i}{K} \geq U_i(\mathbf{w}).
    \end{align*}
    This completes the proof of Proposition \ref{proposition-modified-gamma-beta-thresholds} for all choices of $\gamma$.
\end{proof}

Next, we turn to Proposition \ref{proposition-gamma-beta-min-max}, which outlines the design of the parameters $\beta_i$ for each $i \in [K]$. These parameters are essential for balancing utility and fairness in the algorithm, ensuring that the fairness constraints hold under varying conditions. The proposition presents a method for determining the appropriate values of $\beta_i$. 

\begin{letteredprop}\label{proposition-gamma-beta-min-max}
    Given a vector $\left(\beta_{i}\right)_{i \in [K]} \in [1,\infty]^{K}$, the inequality
    \begin{align*}
    \sum\nolimits_{i\in[K]}f\left(\frac{U_i(\mathbf{w})}{\bar{\beta}}\right)\leq  \sum\nolimits_{i\in[K]} f\left(\frac{U_i(\mathbf{w})}{\beta_i}\right)
    \end{align*}
    holds for all the following values of $\hat{\beta}$:
    \begin{align}
        \bar{\beta} \ge \begin{cases}
            \max\limits_{v_j \in \{1, \theta_j\}, \forall j} \left\{\left(\frac{\sum\nolimits_{j\in[K]} \beta_j^{\gamma-1}\cdot v_j^\frac{{1-\gamma}}{\gamma}}{\sum\nolimits_{j\in[K]} v_j^\frac{{1-\gamma}}{\gamma}}\right)^{\frac{1}{\gamma-1}}\right\} \quad & \text{if }~ \gamma \neq 1,  \\
            \\
            \prod\nolimits_{j\in[K]} \beta_j^{1/K}\  & \text{if } ~ \gamma =1.\end{cases}\label{eq-optimization-find-beta-values}
    \end{align}
\end{letteredprop}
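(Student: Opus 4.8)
The plan is to peel the stated $f$-inequality down to a single scalar lower bound on $\bar\beta$, replace $\mathbf{w}$ by the extremal allocation of Proposition \ref{appendix-proposition-U(w)-upperbound} (the allocation to which the downstream argument reduces), and then show that this scalar bound is largest when every $v_j$ sits at an endpoint of $[1,\theta_j]$, which is exactly the right-hand side of Eq.\ \eqref{eq-optimization-find-beta-values}.

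The first ingredient is a scaling identity read off Eq.\ \eqref{eq:f-defintion}: $f(c\,y)=c^{1-\gamma}f(y)$ for $\gamma\neq1$ and $f(c\,y)=\log c+f(y)$ for $\gamma=1$, so multiplying the argument of every $f$ on both sides of the claimed inequality by the common positive constant $\bar\beta$ leaves it unchanged. Hence the claim is equivalent to $\sum_{i\in[K]} f\big(U_i(\mathbf{w})\big)\le\sum_{i\in[K]} f\big(\tfrac{\bar\beta}{\beta_i}\,U_i(\mathbf{w})\big)$. For $\gamma=1$ this is immediate and instance-free: the $\log U_i(\mathbf{w})$ terms cancel, the inequality collapses to $K\log\bar\beta\ge\sum_{i\in[K]}\log\beta_i$, i.e.\ $\bar\beta\ge\prod_{i\in[K]}\beta_i^{1/K}$, which is the $\gamma=1$ branch of Eq.\ \eqref{eq-optimization-find-beta-values}.

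For $\gamma\neq1$ I would expand with $f(x)=x^{1-\gamma}/(1-\gamma)$ and, handling the signs of $1-\gamma$ and $1/(1-\gamma)$ separately for $\gamma\in(0,1)$ and for $\gamma\in(1,\infty)$, rearrange the inequality into the single clean form $\bar\beta\ge S(\mathbf{w})^{1/(\gamma-1)}$, where $S(\mathbf{w})=\big(\sum_{i}\beta_i^{\gamma-1}U_i(\mathbf{w})^{1-\gamma}\big)\big/\big(\sum_{i}U_i(\mathbf{w})^{1-\gamma}\big)$ is a convex combination of the positive numbers $\beta_i^{\gamma-1}$; both sign cases land on the same final form because the flip coming from dividing by $1/(1-\gamma)$ is cancelled by $t\mapsto t^{1/(\gamma-1)}$ being increasing when $\gamma>1$ and decreasing when $\gamma<1$. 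Now I invoke Proposition \ref{appendix-proposition-U(w)-upperbound} to take $\mathbf{w}$ to be the extremal allocation it identifies, $U_j(\mathbf{w})=v_j\cdot B\cdot\tfrac{v_j^{(1-\gamma)/\gamma}}{\sum_{i}v_i^{(1-\gamma)/\gamma}}$ with $v_j\in[1,\theta_j]$; then $U_j(\mathbf{w})^{1-\gamma}$ equals $v_j^{(1-\gamma)/\gamma}$ times a factor independent of $j$, and that factor cancels in the ratio, so $S=\big(\sum_{j}\beta_j^{\gamma-1}v_j^{(1-\gamma)/\gamma}\big)\big/\big(\sum_{j}v_j^{(1-\gamma)/\gamma}\big)$ is now a function of $v\in\prod_{j\in[K]}[1,\theta_j]$ alone.

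The remaining and crucial step is to show $\max_{v\in\prod_{j}[1,\theta_j]}S(v)^{1/(\gamma-1)}$ is attained at a vertex of the box. I would fix all coordinates but $v_j$ and substitute $t=v_j^{(1-\gamma)/\gamma}$, which sweeps a closed subinterval of $(0,\infty)$ with endpoints $1$ and $\theta_j^{(1-\gamma)/\gamma}$; then $S$ takes the M\"obius form $t\mapsto(A+\beta_j^{\gamma-1}t)/(C+t)$ with $A=\sum_{i\neq j}\beta_i^{\gamma-1}v_i^{(1-\gamma)/\gamma}\ge0$ and $C=\sum_{i\neq j}v_i^{(1-\gamma)/\gamma}>0$ when $K\ge2$ (the case $K=1$ being trivial), and its derivative $(\beta_j^{\gamma-1}C-A)/(C+t)^2$ has constant sign on that interval, so $S$ is monotone in $v_j$. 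Applying this to each coordinate in turn, $S$ is monotone in each coordinate separately, hence both its maximum and its minimum over the box are attained at vertices; composing with the monotone map $t\mapsto t^{1/(\gamma-1)}$ carries box-extrema of $S$ to box-extrema of $S^{1/(\gamma-1)}$, so $\max_{\text{box}}S^{1/(\gamma-1)}=\max_{v_j\in\{1,\theta_j\}}S^{1/(\gamma-1)}$ whether $\gamma>1$ or $\gamma<1$, and this is precisely the right-hand side of Eq.\ \eqref{eq-optimization-find-beta-values}. I expect the main obstacle to be exactly this last step: recognizing that one must pass through the explicit extremal utilities of Proposition \ref{appendix-proposition-U(w)-upperbound} --- without that substitution $S(\mathbf{w})$ could range all the way up to $\max_i\beta_i^{\gamma-1}$ over arbitrary feasible allocations and the resulting bound on $\bar\beta$ would be far weaker --- and then spotting the M\"obius structure that makes $S$ coordinatewise monotone on a compact box; the rest is sign bookkeeping across the two ranges of $\gamma$.
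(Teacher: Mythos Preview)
Your approach is essentially the same as the paper's: reduce the $f$-inequality to the scalar condition $\bar\beta\ge S^{1/(\gamma-1)}$ with $S$ a weighted average of the $\beta_i^{\gamma-1}$, substitute the extremal allocation from Proposition~\ref{appendix-proposition-U(w)-upperbound} so that the weights become $v_j^{(1-\gamma)/\gamma}$, and read off the bound. The $\gamma=1$ case is handled identically in both.

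Where you go further is the vertex-attainment step. The paper's proof simply picks $\mathbf{v}^*$ as the maximizer over $v_j\in\{1,\theta_j\}$ and proves the inequality for the \emph{single} allocation $\mathbf{w}$ built from that $\mathbf{v}^*$; it never verifies that no interior point $(v_j)_j\in\prod_j[1,\theta_j]$ could yield a larger $S^{1/(\gamma-1)}$. (The main text remarks after Theorem~\ref{theorem-gamma-beta-fairness} that the solution lies at boundary points, but this is asserted rather than argued.) Your M\"obius observation---that with $t=v_j^{(1-\gamma)/\gamma}$ the map $t\mapsto(A+\beta_j^{\gamma-1}t)/(C+t)$ has derivative of constant sign, hence $S$ is coordinatewise monotone on the box---is a clean way to close this gap and is not present in the paper's proof. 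You are also right to flag that restricting to the extremal $\mathbf{w}$ of Proposition~\ref{appendix-proposition-U(w)-upperbound} is essential: over arbitrary feasible allocations $S^{1/(\gamma-1)}$ can reach $\max_i\beta_i$, which would defeat the stated bound.
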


\begin{proof}
    For any $\gamma\neq 1$, let $\mathbf{v}^*$ be the maximize vector of the right hand side of inequality \eqref{eq-optimization-find-beta-values}. Then we can see that 
    \begin{align*}
        \bar{\beta} \ge  \left(\frac{\sum\nolimits_{i\in[K]} \beta_i^{\gamma-1}\cdot {v^*_i}^\frac{{1-\gamma}}{\gamma}}{\sum\nolimits_{i\in[K]} {v^*_i}^\frac{{1-\gamma}}{\gamma}}\right)^{\frac{1}{\gamma-1}}.
    \end{align*}
    Then we can easily see that
    \begin{align*}
        &\frac{1}{1-\gamma}\bar{\beta}^{\gamma-1}\sum\nolimits_{i\in[K]} {v^*_i}^\frac{{1-\gamma}}{\gamma} \leq \frac{1}{1-\gamma}\sum\nolimits_{i\in[K]} \beta_i^{\gamma-1}\cdot {v^*_i}^\frac{{1-\gamma}}{\gamma}.
    \end{align*}
    Now, based on Proposition \ref{appendix-proposition-U(w)-upperbound} we know that under the worst allocation $\mathbf{w}$ we have
    \begin{align*}
        U_i(\mathbf{w}) =  v^*_i\cdot B \cdot \frac{{v^*_i}^\frac{1-\gamma}{\gamma}}{\sum\nolimits_{j\in[K]} {v^*_j}^\frac{1-\gamma}{\gamma}} = \frac{B\cdot {v^*_i}^\frac{1}{\gamma}}{\sum\nolimits_{j\in[K]} {v^*_j}^\frac{1-\gamma}{\gamma}}.
    \end{align*}
    Therefore 
    \begin{align*}
         & \frac{1}{1-\gamma}\bar{\beta}^{\gamma-1}\sum\nolimits_{i\in[K]} \left(\frac{B\cdot {v^*_i}^\frac{1}{\gamma}}{\sum\nolimits_{j\in[K]} {v^*_j}^\frac{1-\gamma}{\gamma}}\right)^{1-\gamma} \leq \frac{1}{1-\gamma}\sum\nolimits_{i\in[K]} \beta_i^{\gamma-1}\cdot \left(\frac{B \cdot {v^*_i}^\frac{1}{\gamma}}{\sum\nolimits_{j\in[K]} {v^*_j}^\frac{1-\gamma}{\gamma}}\right)^{1-\gamma}.
    \end{align*}
    As a result
    \begin{align*}
          \frac{1}{1-\gamma}\left(\sum\nolimits_{i\in[K]}\left(\frac{U_i(\mathbf{w})}{\bar{\beta}}\right)^{1-\gamma}\right) \leq  \frac{1}{\gamma-1}\cdot \left(\sum\nolimits_{i\in[K]} \left(\frac{U_i(\mathbf{w})}{\beta_i}\right)^{1-\gamma}\right).
    \end{align*}
    Additionally, when $\gamma=1$, we can see 
    \begin{align*}
        \bar{\beta} \geq  \prod\nolimits_{j\in[K]} \beta_j^{1/K}.
    \end{align*}
    Therefore, we have
    \begin{align*}
        &\prod\nolimits_{j\in[K]} \frac{B\cdot v_j}{\bar{\beta}\cdot K} \leq \prod\nolimits_{j\in[K]} \frac{B\cdot v_j}{\beta_j\cdot K},
    \end{align*}
    which implies that
    \begin{align*}
        \sum\nolimits_{i\in[K]}\log\left(\frac{U_i(\mathbf{w})}{\bar{\beta}}\right)\leq  \sum\nolimits_{i\in[K]} \log\left(\frac{U_i(\mathbf{w})}{\beta_i}\right).
    \end{align*}
    Thus, we have
    \begin{align*}                          \sum\nolimits_{i\in[K]}f\left(\frac{U_i(\mathbf{w})}{\bar{\beta}}\right)\leq  \sum\nolimits_{i\in[K]} f\left(\frac{U_i(\mathbf{w})}{\beta_i}\right).
    \end{align*}
    This completes the proof of Proposition \ref{proposition-gamma-beta-min-max} in all the possible ranges of $\gamma$ value.
\end{proof}

Since based on Theorem \ref{theorem-gamma-beta-fairness} $\beta = \min\{\hat{\beta}\}$ subject to $\sum\nolimits_{j\in[K]} \psi_j(\theta_j) \leq B$, we always can always choose $\beta_{i}$, $i \in [K]$ in a way they minimize the right-hand-side of Eq. \eqref{eq-optimization-find-beta-values}.

Now we are ready to go through the proof of theorem. 
Based on Proposition \ref{proposition-modified-gamma-beta-thresholds} and Proposition \ref{proposition-gamma-beta-min-max} for $\gamma\neq1$ we will get:
\begin{align*}
    \frac{1}{1-\gamma}\left(\sum\nolimits_{i\in[K]} U_i(\mathbf{x})^{1-\gamma}\right) &\geq \frac{1}{1-\gamma} \left(\sum\nolimits_{i\in[K]}\left(\frac{U_i(\mathbf{w})}{\beta_i}\right)^{1-\gamma}\right)\\
    &\geq \frac{1}{1-\gamma}\left(\sum\nolimits_{i\in[K]}\left(\frac{U_i(\mathbf{w})}{\beta}\right)^{1-\gamma}\right),
\end{align*}
where the first inequality is resulted from  Proposition \ref{proposition-modified-gamma-beta-thresholds} and the second one from Proposition \ref{proposition-gamma-beta-min-max}. Also for $\gamma=1$ we will further get:
\begin{align*}
    \sum\nolimits_{i\in[K]} \log(U_i(\mathbf{x})) &\geq\sum\nolimits_{i\in[K]}\log\left(\frac{U_i(\mathbf{w})}{\beta_i}\right)\\
    &\geq \sum\nolimits_{i\in[K]}\log\left(\frac{U_i(\mathbf{w})}{\beta}\right),
\end{align*}
where again the first inequality comes from  Proposition \ref{proposition-modified-gamma-beta-thresholds} and the second one is based on Proposition \ref{proposition-gamma-beta-min-max}.

\subsection{Case Study $K=2$}
\label{gamma-beta-fairness-case-study}
To demonstrate the results of Theorem \ref{theorem-gamma-beta-fairness} in a more explicit way, we consider a case study of a simplest possible setting when there are only two class of arrivals. The results of Theorem \ref{theorem-gamma-beta-fairness} can be summarized in three cases as follows:

$\bullet$ For $\gamma \in (1, \infty)$, the solution to the minimax problem of Eq. \eqref{eq:minimax-1} is obtained when $ v_1 = \theta_1$ and $ v_2 = 1$. As a result
\begin{align*}
    \beta = \left(\frac{{\beta_1}^{\gamma-1} + \theta_1^{\frac{\gamma-1}{\gamma}} {\beta_2}^{\gamma-1}}{1 + \theta_1^{\frac{\gamma-1}{\gamma}}}\right)^{\frac{1}{\gamma-1}}.
\end{align*}
To compute $ \beta_1 $ and $ \beta_2 $, recall that by Theorem \ref{theorem-gamma-beta-fairness}, in the case where  $K=2$, we have:
\begin{align*}
    & F_1(\theta_1; \gamma) = \frac{\theta_2^{\frac{\gamma-1}{\gamma}}}{\theta_1^{\frac{\gamma-1}{\gamma}} + \theta_2^{\frac{\gamma-1}{\gamma}}} + \frac{\gamma}{\gamma-1} \ln \left(\frac{\theta_1^{\frac{\gamma-1}{\gamma}}(\theta_2^{\frac{\gamma-1}{\gamma}}+1)}{\theta_1^{\frac{\gamma-1}{\gamma}} + \theta_2^{\frac{\gamma-1}{\gamma}}}\right), \\ 
    & F_2(\theta_2;\gamma) = \frac{\theta_1^{\frac{\gamma-1}{\gamma}}}{\theta_2^{\frac{\gamma-1}{\gamma}} + \theta_1^{\frac{\gamma-1}{\gamma}}} + \frac{\gamma}{\gamma-1} \ln \left(\frac{\theta_2^{\frac{\gamma-1}{\gamma}}(\theta_1^{\frac{\gamma-1}{\gamma}}+1)}{\theta_2^{\frac{\gamma-1}{\gamma}} + \theta_1^{\frac{\gamma-1}{\gamma}}}\right).
\end{align*}
Thus, when $\frac{F_2(\theta_2;\gamma)}{F_1(\theta_1; \gamma)} \geq \theta_1^{\frac{\gamma-1}{\gamma}}$,  $\beta_1$ and $\beta_2$ are:
\begin{align*}
    & \beta_1 = F_2(\theta_2;\gamma) \left(\theta_1^{\frac{\gamma-1}{\gamma}} \frac{F_1(\theta_1; \gamma)}{F_2(\theta_2;\gamma)}\right)^{\frac{1}{\gamma}} + F_1(\theta_1; \gamma), \\ 
    & \beta_2 = F_1(\theta_1; \gamma)\left(\theta_1^{\frac{1-\gamma}{\gamma}} \frac{F_2(\theta_2;\gamma)}{F_1(\theta_1; \gamma)}\right)^{\frac{1}{\gamma}} + F_2(\theta_2;\gamma).
\end{align*}
Otherwise, when $\frac{F_2(\theta_2;\gamma)}{F_1(\theta_1; \gamma)} < \theta_1^{\frac{\gamma-1}{\gamma}}$, we have $\beta_1 = \beta_2 = F_1(\theta_1;\gamma) + F_2(\theta_2;\gamma)$.

$\bullet$ For $\gamma \in (0, 1)$, the solution to the minimax problem of Eq. \eqref{eq:minimax-1} is obtained when $ v_1 = 1$ and $ v_2 = \theta_2$. As a result
\begin{align*}
    \beta = \left(\frac{{\beta_1}^{\gamma-1} + \theta_2^{\frac{1-\gamma}{\gamma}} {\beta_2}^{\gamma-1}}{1+\theta_2^{\frac{1-\gamma}{\gamma}}}\right)^{\frac{1}{\gamma-1}}.
\end{align*}
Similarly, by Theorem \ref{theorem-gamma-beta-fairness} with  $K=2$, we have:
\begin{align*}
    & F_1(\theta_1;\gamma) \coloneqq \frac{1}{\theta_1^{\frac{\gamma-1}{\gamma}} + 1} + \frac{\gamma}{\gamma-1} \ln \left(\frac{2 \cdot \theta_1^{\frac{\gamma-1}{\gamma}}}{\theta_1^{\frac{\gamma-1}{\gamma}} + 1}\right), \quad F_2(\theta_2;\gamma) \coloneqq \frac{1}{\theta_2^{\frac{\gamma-1}{\gamma}} + 1} + \frac{\gamma}{\gamma-1} \ln \left(\frac{2 \cdot \theta_2^{\frac{\gamma-1}{\gamma}}}{\theta_2^{\frac{\gamma-1}{\gamma}} + 1}\right).
\end{align*}
Thus, when $\frac{F_2(\theta_2;\gamma)}{F_1(\theta_1;\gamma)}\geq \theta_2^\frac{1-\gamma}{\gamma}$, $\beta_1$ and $\beta_2$ are 
    \begin{align*}
        \beta_1 = F_2(\theta_2;\gamma) \left(\theta_2^\frac{1-\gamma}{\gamma}\cdot \frac{F_1(\theta_1;\gamma)}{F_2(\theta_2;\gamma)}\right)^\frac{1}{\gamma} + F_1(\theta_1;\gamma),\quad \beta_2 = F_1(\theta_1;\gamma) \left(\theta_2^\frac{\gamma-1}{\gamma}\cdot \frac{F_2(\theta_2;\gamma)}{F_1(\theta_1;\gamma)}\right)^\frac{1}{\gamma} + F_2(\theta_2;\gamma).
    \end{align*}
Otherwise, when $\frac{F_2(\theta_2;\gamma)}{F_1(\theta_1;\gamma)} <  \theta_2^\frac{1-\gamma}{\gamma}$, we have $\beta_1=\beta_2=F_1(\theta_1;\gamma) + F_2(\theta_2;\gamma)$.

$\bullet$ For $\gamma=1$, 
Algorithm \ref{alg-proportional-fairness} is $(1, \beta)$-fair with $\beta=\sqrt{\beta_1\beta_2}$, where $ \beta_1 $ and $ \beta_2 $ are given by: 
\begin{align*}
    \beta_1=\alpha_1, \qquad \beta_2=\alpha_2.
\end{align*}

The result above explains how $\beta_i$ should be extracted from Eqs. \eqref{eq:minimax-1} and \eqref{eq:minimax-2} for each $\gamma$ and the associated $\beta$. Note that the $\beta_i$ values are indeed necessary for designing the threshold functions.

\subsection{Proof of Theorem \ref{theorem-gamma-beta-lowerbound}}\label{appendix-gamma-beta-lowerbound}
We derive the lower bound $\beta^{*}_{\gamma}$ by studying the performance of every online algorithm on the hard instance $I^\GBF$  defined as follows.

\begin{lettereddef}[($\gamma,\beta$)-Fairness Guarantee Hard Instance: $I^\GBF$]\label{definition-instance-I-GBF}
    The instance $I^\GBF$ begins with the arrival of the first batch, which includes one agent from each of the $K$ classes, each with a valuation of 1. Following these $K$ agents, a second batch arrives from class $K$, with valuations ranging within $(1, \theta_{K}]$ and an infinitesimal increment $ \epsilon > 0$. Subsequently, a third batch arrives from the $(K-1)$-th class, with a continuum of valuations within the range $(1, \theta_{K-1}]$. This pattern continues with a continuum of agents from the $(K-2)$-th class, progressing sequentially until it reaches a continuum of agents from the first class, with valuations within the range $(1, \theta_{1}]$. For an arbitrarily small value of $\epsilon > 0$, we can formally denote $I^{\GBF}$ as follows:
    \begin{align*}
        I^{\GBF} = \Biggl\{ & \underbrace{(1,1), (1,2), \dots , (1,K)}_{\text{1st batch of arrivals}}\ , \underbrace{ (1+\epsilon,K) ,  \dots  ,  (\theta_K, K)}_{\text{2nd batch of arrivals}}\ ,\   \\  
        & \underbrace{ (1+\epsilon, K-1),\ \dots\ , (\theta_{K-1}, K-1)}_{\text{3rd batch of arrivals}}\ ,\  \dots \ , \underbrace{ (1+\epsilon, 1), \dots ,(\theta_{1},1)}_{(K+1)\text{-th batch of arrivals}}  \Biggr\},
    \end{align*}
    where $ (v,j)$ corresponds to a buyer with valuation $v$ from class $ j $, $\forall j \in [K]$.
\end{lettereddef}

The performance of any online algorithm on the hard instance $I^\GBF$ can be captured by the set of $K$ utilization functions, $\zeta_{j}:[1,\theta_j] \rightarrow [0,B]$, for all $j \in [K]$. Let $\zeta_{j}(v)$, for all $j \in [K]$ and $v \in [1,\theta_{j}]$, represent the expected amount of resources allocated to agents from the $j$-th class with valuations up to $v$. Due to the online nature of the problem, any $(\gamma, \beta)$-fair online algorithm, \ALG, must ensure the $(\gamma, \beta)$-fairness guarantee at every stage of the instance $I^\GBF$. Therefore, as agents from the $K$-th class arrive (the second batch  with increasing valuations), \ALG must allocate increasing amounts of resources to these agents to maintain efficiency. Simultaneously, \ALG must allocate resources conservatively to preserve sufficient budget for future agents from other classes to ensure fair resource distribution across all classes. Consequently, by the end of the stage in instance $I^{\GBF}$, where buyers with valuation $v$ from class $j$ have arrived, we can establish a lower bound for the expected amount of resources allocated to class $j$ agents with valuations up to $v$.

Based on the definition of hard instance we presented in the Definition \ref{definition-instance-I-GBF}, let us define class of instances $\{I^{j}_{v}\}_{v \in [1,\theta_{j}]}$, $j \in [K]$, where $I^{j}_{v}$ is the instance that consists of agents of instance $I^\GBF$ from the beginning up to the agent from the $j$-th class with valuation $v$.  Instance $I^{j}_{v}$ include all the arriving agents from the the classes $j+1$ up to $K$ that arrive in $I$ before the agent with valuation $v$ from the $j$-th class. Now, here we will present a key lemma which is based on the Proposition \ref{appendix-proposition-U(w)-upperbound}.

\begin{letteredlemma}
    Let $\mathbf{w}^{*}$ be the optimal allocation vector at the end of the instance $I^{j}_{v}$ for some $j \in [K]$ and $v \in [1,\theta_{j}]$. Then it follows that:
    \begin{align*}
    U_{i}(w^{*}) =
    \begin{cases}
    \frac{B}{j-1 + v^{\frac{1-\gamma}{\gamma}} + \sum_{l=j+1}^{K} \theta_l^{\frac{1-\gamma}{\gamma}}} & i \in [1,j-1], \\
    \frac{B\cdot v^{\frac{1}{\gamma}}}{i-1 + v^{\frac{1-\gamma}{\gamma}} + \sum_{l=j+1}^{K} \theta_l^{\frac{1-\gamma}{\gamma}}} & i= j,\\
    \frac{B\cdot \theta_i^{\frac{1}{\gamma}}}{j-1 + v^{\frac{1-\gamma}{\gamma}} + \sum_{l=j+1}^{K} \theta_l^{\frac{1-\gamma}{\gamma}}} & j \in [j+1,K],
            \end{cases}
    \end{align*}
    and we will have:
    \begin{align*}
        \sum_{i=1}^{K} U_{i}(w^{*})^{1-\gamma} = B^{1-\gamma} \cdot (j-1 + v^{\frac{\gamma-1}{\gamma}} + \sum_{l=j+1}^{K} \theta_{l}^{\frac{\gamma-1}{\gamma}})^\gamma.
    \end{align*}
\end{letteredlemma}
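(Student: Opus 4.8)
The plan is to reduce this lemma to a direct substitution into Proposition~\ref{appendix-proposition-U(w)-upperbound}. The first step is to read off, from the batch structure of $I^{\GBF}$ in Definition~\ref{definition-instance-I-GBF}, which valuations have already appeared in each class by the end of the truncated instance $I^j_v$. After the opening batch (one agent of valuation $1$ from every class), the continuum batches arrive in the order class $K$, class $K-1$, $\dots$, class $1$; hence at the end of $I^j_v$ each class $i \in \{j+1,\dots,K\}$ has already seen its entire continuum, so the largest available valuation in class $i$ is $\theta_i$; class $j$ has seen valuations only up to $v$, so its largest available valuation is $v$; and each class $i \in \{1,\dots,j-1\}$ has seen nothing beyond the initial batch, so its only available valuation is $1$.

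The second step is to argue, exactly as in the proof of Proposition~\ref{appendix-proposition-U(w)-upperbound} and using the appendix-wide convention $r_t = B$, that for a fixed amount of resource allotted to a class the $\gamma$-fair objective $\sum_i f(U_i(\cdot))$ is maximized by concentrating that class's allocation on its single highest-valuation agent. Consequently the hindsight-optimal allocation $\mathbf{w}^*$ for $I^j_v$ coincides with the maximizer described in Proposition~\ref{appendix-proposition-U(w)-upperbound} for the valuation vector $v_i = 1$ ($i<j$), $v_j = v$, $v_i = \theta_i$ ($i>j$). Substituting this vector into the closed form $U_i(\mathbf{w}^*) = B\, v_i^{1/\gamma}/\sum_{l\in[K]} v_l^{(1-\gamma)/\gamma}$ supplied by that proposition produces precisely the three cases claimed in the lemma, all sharing the denominator $S := (j-1) + v^{(1-\gamma)/\gamma} + \sum_{l=j+1}^{K}\theta_l^{(1-\gamma)/\gamma}$.

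For the final identity I would raise each $U_i(\mathbf{w}^*)$ to the power $1-\gamma$, so that $U_i(\mathbf{w}^*)^{1-\gamma} = B^{1-\gamma}\, v_i^{(1-\gamma)/\gamma}/S^{1-\gamma}$, pull out the common factor $B^{1-\gamma}S^{-(1-\gamma)}$, and note that the residual sum $\sum_{i\in[K]} v_i^{(1-\gamma)/\gamma}$ is by definition equal to $S$; thus $\sum_{i=1}^{K} U_i(\mathbf{w}^*)^{1-\gamma} = B^{1-\gamma} S^{\gamma}$, which is the stated formula. (The $\gamma=1$ case is degenerate, both sides collapsing to $K$, consistent with the product/log version of Proposition~\ref{appendix-proposition-U(w)-upperbound}.) There is no real obstacle beyond careful bookkeeping; the only delicate point is the first step — correctly determining from the ordering of the batches in $I^{\GBF}$ which classes are saturated to $\theta_i$, which is frozen at $v$, and which remain at valuation $1$ — since an error there would contaminate every subsequent expression.
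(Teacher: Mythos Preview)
Your proposal is correct and follows essentially the same route as the paper: the paper's proof is a one-line appeal to Proposition~\ref{appendix-proposition-U(w)-upperbound} with the substitution $v_i = 1$ for $i \in [1,j-1]$, $v_j = v$, and $v_i = \theta_i$ for $i \in [j+1,K]$, which is exactly the valuation vector you extract from the batch ordering of $I^{\GBF}$. Your additional bookkeeping (the explicit concentration argument and the computation of the sum identity via the common factor $B^{1-\gamma}S^{-(1-\gamma)}$) merely spells out what the paper leaves implicit.
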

\begin{proof}
    The proof of this lemma is directly resulted from Proposition \ref{appendix-proposition-U(w)-upperbound} by setting $v_i = 1$ for all $i\in[1, j-1]$ and $v_i = \theta_i$ for all $i\in[j+1, K]$.
\end{proof}

\begin{letteredlemma}\label{appendix-lemma-system-of-inequalities}
    For some value of $j \in [K]$, $v \in [1,\theta_j]$ and $\gamma \ge 0$, any $(\gamma,\beta)$-fair online algorithms given the hard instance $I^\GBF$ as input, by the end of stage in $I^j_v$ where the agent from class $j$ with valuation $v$ arrives needs to satisfy following system of differential inequalities such that:
    \begin{align*}
        & \frac{1}{1-\gamma}\left[   \left(\frac{V_{j}(v)}{\beta}\right)^{1-\gamma} \right] \geq \frac{1}{1-\gamma} \cdot \left ( B^{1-\gamma}\cdot \left(j-1 + v^{\frac{\gamma-1}{\gamma}} + \sum_{l=j+1}^{K} \theta_{l}^{\frac{\gamma-1}{\gamma}}\right)^\gamma - \sum_{i=1}^{j-1} \left(\frac{V_{i}(1)}{\beta}\right)^{1-\gamma}  - \sum_{i=j+1}^{K} \left(\frac{V_{i}(\theta_{i})}{\beta}\right)^{1-\gamma} \right),
    \end{align*}
    where $V_j(v) = \zeta_{j}(1) + \int_{1}^{v} \eta \cdot d\zeta_{j}(\eta)$, $j \in [K], v \in [1,\theta_{j}]$; $\zeta_{j}(v)$ is the expected amount of resource that is allocated by the online algorithm to buyers from the class $j$ with valuation at most equal to $v$.
\end{letteredlemma}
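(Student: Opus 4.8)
The plan is to fix an arbitrary $(\gamma,\beta)$-fair online algorithm \ALG and, for each $j\in[K]$ and $v\in[1,\theta_j]$, examine \ALG's behaviour on the prefix instance $I^j_v$ of the hard instance $I^{\GBF}$. Since $I^j_v$ is itself a legitimate arrival sequence and \ALG is causal, the allocation \ALG makes on $I^{\GBF}$, restricted to the agents of $I^j_v$, coincides with the allocation it makes on $I^j_v$, and hence must be $(\gamma,\beta)$-fair with respect to $I^j_v$. The first step is to rewrite the class utilities of this allocation in terms of the cumulative utilization functions $\zeta_i$. By the construction in Definition \ref{definition-instance-I-GBF}, at the stage corresponding to $I^j_v$: each class $i<j$ has only seen its single valuation-$1$ agent from the first batch, so $U_i(\mathbf{x})$ is exactly $\zeta_i(1)=V_i(1)$ (no further class-$i$ agents arrive, so this is an equality, not merely a bound); class $j$ has seen agents with valuations ranging over $[1,v]$, so by Riemann--Stieltjes accounting $U_j(\mathbf{x})=V_j(v)=\zeta_j(1)+\int_1^v\eta\,d\zeta_j(\eta)$; and each class $i>j$ has already seen all of its agents, up to valuation $\theta_i$, so $U_i(\mathbf{x})=V_i(\theta_i)$. (If \ALG is randomized one reads all of these as expectations and uses concavity of $f$ to pass to the expected allocation.)

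The second step is to feed these expressions into the $(\gamma,\beta)$-fairness inequality together with the optimal offline benchmark on $I^j_v$. The preceding lemma — which follows from Proposition \ref{appendix-proposition-U(w)-upperbound} by taking $v_i=1$ for $i<j$, $v_j=v$, and $v_i=\theta_i$ for $i>j$ — gives the closed form $\sum_{i=1}^K U_i(\mathbf{w}^*)^{1-\gamma}=B^{1-\gamma}\big(j-1+v^{\frac{\gamma-1}{\gamma}}+\sum_{l=j+1}^K\theta_l^{\frac{\gamma-1}{\gamma}}\big)^\gamma$ for the utility-maximizing allocation $\mathbf{w}^*$. Applying $\sum_i f(U_i(\mathbf{w}^*))\le\sum_i f(\beta U_i(\mathbf{x}))$ (respectively the $\log$-form when $\gamma=1$), substituting the step-one identities for $U_i(\mathbf{x})$ and the closed form for $\sum_i U_i(\mathbf{w}^*)^{1-\gamma}$, and moving all terms except the $i=j$ term to the right-hand side isolates $(V_j(v)/\beta)^{1-\gamma}$ against exactly $B^{1-\gamma}(\cdots)^\gamma-\sum_{i<j}(V_i(1)/\beta)^{1-\gamma}-\sum_{i>j}(V_i(\theta_i)/\beta)^{1-\gamma}$, which is the claimed differential inequality.

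The step I expect to be the main obstacle is the sign and scaling bookkeeping in this last rearrangement. Because $1-\gamma$ changes sign at $\gamma=1$, clearing the common factor $\tfrac{1}{1-\gamma}$ flips the direction of the inequality, and the identity $f(\beta x)=\beta^{1-\gamma}f(x)$ must be tracked so that the $\beta$-factors land on $V_j(v)$, $V_i(1)$ and $V_i(\theta_i)$ exactly as stated. The case $\gamma=1$ needs its own (shorter) treatment: there $f=\log$ turns each sum into the logarithm of a product, the fairness condition becomes the Nash-welfare inequality $\prod_i U_i(\mathbf{w}^*)\le \prod_i(\beta U_i(\mathbf{x}))$, and the bound on $V_j(v)$ is obtained by dividing out the $i\ne j$ factors and reading off the result. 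A final routine point is verifying monotonicity of the $\zeta_i$ and that the allocation to classes $i<j$ is genuinely frozen once their lone agent has been processed, which makes the relations $U_i(\mathbf{x})=V_i(\cdot)$ exact and lets the lemma be stated with equality-tight substitutions.
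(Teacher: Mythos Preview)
Your proposal is correct and follows essentially the same approach as the paper's own proof: identify $U_i(\mathbf{x})$ with $V_i(1)$, $V_j(v)$, or $V_i(\theta_i)$ according to whether $i<j$, $i=j$, or $i>j$ on the prefix $I^j_v$; invoke the preceding lemma for the closed form of $\sum_i U_i(\mathbf{w}^*)^{1-\gamma}$; and rearrange the $(\gamma,\beta)$-fairness inequality. Your write-up is in fact more careful than the paper's terse argument---you explicitly track the sign flip from the $\tfrac{1}{1-\gamma}$ factor, treat $\gamma=1$ separately, and note the Jensen step for randomized algorithms, none of which the paper spells out.
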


\begin{proof}
Let $x$ be the allocation  generated by the online algorithm. Based on the definition of ($\gamma, \beta$)-fairness, we should have:
    \begin{align*}
        \frac{1}{1-\gamma}\left(\sum\nolimits_{i\in[K]}U_i(\mathbf{x})^{1-\gamma}\right)  \geq \frac{1}{1-\gamma}\left(\sum\nolimits_{i\in[K]}\left(\frac{U_i(\mathbf{w^{*}})}{\beta}\right)^{1-\gamma}\right).
    \end{align*}
Under instance $I_v^j$ which ends with valuation $v$ from class $j$, we can see that:
\begin{align*}
    \frac{1}{1-\gamma}\left(\sum\nolimits_{i\in[K]}U_i(\mathbf{x})^{1-\gamma}\right) & = \frac{1}{1-\gamma} \left[\sum_{i=1}^{j-1} V_{i}(1)^{1-\gamma} +  V_{j}(v)^{1-\gamma}  + \sum_{i=j+1}^{K} V_{i}(\theta_{i})^{1-\gamma}\right],                  
\end{align*}
where this equality follows since $U_{i}(\mathbf{x}) = V_{i}(v)$ for any value of $v$ in the instance $I^j_v$ defined above. Additionally, we also have
    \begin{align*}
        \frac{1}{1-\gamma}\beta^{\gamma-1} \cdot B^{1-\gamma}\cdot \left(j-1 + v^{\frac{\gamma-1}{\gamma}} + \sum_{l=j+1}^{K} \theta_{l}^{\frac{\gamma-1}{\gamma}}\right)^\gamma = \frac{1}{1-\gamma}\left(\sum\nolimits_{i\in[K]}\left(\frac{U_i(\mathbf{w^{*}})}{\beta}\right)^{1-\gamma}\right).
    \end{align*}
Therefore, the inequality of Lemma \ref{appendix-lemma-system-of-inequalities} is necessary to ensure that an algorithm is $(\gamma,\beta)$-fair.
\end{proof}
Now, by applying the Gronwall's inequality to each of the inequalities from the above lemma, the lemma below follows.
\begin{letteredlemma}
    For any online algorithm over the hard instance $I^{\GBF}$, let us assume that the exact value of $V_{i}(1)$, $\forall i \in [1,j-1]$, and $V_{i}(\theta_{i})$, $\forall i \in [j+1,K]$, are given. Then, the following inequality holds
    \begin{align*}
        &\zeta_{j}(v) \ge \frac{1}{v} \cdot \beta^{-1} \cdot g_j(v)^{\frac{1}{1-\gamma}} + \int_{\eta=1}^{v}  \frac{\beta^{-1}}{\eta^{2}} \cdot g_j(\eta)^{\frac{1}{1-\gamma}} \cdot d\eta, \qquad \forall v \in [1,\theta_{j}),
    \end{align*} 
    where 
    \begin{align*}
        &g_{j}(v) = B^{1-\gamma}\cdot \left(j-1 + v^{\frac{\gamma-1}{\gamma}} + \sum_{l=j+1}^{K} \theta_{l}^{\frac{\gamma-1}{\gamma}}\right)^\gamma - \sum_{i=1}^{j-1} \left(\beta \cdot V_{i}(1)\right)^{1-\gamma} - \sum_{i=j+1}^{K} (\beta \cdot V_{i}(\theta_{i}))^{1-\gamma}.
    \end{align*}
\end{letteredlemma}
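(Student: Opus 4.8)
The plan is to follow the same blueprint as the \GFQ lower bound in the proof of Theorem~\ref{theorem-gfq-lowerbound}: turn the pointwise constraint of Lemma~\ref{appendix-lemma-system-of-inequalities} into a first-order differential inequality for $\zeta_j$ and close it with Gronwall's inequality. Throughout, $j$ is fixed and $V_i(1)$ (for $i<j$) and $V_i(\theta_i)$ (for $i>j$) are treated as the prescribed constants, so that $g_j(\cdot)$ is an explicit function of $v$ alone.

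First I would extract a clean pointwise lower bound on the accumulated utility $V_j(v)$. Starting from Lemma~\ref{appendix-lemma-system-of-inequalities}, I clear the common factor $\tfrac{1}{1-\gamma}$ — taking care that this reverses the inequality when $\gamma>1$ — to obtain a bound on $V_j(v)^{1-\gamma}$ in terms of $B^{1-\gamma}(j-1+v^{(\gamma-1)/\gamma}+\sum_{l>j}\theta_l^{(\gamma-1)/\gamma})^{\gamma}$ and the prescribed constants. A short monotonicity argument using $\beta\ge 1$ (which lets one replace the factors $\beta^{1-\gamma}$ and $\beta^{\gamma-1}$ produced along the way by the weaker, but cleaner, bound dictated by the sign of $1-\gamma$), combined with raising both sides to the power $1/(1-\gamma)$ — a map that is increasing for $\gamma<1$ and decreasing for $\gamma>1$, hence supplying a second sign flip that cancels the first — then yields
\begin{align*}
V_j(v)\ \ge\ \beta^{-1}\,g_j(v)^{\frac{1}{1-\gamma}},\qquad \forall v\in[1,\theta_j).
\end{align*}
Here one uses that $g_j(v)\ge 0$ on this range, which makes the fractional power legitimate; in the parameter regime relevant to the optimization of Theorem~\ref{theorem-gamma-beta-lowerbound} this holds automatically.

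Next I would pass from $V_j$ to $\zeta_j$ and apply Gronwall. Since $V_j(v)=\zeta_j(1)+\int_1^v \eta\,d\zeta_j(\eta)$, integrating by parts (valid because $\zeta_j$ is non-decreasing) gives $V_j(v)=v\,\zeta_j(v)-\int_1^v \zeta_j(\eta)\,d\eta$, exactly as in the \GFQ argument, so the bound above becomes
\begin{align*}
\zeta_j(v)\ \ge\ \frac{\beta^{-1}g_j(v)^{\frac{1}{1-\gamma}}}{v}\ +\ \frac{1}{v}\int_1^v \zeta_j(\eta)\,d\eta,\qquad \forall v\in[1,\theta_j).
\end{align*}
Writing $\omega(v)=\beta^{-1}g_j(v)^{1/(1-\gamma)}$ and $G(v)=\int_1^v\zeta_j(\eta)\,d\eta$, this reads $G'(v)-G(v)/v\ge \omega(v)/v$; multiplying by the integrating factor $1/v$, integrating from $1$ to $v$ (with $G(1)=0$), and substituting back gives
\begin{align*}
\zeta_j(v)\ \ge\ \frac{\omega(v)}{v}+\int_1^v\frac{\omega(\eta)}{\eta^{2}}\,d\eta\ =\ \frac{1}{v}\,\beta^{-1}g_j(v)^{\frac{1}{1-\gamma}}+\int_{\eta=1}^{v}\frac{\beta^{-1}}{\eta^{2}}\,g_j(\eta)^{\frac{1}{1-\gamma}}\,d\eta,
\end{align*}
which is the asserted inequality.

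The integration-by-parts and Gronwall steps are essentially a transcription of the \GFQ computation, so the real work — and the main obstacle — is the first step: one must keep the two sign reversals induced by $1-\gamma$ straight (one when clearing $\tfrac{1}{1-\gamma}$, one when taking the $1/(1-\gamma)$-th root), perform the $\beta\ge 1$ simplification in the correct direction separately for $\gamma<1$ and $\gamma>1$, and verify $g_j\ge 0$ wherever the fractional power is used. The borderline case $\gamma=1$ is excluded from this lemma (the exponent $1/(1-\gamma)$ degenerates) and is handled separately via the logarithmic/product form of $f$.
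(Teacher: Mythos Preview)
Your proposal is correct and takes essentially the same approach as the paper: the paper's own proof is the single sentence ``follows directly from applying Gronwall's inequality to the system of inequalities in Lemma~\ref{appendix-lemma-system-of-inequalities},'' and you have faithfully expanded this into the natural three steps (extract the pointwise bound $V_j(v)\ge \beta^{-1}g_j(v)^{1/(1-\gamma)}$, integrate by parts to pass to $\zeta_j$, then run the Gronwall/integrating-factor argument exactly as in the \GFQ lower bound). Your extra care with the two sign flips from $1/(1-\gamma)$ and the $\beta\ge 1$ monotonicity to reconcile the $\beta$-exponents is a detail the paper leaves implicit.
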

\begin{proof}
    The first inequality follows directly from applying Gronwall's inequality to the system of inequalities in Lemma \ref{appendix-lemma-system-of-inequalities}.
\end{proof}

Next, we are going to compute the values of $V_{j}(1)$ and $V_{j}(\theta_j)$ corresponding to the optimal online algorithm on instance $I^{\GBF}$. Let us fix the value of  $V_j(\theta_j)$ and $\zeta_{j}(\theta_{j})$ using the dummy variable $\lambda_{j}$ as follows:
\begin{align*}
    &V_{j}(\theta_{j}) = \beta^{-1} \cdot \left(B^{1-\gamma}\cdot \left(j-1 + \theta_j^{\frac{\gamma-1}{\gamma}} + \sum_{l=j+1}^{K} \theta_{l}^{\frac{\gamma-1}{\gamma}}\right)^\gamma - \sum_{i=1}^{j-1} \left(\beta \cdot V_{i}(1)\right)^{1-\gamma} + \sum_{i=j+1}^{K} (\beta \cdot V_{i}(\theta_{i}))^{1-\gamma}\right)^{\frac{1} {1-\gamma}} + \frac{\lambda_{j}}{\beta} \cdot \theta_{j}, \\
    &\zeta_{j}(\theta_{j}) =  \frac{1}{\theta_{j}} \cdot \beta^{-1} \cdot g_i(\theta_{j})^{\frac{1}{1-\gamma}} - \int_{\eta=1}^{\theta_{j}}  \frac{\beta^{-1}}{\eta^{2}} \cdot g_j(\eta)^{\frac{1}{1-\gamma}} \cdot d\eta +  \frac{\lambda_j}{\beta}.
\end{align*}
For now let us assume the values of $V_{j}(\theta_{j})$ and $V_{j}(1)$ corresponding to the optimal online algorithm are known.
Given the budget constraint, it follows that $ \sum\nolimits_{j\in[K]}\zeta_j(\theta_j) \leq B$. We argue that:
\begin{align*}
   & \sum\nolimits_{j\in[K]}\zeta_j(\theta_j) = \sum\nolimits_{j \in [K]} \frac{1}{\theta_{j}} \cdot \beta^{-1} \cdot g_j(\theta_{j})^{\frac{1}{1-\gamma}} + \int_{\eta=1}^{\theta_{j}}  \frac{\beta^{-1}}{\eta^{2}} \cdot g_j(\eta)^{\frac{1}{1-\gamma}} \cdot d\eta +  \frac{\lambda_j}{\beta} \leq B.
\end{align*}
As a result
\begin{align*}
    \beta \geq B^{-1}\left(\sum\nolimits_{j \in [K]} \frac{1}{\theta_{j}} \cdot g_j(\theta_{j})^{\frac{1}{1-\gamma}} + \int_{\eta=1}^{\theta_{j}}  \frac{1}{\eta^{2}} \cdot g_j(\eta)^{\frac{1}{1-\gamma}} \cdot d\eta + \lambda_j\right).
\end{align*}
The right-hand-side of the above inequality will give a lower bound for the $\beta$-fairness guarantee of the optimal algorithm on instance $I^\GBF$. Consequently, it provides a lower bound for the $\beta$-fairness guarantee of any online algorithm. 

In order to find the values $V_{j}(\theta_{j})$ and $V_{j}(1)$ corresponding to the optimal online algorithm, let us introduce the variables $V_{j} = \beta \cdot V_{j}(\theta_{j}) $ and $\rho_{j} = \beta \cdot V_{j}(1) $. In order to compute these values, we introduce the optimization problem in Theorem \ref{theorem-gamma-beta-lowerbound} to minimize the right-hand-side of above inequality where the decision variables for this optimization problem are $\rho_j$, $\lambda_j$ and $V_{j}$. 

\subsection{Proof of Corollary \ref{theorem-gamma-beta-order-optimal}}\label{appendix-theorem-gamma-beta-order-optimal}
In the following, we discuss the order optimality of Algorithm \ref{alg-proportional-fairness} with $K=2$ in three cases. 
\paragraph{First Case: $\gamma > 1$}
It can be proven that Algorithm \ref{alg-proportional-fairness} is in the same order of the optimal solution when $\gamma$ is larger than 1. Let us start with the lower bound of the optimal solution. We know that we can rewrite the definition of ($\gamma,\beta$)-fairness as follows:
\begin{align*}
    \left(\sum_{i=1}^K \frac{U_i^{1-\gamma}(\mathbf{w})}{K}\right)^{\frac{1}{1-\gamma}} \leq \beta \cdot \left(\sum_{i=1}^K \frac{U_i^{1-\gamma}(\mathbf{x})}{K}\right)^{\frac{1}{1-\gamma}}.
\end{align*}
This is basically the (1-$\gamma$)-mean definition of the $K$ non-negative values \cite{barman2022universal}. It is known that as $1-\gamma$ decreases (or equivalently, as $\gamma$ increases), the mean function also decreases. Therefore we can see that 
\begin{align*}
    \left(\sum_{i=1}^K \frac{U_i^{1-\gamma}(\mathbf{w})}{K}\right)^{\frac{1}{1-\gamma}} \leq \beta \cdot \left(\sum_{i=1}^K \frac{U_i^{1-\gamma}(\mathbf{x})}{K}\right)^{\frac{1}{1-\gamma}} \leq \beta \cdot \left(\sum_{i=1}^K \frac{U_i(\mathbf{x})}{K}\right).
\end{align*}
Now let us focus on the two-class case and consider an instance where the arrivals are coming from both class simultaneously and their valuations increase form 1 to some $v\leq \theta_1$ value. This instance could be considered as the following.
\begin{align*}
    I=\left\{\underbrace{(1,1), (1,2)}_{\text{first batch}}, \underbrace{(1+\epsilon,1), (1+\epsilon,2)}_{\text{second batch}}, \underbrace{(1+2\epsilon,1), (1+2\epsilon,2)}_{\text{third batch}},\dots , \underbrace{(\theta_1,1), (\theta_1,2)}_{\text{last batch}}\right\}
\end{align*}
Under this instance, we can see that the in the worst a utilization function $\zeta_i(v):[1,\theta_i]\to [0,B]$ must exist for each class such that
\begin{align*}
    &\frac{B\cdot v}{2} \leq \beta \cdot \left(\frac{Z_1(v)+Z_2(v)}{2}\right).
\end{align*}
where $Z_i(v) = \zeta_i(1) + \int_1^v \eta d \zeta(\eta)$. Let $Z(v) \coloneqq Z_1(v)+ Z_2(v)$. Therefor by applying Gronwall's inequality and substituting $v$ with $\theta_1$ we will further get
\begin{align*}
    B\cdot \alpha_1 \leq \beta \cdot \zeta(\theta_1) \leq \beta \cdot B.
\end{align*}
where $\alpha_1 = 1+\ln{\theta_1}$ and the last inequality is based on the budget constraint.

Now let us move on to find the order optimality of \U-\PRB with ($\gamma,\beta$)-fairness guarantee. As $\theta_2$ approaches to infinity, based on Theorem \ref{theorem-gamma-beta-fairness} we can see that:
\begin{align*}
    &\lim_{\theta_2\to \infty} F_1(\theta_1;\gamma) = \lim_{\theta_2\to \infty} \frac{\theta_2^{\frac{\gamma-1}{\gamma}}}{\theta_1^{\frac{\gamma-1}{\gamma}} + \theta_2^{\frac{\gamma-1}{\gamma}}} + \frac{\gamma}{\gamma-1} \ln \left(\frac{\theta_1^{\frac{\gamma-1}{\gamma}}\left(\theta_2^{\frac{\gamma-1}{\gamma}}+1\right)}{\theta_1^{\frac{\gamma-1}{\gamma}} + \theta_2^{\frac{\gamma-1}{\gamma}}}\right) = 1+ \ln{\theta_1},\\
    &\lim_{\theta_2\to \infty} F_2(\theta_2;\gamma) = \lim_{\theta_2\to \infty} \frac{\theta_1^{\frac{\gamma-1}{\gamma}}}{\theta_1^{\frac{\gamma-1}{\gamma}} + \theta_2^{\frac{\gamma-1}{\gamma}}} + \frac{\gamma}{\gamma-1} \ln \left(\frac{\theta_2^{\frac{\gamma-1}{\gamma}}\left(\theta_1^{\frac{\gamma-1}{\gamma}}+1\right)}{\theta_1^{\frac{\gamma-1}{\gamma}} + \theta_2^{\frac{\gamma-1}{\gamma}}}\right) = \frac{\gamma}{\gamma-1}\ln{\left(\theta_1^{\frac{\gamma-1}{\gamma}}+1\right)}.
\end{align*}
Now let us assume that $\beta_1\geq \beta_2$. In this case based on Theorem \ref{theorem-gamma-beta-fairness}, 
\begin{align*}
    \beta=F_1(\theta_1;\gamma)+F_2(\theta_2;\gamma) = 1+\ln{\theta_1}+\frac{\gamma}{\gamma-1}\ln{\left(\theta_1^{\frac{\gamma-1}{\gamma}}+1\right)},
\end{align*}
which is in the order of $\alpha_1$ when $\gamma$ is not approaching to 1. Also when $\beta_1 < \beta_2$, we know that $\beta\leq F_1(\theta_1;\gamma)+F_2(\theta_2;\gamma)$. Thus, $\beta$ is still in the order of $\alpha_1$ and this proofs the order optimality of the algorithm when $\gamma> 1$.

\paragraph{Second Case: $\gamma < 1$}
In order to show the order optimality of Algorithm \ref{alg-proportional-fairness} when $\gamma$ is less than 1, we begin with the lower bound of the optimal solution in this case. Let us rewrite the ($\gamma,\beta$)-fairness as:
\begin{align*}
    \left(\sum_{i=1}^K U_i^{1-\gamma}(\mathbf{w})\right)^{\frac{1}{1-\gamma}} \leq \beta \cdot \left(\sum_{i=1}^K \frac{U_i^{1-\gamma}(\mathbf{x})}{K}\right)^{\frac{1}{1-\gamma}} \cdot K^{\frac{1}{1-\gamma}}.
\end{align*}
Based on Proposition \ref{appendix-proposition-U(w)-upperbound} in the two-class case we have:
\begin{align*}
    B\cdot \left(v_1^\frac{1-\gamma}{\gamma} + v_2^\frac{1-\gamma}{\gamma}\right)^\frac{\gamma}{1-\gamma} \leq \beta \cdot \left(\frac{Z_1(v_1)^{1-\gamma} + Z_2(v_2)^{1-\gamma}}{2}\right)^{\frac{1}{1-\gamma}} \cdot 2^{\frac{1}{1-\gamma}}.
\end{align*}
The minimum of the left-hand side is $B\cdot\max\{v_1,v_2\}$ and the maximum of the right-hand side is $\beta\cdot \frac{Z_1(v_1)+Z_2(v_2)}{2}\cdot 2^\frac{1}{1-\gamma}$. Thus in order to find a lower bound of $\beta$ we can rewrite the above as:
\begin{align*}
    B\cdot \max\{v_1,v_2\} \leq \beta \cdot (Z_1(v_1)+Z_2(v_2))\cdot 2^\frac{\gamma}{1-\gamma}.
\end{align*}
Now let us consider an instance where the first arrival is from class 1 with valuation 1 (minimum valuation) and then all the arrivals are coming from class 2 and their valuations increase form 1 to some $v_2\leq \theta_2$ value. This instance could be considered as the following.
\begin{align*}
    I=\left\{\underbrace{(1,1)}_{\text{first arrival}}, \underbrace{(1,2), (1+\epsilon,2), (1+2,2), \dots, (\theta_2, 2)}_{\text{second batch}}\right\}.
\end{align*}
Therefore, under this instance we will have:
\begin{align*}
    B\cdot v_2 \leq \beta\cdot (\zeta_1(1)+Z_2(v_2))\cdot 2^\frac{\gamma}{1-\gamma}.
\end{align*}
and using the Gronwall's inequality and substituting $v_2$ with $\theta_2$ we will further obtain
\begin{align*}
    B\cdot \alpha_2 \cdot 2^\frac{\gamma}{\gamma-1} \leq \beta \cdot (\zeta_1(1)+\zeta_2(\theta_2)) \leq \beta \cdot B,
\end{align*}
where $\alpha_2 = 1+\ln{\theta_2}$ and the last inequality is based on the budget constraint.
Now let us move on to find the order optimality of our designed algorithm. We can see that in Theorem \ref{theorem-gamma-beta-fairness}, $F_1(\theta_1;\gamma)$ and $F_2(\theta_2;\gamma)$ are always decreasing with respect $\gamma$. Therefore, 
\begin{align*}
    &F_1(\theta_1;\gamma) \leq \lim_{\gamma\to 0} \frac{1}{\theta_1^{\frac{\gamma-1}{\gamma}} + 1} + \frac{\gamma}{\gamma-1} \ln \left(\frac{2 \cdot \theta_1^{\frac{\gamma-1}{\gamma}}}{\theta_1^{\frac{\gamma-1}{\gamma}} + 1}\right) = \alpha_1,\\
    &F_2(\theta_2;\gamma) \leq \lim_{\gamma\to 0} \frac{1}{\theta_2^{\frac{\gamma-1}{\gamma}} + 1} + \frac{\gamma}{\gamma-1} \ln \left(\frac{2 \cdot \theta_2^{\frac{\gamma-1}{\gamma}}}{\theta_2^{\frac{\gamma-1}{\gamma}} + 1}\right) = \alpha_2.
\end{align*}
Now let us assume that $\beta_1\geq \beta_2$. In this case based on Theorem \ref{theorem-gamma-beta-fairness}, 
\begin{align*}
    \beta=F_1(\theta_1;\gamma)+F_2(\theta_2;\gamma) = \alpha_1 + \alpha_2,
\end{align*}
which is in the order of $\alpha_2$ when $\gamma$ is not approaching to 1. Also when $\beta_1 < \beta_2$, we know that $\beta\leq F_1(\theta_1;\gamma)+F_2(\theta_2;\gamma)$. Thus, $\beta$ is still in the order of $\alpha_2$ and this proofs the order optimality of the algorithm when $\gamma> 1$.

\paragraph{Third Case: $\gamma \to 1$}
When $\gamma$ approaches to 1, we can see that based on the definition of ($\gamma,\beta$)-fairness we will have:
\begin{align*}
    \left(\prod_{i=1}^K U_i(\mathbf{w})\right)^{\frac{1}{1-\gamma}} \leq \beta \cdot  \left(\prod_{i=1}^K U_i(\mathbf{x})\right)^{\frac{1}{1-\gamma}}.
\end{align*}
This is basically the definition of Nash welfare approximation and in the worst case it is necessary that the following inequality be hold:
\begin{align*}
    \frac{B}{K}\left(\prod_{i=1}^K v_i \right)^{\frac{1}{1-\gamma}} \leq \beta \cdot  \left(\prod_{i=1}^K Z_i(v_i)\right)^{\frac{1}{1-\gamma}}.
\end{align*}
Now let us focus on the two-class case and consider an instance where the first arrival is from class 1 with valuation 1 (minimum valuation) and it is followed by arrivals from class 2 and their valuations increase from 1 to some $\theta_2$ values and Finally the last batch of arrivals start coming again from class 1 with an increasing valuations from 1 to $v_1\leq \theta_1$. This instance can be written in the follow form:
\begin{align*}
    I=\left\{\underbrace{(1,1)}_{\text{first arrival}}, \underbrace{(1,2), (1+\epsilon,2), (1+2\epsilon,2), \dots, (\theta_2,2)}_{\text{second batch}}, \underbrace{(1+\epsilon,1), (1+2\epsilon,1), (\theta_1, 1)}_{\text{third batch}}\right\}.
\end{align*}
Under this instance, we can see that:
\begin{align*}
    \frac{B}{2}\sqrt{v_2} \leq \beta \cdot \sqrt{Z_1(1)\cdot Z_2(v_2)}.
\end{align*}
Therefor we can easily by the Gronwall's inequality and setting $v_2=\theta_2$ we will further get:
\begin{align*}
    \frac{B^2}{4}(1+\ln \theta_2) \leq \beta^2 \cdot \zeta_1(1)\cdot \zeta_2(\theta_2).
\end{align*}
After this, the second stream of the arrivals starts. We thus have
\begin{align*}
    &\frac{B}{2}\sqrt{\theta_2\cdot v_1} \leq \beta\cdot \sqrt{Z_1(v_1)\cdot Z_2(\theta_2)}.
\end{align*}
And by setting $v_1=\theta_1$ we will get
\begin{align*}
    \frac{B^2}{4}\theta_2\cdot (1+\ln{\theta_1}) \leq \beta^2\cdot \zeta_1(\theta_1)\cdot Z_2(\theta_2).
\end{align*}
By combining the above inequalities we will get:
\begin{align*}
   \frac{1}{2}\sqrt{\alpha_1\alpha_2} \leq \beta\cdot \sqrt{\zeta_1(\theta_1)\cdot \zeta_2(\theta_2)}.
\end{align*}
And based on AM-GM inequality we will obtain the following inequality:
\begin{align*}
    &\frac{B}{2}\sqrt{\alpha_1\alpha_2} \leq \beta\cdot \sqrt{\zeta_1(\theta_1)\cdot \zeta_2(\theta_2)} \leq \beta\frac{\zeta_1(\theta_1)+ \zeta_2(\theta_2)}{2}\leq \frac{B}{2}\beta.
\end{align*}
Therefore we can see that the lower bound of the problem when $\gamma \to 1$ is in the order of $\sqrt{\alpha_2}$. Now let us move on to find the order optimality of \U-\PRB with ($\gamma,\beta$)-fairness guarantee. As $\gamma$ approaches to 1, we can see that in Theorem \ref{theorem-gamma-beta-fairness}:
\begin{align*}
    &\lim_{\gamma\to 1} F_1(\theta_1;\gamma) = \lim_{\gamma\to 1} \frac{\theta_2^{\frac{\gamma-1}{\gamma}}}{\theta_1^{\frac{\gamma-1}{\gamma}} + \theta_2^{\frac{\gamma-1}{\gamma}}} + \frac{\gamma}{\gamma-1} \ln \left(\frac{\theta_1^{\frac{\gamma-1}{\gamma}}\left(\theta_2^{\frac{\gamma-1}{\gamma}}+1\right)}{\theta_1^{\frac{\gamma-1}{\gamma}} + \theta_2^{\frac{\gamma-1}{\gamma}}}\right) = \frac{1}{2}+ \frac{\ln{\theta_1}}{2},\\
    &\lim_{\gamma\to 1} F_2(\theta_2;\gamma) = \lim_{\gamma\to 1} \frac{\theta_1^{\frac{\gamma-1}{\gamma}}}{\theta_1^{\frac{\gamma-1}{\gamma}} + \theta_2^{\frac{\gamma-1}{\gamma}}} + \frac{\gamma}{\gamma-1} \ln \left(\frac{\theta_2^{\frac{\gamma-1}{\gamma}}\left(\theta_1^{\frac{\gamma-1}{\gamma}}+1\right)}{\theta_1^{\frac{\gamma-1}{\gamma}} + \theta_2^{\frac{\gamma-1}{\gamma}}}\right) = \frac{1}{2}+ \frac{\ln{\theta_2}}{2}.
\end{align*}
In this case we can see that the only possibility is $\beta_1\leq \beta_2$. Furthermore, as $\gamma\to 1$ we will have:
\begin{align*}
    &\lim_{\gamma\to 1} \beta_1 = \alpha_1, \qquad \lim_{\gamma\to 1} \beta_2 = \alpha_2,
\end{align*}
and then we can see that:
\begin{align*}
    \lim_{\gamma\to 1} \beta = \sqrt{\alpha_1\alpha_2},
\end{align*}
which is the same as the lower bound in this case and therefore the order optimality is proven.

\subsection{Proof of Theorem \ref{theorem-gamma-beta-fairness-trade-off-design}}\label{appendix-theorem-gamma-beta-fairness-trade-off-design}
The design of the increasing threshold functions $\phi_j(\cdot)$, $\forall j \in [K]$, is based on the inverse of the utilization function $\psi_j(\cdot)$, and follows the same structure as outlined in Theorem \ref{theorem-gamma-beta-fairness}. Therefore, we can argue that Algorithm \ref{theorem-proportional-fairness-beta} is ($\gamma,\beta$)-fair for any $\beta$ greater than the one specified in Theorem \ref{theorem-gamma-beta-fairness}. Next, we determine the competitive ratio of Algorithm \ref{alg-proportional-fairness-tradeoff-local-globol}, based on the threshold design from Theorem \ref{theorem-gamma-beta-fairness-trade-off-design}. Consider an input instance $I$ where, for each class $i \in [K]$, the agent with the highest valuation has a value of $v_i$. The optimal offline objective, $\OPT(I)$, for this input instance is then upper-bounded as follows:
\begin{align*}
    \OPT(I) \leq \max_{i\in [K]} \{B\cdot v_i\}.
\end{align*}
Let $\ALG(I)$ denote the objective of Algorithm \ref{alg-proportional-fairness-tradeoff-local-globol} on the input instance $I$. It follows that: 
\begin{align*}
    \ALG(I) \geq \sum\nolimits_{i\in[K]} \Psi_i(v_i) + \Psi^G(\max\{v_i\},
\end{align*}
where $\Psi_i(v) = \psi_i(1) + \int_1^v \eta d \psi_i(\eta)$. Based on the design of utilization functions $\psi_{i}$ and $\psi^{G}$ done in Theorem \ref{theorem-gamma-beta-fairness-trade-off-design}, it follows that:
\begin{align*}
    \ALG(I) \ge& \left( \sum_{i \in [K]}  \psi_{i}(1) + \int_{\eta = i}^{p_{i}} \eta ~ d\psi_{i}(\eta) \right) + \psi^{G}(1) + \int_{\eta=1}^{\max{\{p_{i}\}_{i \in [K]}}} \eta ~d\psi^{G}(\eta)\\
    \ge& \sum_{i\in[K]} \frac{B}{\beta_i}\cdot \frac{v_i}{\sum_{j\in[i^-]} \left(\frac{v_i}{\theta_j}\right)^\frac{\gamma-1}{\gamma} + 1} +\frac{B - B\cdot \sum_{i\in[K-1]} \frac{F_i(\theta_i) - F_i(1)}{\beta_i}}{\alpha_K}\cdot\max\{v_i\} - \\&\frac{B}{\beta_K}\cdot\frac{\max\{v_i\}}{\sum_{j\in[K^-]} \left(\frac{\max\{v_i\}}{\theta_j}\right)^\frac{\gamma-1}{\gamma} + 1} - \sum_{j\in[K-1]} \frac{B}{\beta_j}F_j(1),
\end{align*}
where in above the second inequality follows from the definition of $\psi_{i}$ functions. The minimum of the right-hand side occurs when $v_i = 1$ for all $i \in [K-1]$ and $v_K = \theta_K$. Therefore:
\begin{align*}
      \ALG(I) \ge &\sum_{i\in[K-1]} \frac{B}{\beta_i}\cdot \frac{1}{\sum_{j\in[i^-]} \left(\frac{1}{\theta_j}\right)^\frac{\gamma-1}{\gamma} + 1} + \frac{B}{\beta_K}\cdot \frac{\theta_K}{\sum_{j\in[K^-]} \left(\frac{\theta_K}{\theta_j}\right)^\frac{\gamma-1}{\gamma} + 1} + \\&\frac{B - B\cdot \sum_{i\in[K-1]} \frac{F_i(\theta_i) - F_i(1)}{\beta_i}}{\alpha_K}\cdot \theta_K - \frac{B}{\beta_K}\cdot\frac{\theta_K}{\sum_{j\in[K^-]} \left(\frac{\theta_K}{\theta_j}\right)^\frac{\gamma-1}{\gamma} + 1} - \sum_{i\in[K-1]} \frac{B}{\beta_i}F_i(1).
\end{align*}
Since $F_i(1) = \frac{1}{\beta_i}\cdot \frac{1}{\sum_{j\in[i^-]} \left(\frac{1}{\theta_j}\right)^\frac{\gamma-1}{\gamma} + 1}$, we can further simplify the above inequality to
\begin{align*}
    \ALG(I) \ge &\frac{B - B\cdot \sum_{i\in[K-1]} \frac{F_i(\theta_i) - F_i(1)}{\beta_i}}{\alpha_K}\cdot \theta_K.
\end{align*}
Consequently, it follows that:
\begin{align*}
    \ALG(I) \ge \frac{B - B\cdot \sum_{i\in[K-1]} \frac{F_i(\theta_i) - F_i(1)}{\beta_i}}{\alpha_K}\cdot \theta_K \ge \OPT(I) \cdot \frac{1 - \sum_{i\in[K-1]} \frac{F_i(\theta_i) - F_i(1)}{\beta_i}}{\alpha_K}.
\end{align*}
Therefore in order to find the minimum competitive ratio, it is sufficient to solve the optimization problem presented in Theorem $\ref{theorem-gamma-beta-fairness-trade-off-design}$ for a given $\beta$.

\end{document}